\newtheorem{theorem}{Theorem}[section]
\newtheorem{definition}{Definition}[section]
\newtheorem{corollary}{Corollary}[theorem]
\newtheorem{lemma}{Lemma}[theorem]
\newtheorem{example}{Example}[section]
\newcommand{\algoname}[1]{\textnormal{\textsc{#1}}}
\newcommand{\database}{{\mathcal D}}
\newcommand{\cnt}{{\tt cnt}}
\newcommand{\relation}{R}
\newcommand{\domain}{\Sigma}
\newcommand{\actdomain}{\domain_{act}}
\newcommand{\reprdomain}{\domain_{repr}}
\newcommand{\attr}{A}
\newcommand{\attributes}{\mathbb{A}}
\newcommand{\CQ}{Q}
\newcommand{\LS}{LS}
\newcommand{\GS}{GS}
\newcommand{\kingtuple}{t^*}
\newcommand{\representative}{representative }
\newcommand{\datalogEQ}{{:-}}
\newcommand{\freq}[1]{f_{#1}}
\newcommand{\tsens}{\delta}
\newcommand{\utsens}{\delta^+}
\newcommand{\dtsens}{\delta^-}
\newcommand{\todo}[1]{{\color{red} TODO: #1}}
\newcommand{\note}[1]{{\color{Peach}{\bfseries Note: #1}}}
\newcommand{\sr}[1]{{\color{blue} SR: #1}}
\newcommand{\xh}[1]{{\color{purple} XH: #1}}
\newcommand{\am}[1]{{\color{cyan} AM: #1}}
\newcommand{\eat}[1]{}
\newcommand{\cut}[1]{}
\newcommand{\tdom}{\mathcal{X}}
\newcommand{\superjoin}{\widetilde{\bowtie}}
\newcommand{\superjoinagg}{\widetilde{\mathbf \bowtie}}
\newcommand{\Qpjoin}{Q_{{\tt path}}}
\newcommand{\qpeg}{Q_{{\tt path-4}}}
\newcommand{\Qajoin}{Q_{{\tt acy}}}
\newcommand{\Qgjoin}{Q_{{\tt gen}}}
\newcommand{\Qgfreq}{Q_{{\tt gf}}}
\newcommand{\R}[1]{R_{#1}}
\newcommand{\topjoin}{\hspace{0pt}{\top}\hspace{-1pt}}
\newcommand{\botjoin}{\hspace{0pt}{\bot}\hspace{-1pt}}
\newcommand{\tree}{\mathcal{T}}
\newcommand{\aggregate}[3]{\gamma_{#1,~#2}~#3}
\newcommand{\groupbyagg}[1]{{\mathbf \gamma}_{#1}}
\newcommand{\TSens}{\algoname{TSens}\xspace}
\newcommand{\TSensDP}{\algoname{TSensDP}\xspace}
\newcommand{\Elastic}{\algoname{Elastic}\xspace}
\newcommand{\PrivSQL}{\algoname{PrivSQL}\xspace}
\newcommand{\TruncTSens}{T_{\TSens}}
\newcommand{\Evaluation}{\algoname{Evaluation}\xspace}
\newcommand{\qtri}{q_\triangle}
\newcommand{\qpath}{q_w}
\newcommand{\qcycle}{q_\square}
\newcommand{\qstar}{q_\Yup}
\newcommand{\subtree}{\mathcal{T}}
\newcommand{\comptree}{\tree^c}
\newcommand{\oneFunction}{\mathbb{1}}
\newcommand{\paratitle}[1]{\vspace{2mm} \noindent \textbf{#1}\xspace}
\DeclareMathOperator*{\argmax}{arg\,max}
\newcommand{\squishlist}{
	\begin{list}{$\bullet$}
		{
			\setlength{\itemsep}{0pt}
			\setlength{\parsep}{3pt}
			\setlength{\topsep}{3pt}
			\setlength{\partopsep}{0pt}
			\setlength{\leftmargin}{1.5em}
			\setlength{\labelwidth}{1em}
			\setlength{\labelsep}{0.5em} } }
	\newcommand{\squishend}{
\end{list}  }
\definecolor{brown}{rgb}{0.8,0.1,0.1}
\definecolor{dartmouthgreen}{rgb}{0.05, 0.5, 0.06}
\definecolor{burntumber}{rgb}{0.54, 0.2, 0.14}
\definecolor{BURNTUMBER}{rgb}{0.54, 0.2, 0.14}
\newcommand{\reva}[1]{#1}
\newcommand{\revb}[1]{#1}
\newcommand{\revc}[1]{#1}
\newcommand{\revm}[1]{#1}
\newcommand{\change}[1]{\revm{#1}}
\begin{document}
\title{Computing Local Sensitivities of Counting Queries with Joins}
\graphicspath{ {./figures/} }
\author{
Yuchao Tao$^\dagger$, Xi He$^\star$, Ashwin Machanavajjhala$^\dagger$, Sudeepa Roy$^\dagger$ \\
yctao@cs.duke.edu, xi.he@uwaterloo.ca,  ashwin@cs.duke.edu, sudeepa@cs.duke.edu \\
$^\dagger$ Duke University, $^\star$ University of Waterloo
}

\date{}

\begin{sloppypar}

\maketitle

\begin{abstract}
Local sensitivity of a query Q given a database instance D, i.e. how much the output Q(D) changes when a tuple is added to D or deleted from D, has many applications including query analysis, outlier detection, and in differential privacy. However, it is NP-hard to find local sensitivity of a conjunctive query in terms of  the size of the query, even for the class of acyclic queries. Although the complexity is polynomial when the query size is fixed, the naive algorithms are not efficient for large databases and queries involving multiple joins.  In this paper, we present a novel approach to compute local sensitivity of counting queries involving join operations by tracking and summarizing tuple sensitivities -- the maximum change a tuple can cause in the query result  when it is added or removed. We give algorithms for the sensitivity problem for full acyclic join queries using join trees, that run in polynomial time in both the size of the database and query for an interesting sub-class of queries, which we call `doubly acyclic queries’ that include path queries, and in polynomial time in combined complexity when the maximum degree in the join tree is bounded.  Our algorithms can be extended to certain non-acyclic queries using generalized hypertree decompositions. We evaluate our approach experimentally, and show applications of our algorithms to obtain better results for differential privacy by orders of magnitude.


\end{abstract}


\section{Introduction}\label{sec:intro}

Understanding how adding or removing a tuple to the relations in the database affects the query output is an important task to many applications~\cite{Kanagal:2011:SAE:1989323.1989411,mcsherry2009privacy,kotsogiannis2019privatesql}. For instance, airline companies need to search for a new flight that can meet the requirements of popular trips. Sales companies should identify the critical part in the production to minimize the number of orders affected by this part. Besides these examples for query explanations, applications of the state-of-the-art privacy guarantee -- differential privacy~\cite{Dwork:2014:AFD:2693052.2693053} -- also need to add sufficient amount of noise to hide the change in the query output due to adding or removing a tuple. In particular, given a database instance $\mathcal{D}$, the maximum change to the query output when one of the given tables in the database adds or deletes a tuple is known as the \emph{local sensitivity} of query on $\mathcal{D}$, and the tuple that matches this maximum change is known as \emph{the most sensitive tuple} in the domain of this database.

Computing the local sensitivity of queries \change{on a single relation is trivial, but it is challenging for queries that involve joins of multiple relations.} These queries join several relations (the base relations or transformed relations) into a single table and count the number of tuples in the join output that satisfy certain predicates.  
For instance, to compute the number of possible connecting flights for a multi-city trip requires a join of flights from the given cities.
\change{
Prior work on provenance for queries and deletion  propagation~\cite{AmsterdamerDT11, Buneman:2002:PDA:543613.543633} focus on removing a tuple in the database, but adding new tuples from the full domain is equally important and even harder especially for complex queries over large  domains.}

\eat{
Computing the local sensitivity of queries that involves joins of multiple relations is challenging. These queries join several relations (the base relations or transformed relations) into a single table and count the number of tuples in the join output that satisfy certain predicates.  For instance, to compute the number of possible connecting flights for a multi-city trip requires a join of flights from the given cities. Similarly, given a database of relations (\texttt{Region}, \texttt{Nation}, \texttt{Customer}, \texttt{Orders}, \texttt{Lineitem}, \texttt{Part}), to compute the number of orders from Europe which purchase \texttt{lineitem} that uses a particular \texttt{part} requires the join of these six relations with corresponding predicates on the \texttt{region} name and the \texttt{part} name. 
Prior work~\cite{DBLP:journals/corr/abs-1207-0872,Laud2018SensitivityAO,arapinis16,DBLP:journals/jpc/EbadiS16, kotsogiannis2019privatesql,mcsherry2009privacy} for differential privacy focus on a global version of sensitivity --- the maximum local sensitivity of all possible database instances. These global sensitivity algorithms either are restricted to a special type of joins~\cite{mcsherry2009privacy, DBLP:journals/jpc/EbadiS16} or output a much larger value than the true local sensitivity for general join queries on unconstrained database instances~\cite{DBLP:journals/corr/abs-1207-0872,Laud2018SensitivityAO,arapinis16,kotsogiannis2019privatesql}. Another line of differentially private algorithms~\cite{DBLP:conf/stoc/NissimRS07,DBLP:conf/sigmod/ChenZ13,Zhang:2015:PRG:2723372.2737785,johnson2018towards} depend on the local sensitivity directly. However, these efforts~\cite{DBLP:conf/stoc/NissimRS07,DBLP:conf/sigmod/ChenZ13, Zhang:2015:PRG:2723372.2737785} either offer no efficient and systematic solution for computing the precise bound or derive very loose bound for the local sensitivity of general queries. Furthermore, there are no efficient solutions to derive the most sensitive tuple from the domain, especially when the full domain is larger than the data size.
}

\change{Therefore, we are motivated to study the {\bf local sensitivity problem for counting queries with joins}. In particular, given a conjunctive counting query $Q$ and a database instance $\mathcal{D}$, we would like to find the local sensitivity of $Q$ on $\mathcal{D}$ and find a tuple $t^*$ from the full domain whose sensitivity matches the local sensitivity. We make the following contributions for this local sensitivity problem. 
\squishlist
\item We show that it is NP-hard to find local sensitivity of a conjunctive query in terms of the size of the query, even for the class of acyclic queries. 
    \item We find an efficient algorithm to solve the sensitivity problem and find the most sensitive tuple for {\em path join queries}, in polynomial time in \emph{combined complexity} \cite{Vardi82}, {\em irrespective of the output size}. This is particularly interesting as the well-known algorithms for acyclic and path join queries \cite{Yannakakis:1981} run in polynomial time in both the size of the input and also the output. 
    \item We present an algorithm, \TSens, that efficiently finds the most sensitive tuple for full acyclic conjunctive queries without self-joins using \emph{join trees}, and for a sub-class of general conjunctive queries though extensions using \emph{generalized hypertree decompositions}. 
    \TSens runs in polynomial time in both the size of the database and query for an interesting sub-class of queries, which we call `doubly acyclic queries’ that generalizes path queries, and in polynomial time in combined complexity when the maximum degree in the join tree is bounded. 
\squishend

This paper also shows an {\bf application of our proposed technique \TSens for differential privacy}. An algorithm satisfies differential privacy if its output is insensitive to adding or removing a tuple in any possible input database. This is usually achieved by injecting sufficient amount of noise to the mechanism in order to hide the changes caused by the most sensitive tuples from the domain. Hence, the utility of the mechanisms crucially depends on the upper bound of the local sensitivity. For general SQL counting queries with joins, current methods either offer no efficient or systematic solutions for computing sensitivity~\cite{NodeDP:TCC2013, blocki2013differentially, DBLP:conf/sigmod/ChenZ13} or severly overestimate the sensitivity resulting in poor accuracy ~\cite{johnson2018towards}. Moreover, some queries are highly sensitive to adding or removing a tuple, and approaches that just add noise calibrated to the sensitivity fail to offer any utility. 

In this paper, we combine \TSens with an effective and general purpose technique for DP query answering, called  \textit{truncation}. Here the query is run on a truncated version of the database where tuples resulting in high sensitivity are removed. While this introduces error in the query answer (bias), it decreases the sensitivity and the noise added, and thus, the overall error. While prior work has used truncation~\cite{mcsherry2009privacy, kotsogiannis2019privatesql}, obtaining high accuracy is challenging as it is nontrivial to determine which tuples to truncate. We show \TSens can solve this challenge:}
\squishlist
   \item Our algorithm \TSens is able to compute the sensitivity of each tuple in the domain. This allows us to develop a new truncation-based differentially private mechanism (called \TSensDP) to answer complex SQL queries by truncating a proper set of sensitive tuples.
    \item \TSens provides tight estimates on the local sensitivity (as much as 2.2 million times better than the state of the art techniques for sensitivity estimation \cite{DBLP:journals/corr/JohnsonNS17}). Moreover, \TSensDP answers queries with significantly lower error than \PrivSQL \cite{kotsogiannis2019privatesql}, a state of the art method for answering SQL queries. 
\squishend

\eat{In this paper, we study the local sensitivity problem for counting queries with joins. In particular, given a conjunctive counting query $Q$ and a database instance $\mathcal{D}$, we would like to find the local sensitivity of $Q$ on $\mathcal{D}$ and find a tuple $t^*$ from the full domain whose sensitivity matches the local sensitivity. 


We make the following contributions in the paper: 
\squishlist
\item We show that it is NP-hard to find local sensitivity of a conjunctive query in terms of the size of the query, even for the class of acyclic queries. 
    \item We find an efficient algorithm to solve the sensitivity problem and find the most sensitive tuple for {\em path join queries}, in polynomial time in \emph{combined complexity} \cite{Vardi82}, {\em irrespective of the output size}. This is particularly interesting as the well-known algorithms for acyclic and path join queries \cite{Yannakakis:1981} run in polynomial time in both the size of the input and also the output. 
    \item We present an algorithm, \TSens, that efficiently finds the most sensitive tuple for full acyclic conjunctive queries without self-joins using \emph{join trees}, and for a sub-class of general conjunctive queries though extensions using \emph{generalized hypertree decompositions}. 
    \TSens runs in polynomial time in both the size of the database and query for an interesting sub-class of queries, which we call `doubly acyclic queries’ that generalizes path queries, and in polynomial time in combined complexity when the maximum degree in the join tree is bounded.  
    \item Our algorithm \TSens is able to compute the sensitivity of each tuple in the domain. This allows us to develop a new truncation-based differentially private mechanism (called \TSensDP) to answer complex SQL queries.
    \item \TSens provides tight estimates on the local sensitivity (as much as 2.2 million times better than the state of the art techniques for sensitivity estimation \cite{DBLP:journals/corr/JohnsonNS17}). Moreover, \TSensDP answers queries with significantly lower error than \PrivSQL \cite{kotsogiannis2019privatesql}, a state of the art method for answering SQL queries. 
\squishend
}

\noindent\textbf{Organization:}
We discuss preliminaries and state the problem in Section~\ref{sec:prelim}. We discuss complexity of our problem in Section~\ref{sec:complexity}. Section~\ref{sec:path} and \ref{sec:general} respectively give algorithms for path join and acyclic conjunctive queries with  possible extensions. These algorithms are used to construct a differentially private mechanism in Section~\ref{sec:app_dp}.  Section~\ref{sec:experiments} presents an experimental evaluation of our approach. Related work and future direction are discussed in Sections~\ref{sec:related} and \ref{sec:conclusion}.

\eat{
\xh{***old text***}\\
Differential privacy~\cite{Dwork:2014:AFD:2693052.2693053}
has emerged as the state-of-the-art
techniques that allow querying over sensitive personal data
with provable privacy guarantees.
In the applications of differential privacy,
query sensitivity analysis is a very important task.
Sensitivity measures the changes in the query output
caused by adding or removing of tuples in the query input.
Sensitivity-based differentially private mechanisms
add a noise drawn from a sensitivity-dependent distribution
to the query answer to ensure the desired privacy guarantee.
The larger the sensitivity is, the larger the noise will be
added to the true query answer.


A large class of mechanisms such as Laplace mechanism~\cite{DBLP:conf/tcc/DworkMNS06}
use {\em global sensitivity} of a query,
which measures the maximum change in the query output
when adding or removing a tuple from any possible database.
This measure is independent of the input database.
At worst case database, the change in the query output by adding or removing a tuple
can be extremely high or unbounded and hence the global sensitivity is high or unbounded,
though this change is small for most of real databases.
Hence, another class of mechanisms
such as smooth-sensitivity mechanism~\cite{DBLP:conf/stoc/NissimRS07}
are based on {\em local sensitivity} of a query,
which measures the maximum change in the query output
when adding or removing a tuple from a given database.
As local sensitivity of a query depends on the true database instance,
and mechanisms that based on the local sensitivity
require additional smoothing steps 
on the local sensitivity to achieve differential privacy. 


However, for complex relational database queries, the estimation of local sensitivity can be very costly.
Existing work either considers a much higher upper bound for local sensitivity (called elastic sensitivity~\cite{DBLP:journals/corr/JohnsonNS17})
or applies a lossy transformation on the input data in order to match a pre-computed sensitivity (called down sensitivity~\cite{DBLP:conf/sigmod/ChenZ13}).
The first approach can still add a unnecessarily large noise to the query answer while the second approach introduces bias to the query answer.
In this work, we show the hardness of computing local sensitivity exactly for general Selection-Projection-Join-Aggregate (SPJA) queries and propose efficient algorithms to compute the local sensitivity with tighter bounds for a slightly constrained class of SPJA queries.
Based on efficient local sensitivity estimation algorithms, we design differentially private mechanisms that achieve better utility than prior approaches while remaining efficient.

\todo{The contributions of this paper are: 
\begin{itemize}
    \item hardness ...
    \item PTime algorithm for path queries ...
    \item ptime algorithm (parameterized by treewidth?) for general queries ... 
    \item Applications to ...
\end{itemize}}

Our paper is organized as follows. We first present background and notations in Section~\ref{sec:prelim} and then define our problem in Section~\ref{sec:ps}. In Section~\ref{sec:computesensSPJA}, we show the hardness result for general SPJA queries and efficient algorithms SPJA queries with certain conditions. Then, we present applications of local sensitivity estimation in Section~\ref{sec:app}, in particular, designing differentially mechanisms for answering SPJA queries. Section~\ref{sec:eval} shows our empirical study of our approaches with real world data and query workloads.
Section~\ref{sec:related} illustrates scenarios where prior approaches can fail to provide good utility and Section~\ref{sec:discussion} provides discussion on future work (including databases with constraints).

}

\eat{
Differential privacy aims to protect individual privacy for data analysis.
However, existing work either do not support general database queries in real-world SQL-based analytics systems,
or answer them with poor utility guarantee.
\todo{Write in plain words.}
Consider a database consisting of $k$ tables $(R_1,\ldots,R_k)$
with domain $(\tdom_1^* \times \cdots \times \tdom_k^*)$.
Given a query $q:\times_{j\in Z} \tdom_{j}^*  \rightarrow \mathbb{R}^l$
over a subset of tables in the database, where $Z\subseteq \{1,\ldots,k\}$,
and a set of privacy parameters $\{\epsilon_i |i=1,\ldots,k\}$,
we would like to output a mechanism $M$ from a class of query sensitivity based mechanism $\mathcal{M}^{\Delta q}$
on instance $(x_1,\ldots,x_k)\in (\tdom_1^*\times\cdots\times\tdom_k^*)$
for answering $q$ such that
(i) the privacy loss of $M$ over each table $R_i$ is $\epsilon_i$,
(ii) the output error is minimized, i.e.
\begin{equation}
\min_{M \in \mathcal{M}^{\Delta q}} \|M(x_{Z})-q(x_{Z})\|,
\end{equation}
where $x_{Z} = \{x_j | \forall j\in Z\}$.
Our work consists of two parts: (i) computing the sensitivity of a query;
(ii) constructing a class of query sensitivity based mechanisms.

Two key limitations of the prior work including PINQ, Elastic sensitivity are:
(i) the sensitivity analysis of a given query is operator based,
independent of the order of the operators and the data, and hence can lead to
a very large sensitivity;
(ii) the tables considered in the prior work have no constraints (dependencies between tables).

In this work, we focus on a wide class of
SQL queries over a database of multi-tables (with constraints)
and design efficient sensitivity based differetially private mechanisms
to answer these queries with high utility.

\ \\ {\bf Contributions}.
The first group of algorithms \todo{[YC]What algorithms?} rewrite the queries
by considering the constraints (dependencies between multiple tables)
such that the sensitivity analysis can be the same as the case without constraints.
The second group of algorithms aim to compute the sensitivity of the transformed queries.
We consider the following types of queries:
\begin{itemize}
\item $SPJ$: we aim to (i) provide algorithms to compute sensitivity of acyclic queries; (ii) show hardness of cyclic queries and provide treewidth algorithms.
\item $SPJA^{end}$: consider aggregate at the end of the query
\item $SPJA$: consider aggregate anywhere
\item $SPJU$: consider union
\item $SPJUD$: consider difference
\end{itemize}
Except SPJUD, all the query types are monotone.
Both local sensitivity and global sensitivity (complexity) will be considered.

There are three classes of mechanisms:
(i) global sensitivity based mechanisms, e.g. Laplace mechanism;
(ii) local sensitivity based mechanisms, as smooth sensitivity may not apply here due to unbounded data size,
we need a new algorithm for this;
(iii) hybrid mechanisms, e.g. limit the multiplicity of join key values, which may introduce bias to the answers.
}

\section{Preliminaries}\label{sec:prelim}
We consider a database instance $\database$ with $m$ tables $\relation_1, \ldots, \relation_m$.
Relation $\relation_i$ has attributes $\attributes_i$ where $k_i = |\attributes_i|$, and $n_i$ tuples. Database $\database$ has attributes $\attributes_\database = \cup_{i = 1}^m \attributes_i$, and $k = |\attributes_\database|$ and $n = \sum_{i = 1}^m n_i$ denotes the total number of attributes and tuples respectively in $\database$. 
For any attribute $\attr \subseteq \attributes_\database$, we use $\domain^\attr$ as the domain of $\attr$. For multiple attributes  $\attributes = \{A_1, \ldots, A_\ell\} \subseteq \attributes_\database$, the domain is  $\domain^\attributes = \domain^{A_1} \times \ldots \times \domain^{A_\ell}$.
For a tuple $t \in \relation_i$ and attribute $\attr \in \attributes_i$, $t.\attr$ denotes the value of attribute in $t$, and for $\attributes \subseteq \attributes_i$, $t.\attributes$ denotes a list of values of the attributes in $\attributes$ with an implicit order. 

\paratitle{Full conjunctive queries without self-joins.} We focus on \reva{counting queries that counts} the number of output tuples \revc{(in bag semantics)} for the class of full conjunctive queries (CQ) without self-joins\footnote{Note that CQs can include the \emph{selection} operator by adding predicates of the form $A = a$, which we discuss in Section~\ref{sec:extensions}.}, which is equivalent to the \emph{natural join in the SQL semantics} (equal values of common attributes in different relations) and has been extensively studied in the literature~\cite{Chekuri:1997,Grohe:2001,arapinis16}. 
A CQ $Q$ can be written as a datalog rule as:
\begin{align*}
    \CQ(\attributes_\database) 
    \datalogEQ 
    \relation_1(\attributes_1),
    \relation_2(\attributes_2),
    \ldots,
    \relation_m(\attributes_m).
\end{align*}
\vspace{-1em}

Here all the attributes $\attributes_\database$ appear in the \emph{head} of the datalog rule, and $\attributes_i \cap \attributes_j \neq \emptyset$ captures natural join. We also call attributes as variables and relations as atoms. We interchangeably use the equivalent relational algebra (RA) form:
$$\CQ = \relation_1 \Join  \cdots \Join \relation_m.$$ 
where $\Join$ with no subscripts refer to natural joins.
We denote $\CQ(\database)$ as the query result about $\CQ$ on $\database$.
For example, in Figure~\ref{fig:full_conjunctive},  given 4 relations $(R_1,R_2,R_3,R_4)$ and attributes $(A,B,C,D,E,F)$, where each attribute has a domain of size 2, the natural join of these relations $Q(A,B,C,D,E,F)$ has an output of a single tuple $(a1,b1,c1,d1,e1,f1)$ (Figure~\ref{fig:full_conjunctive}(b)).


\eat{
\am{Do we need to add the ability to select subsets of rows? Can we handle selections in our algorithms? Remember that Elastic sensitivity can not reduce the local sensitivity based on selections on tables. If our algorithms can handle selections, it will be a \textit{major} contribution}
It can also be written in the relation algebra form as derived from the join tree. We interchangeably use the datalog and relation algebra notation.
}

\eat{
A CQ query is called acyclic iff there exists a join tree that all attributes are connected. \note{Shall we define it by GYO(H(q))? H(q) denotes the hypergraph of q.}

\todo{Add an example of a CQ query}
}

\subsection{Problem Statement}\label{sec:problem}
\paratitle{Tuple and Local Sensitivity.} Tuple and local sensitivity of a counting query measure the (maximum) possible change in the number of output tuples when a tuple is added to the database or is removed from the database, and are defined as follows.  For two relations $R, R'$ with the same set of attributes, $R \Delta R' = (R - R') \cup (R' - R)$ is the symmetric set difference. 

\begin{definition}[\bf Tuple Sensitivity]\label{def:tuple_sens}
    \revc{Given a tuple $t$ from the domain of any table as $t \in \domain^{\attributes_1} \cup \domain^{\attributes_2} \ldots \cup \domain^{\attributes_m}$, a query $\CQ$, and a database instance $\database$, }
    \squishlist
        \item 
   {\em upward tuple sensitivity} is:\\
   {\small
    $\utsens(t, \CQ, \database) = |  \CQ(\database \cup \{t\}) ~\Delta~ \CQ(\database) |$
    }
    \item \emph{downward tuple sensitivity} is:\\
       {\small
    $\dtsens(t, \CQ, \database) = | \CQ(\database) ~\Delta~ \CQ(\database \setminus \{t\})| $
    }
    \item \emph{tuple sensitivity} is:\\
       {\small
    $\tsens(t, \CQ, \database) = \max{(\utsens(t, \CQ, \database),~ \dtsens(t, \CQ, \database))}$
    }
   \squishend
    We often drop $t$, $\CQ$, and $\database$ and simply use $\utsens$, $\dtsens$, or $\tsens$.
\end{definition}


\begin{definition}[\bf Local Sensitivity]\label{def:local_sens}
    Given a query $\CQ$ and a database instance $\database$, the local sensitivity is defined as the maximum tuple sensitivity:
    $$\LS(\CQ, \database) = \max_{\revc{t \in \domain^{\attributes_1} \cup \domain^{\attributes_2} \ldots \cup \domain^{\attributes_m} } } \tsens(t, \CQ, \database) $$
\end{definition}


\begin{example}\label{eg:localsens}
\change{In Figure~\ref{fig:full_conjunctive},}
the tuple $(a1, b1, c1)$ in $R_1$ has a downward tuple sensitivity of 1 as removing this tuple will decrease the join output size by 1. Similarly, the tuple $(a2, b2, c1)$ from the full domain of $R_1$ has a downward tuple sensitivity of $0$ as no such tuple exists in the given database instance. On the other hand, the tuple $(a2, b2, c1)$ has an upward tuple sensitivity $4$, as adding this tuple will increase the output size by 4. \change{To compute the local sensitivity of this query on the database instance given in Figure~\ref{fig:full_conjunctive}, we need to find the largest possible change to the output size when adding or removing any possible tuple from the domain.}
The local sensitivity of this query equals to 4, and the most sensitive tuple is $(a2, b2, c1)$ in $R_1$.
\end{example}

\eat{\am{Should we make this a subsection and call it \textit{``problem statement"}. We can state the problem and then revisit the applications we might have mentioned in the intro.}

\paratitle{Sensitivity Problem} Below we define the sensitivity problem as looking for the most sensitive tuple from the full domain for the general join query and it's sensitivity as the local sensitivity.
}

\begin{definition}[{\bf The Local Sensitivity Problem}]
Given a query $\CQ$ and a database instance $\database$, the local sensitivity problem aims to find the local sensitivity $\LS(\CQ, \database)$ of $\CQ$ on $\database$, and also find a tuple $\kingtuple$ whose tuple sensitivity matches the local sensitivity. 
\end{definition}

The problem is trivial when there is only one relation $R$ in the database and $Q(R) = R$, since the local sensitivity is always 1 and any tuple can be the most sensitive tuple. In this paper, we focus on full CQs on multiple relations involving multiple joins.

\subsection{Acyclic Queries}\label{sec:acyclic}
One sub-class of CQs that has been studied in depth in the literature is the class of \emph{acyclic queries} \cite{Beeri+1983, Fagin1983, Abiteboul+1995}, which we consider as one of the classes of queries in this paper. There are different notions of acyclicity \cite{Fagin1983}, however, in this paper we will use one of the standard notions based on \emph{GYO decompositions} (from Graham-Yu-Ozsoyoglu) \cite{Abiteboul+1995}. 
\par
Given a CQ $Q$, the query hypergraph has  all the variables or attributes as vertices, and relations appearing in the body of the query as edges. An {\em ear} is a hyperedge $h$ whose vertices can be divided into two groups that (i) either exclusively belong to $h$, or (ii) are completely contained in  another hyperedge $h'$. The GYO-decomposition algorithm  repeatedly picks an ear from the hypergraph, removes the vertices that are exclusively in the ear, and then  removes the ear from the hypergraph, until the hypergraph is empty or no more ears are found. A CQ is acyclic if the GYO-decomposition algorithm  returns an empty hypergraph. Further, the decomposition algorithm  results in a \emph{join-tree}, which will be described next (assuming the  query hypergraph  is connected, otherwise a join-forest is obtained). 
\par
\paratitle{Join-trees.~} A \emph{join-tree} $T$ for a CQ whose hypergraph is connected satisfies the following property: for any two relations $R_i, R_j$ appearing in the body of the query such that $\attributes_i \cap \attributes_j \neq \emptyset$, all attributes in the intersection appear on a unique path from $R_i$ to  $R_j$ in the tree. A join-tree can be obtained for an acyclic query from a GYO-decomposition by adding an edge from relation $R_i$ to another relation $R_j$, when the hyperedge for $R_i$ is being  eliminated as an ear, and all the vertices that do not exclusively belong to $R_i$ belong to $R_j$.   It is well-known that joins on acyclic queries can be computed in polynomial time in the size of the query and the input (combined complexity, see Section~\ref{sec:complexity}). The output can be generated by two passes on a join-tree using semi-join operators \cite{Abiteboul+1995}. Figure~\ref{fig:gyo} shows the hypergraph of the query $Q(A,B,C,D,E,F)$ in Figure~\ref{fig:full_conjunctive} and its GYO decomposition.


\begin{figure}[t]
	\centering
	\begin{subfigure}[b]{0.25\textwidth}
		\centering
		\includegraphics[width=\textwidth, trim={0 100 0 0},clip]{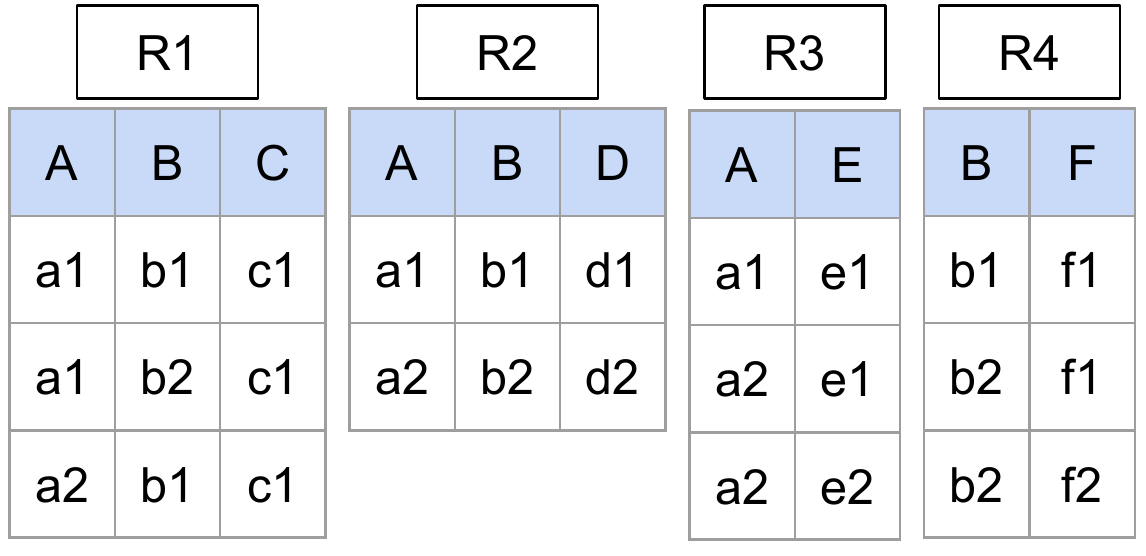}
		\vspace{-15pt}
		\caption{a database instance}
	\end{subfigure}
	\begin{subfigure}[b]{0.2\textwidth}
		\centering
		\includegraphics[width=\textwidth, trim={25 0 25 164},clip]{{"figures/Acyclic_Join_Query_Example"}.pdf}
		\vspace{-15pt}
		\caption{the join result}
	\end{subfigure}
	\vspace{-10pt}
	\caption{An example for a full conjunctive query} 
	\vspace{-12pt}
	\label{fig:full_conjunctive}
\end{figure}

\begin{figure}[t]
	\centering
	\begin{subfigure}[b]{0.2\textwidth}
		\centering
		\includegraphics[width=\textwidth, trim={0 0 200 0},clip]{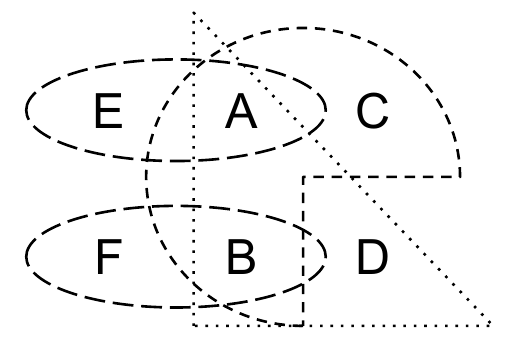}
		\vspace{-15pt}
		\caption{the hypergraph}
	\end{subfigure}
	\begin{subfigure}[b]{0.2\textwidth}
		\centering
		\includegraphics[width=\textwidth, trim={150 0 0 0},clip]{{"figures/GYO_Example"}.pdf}
		\vspace{-15pt}
		\caption{the join tree}
	\end{subfigure}
	\vspace{-10pt}
	\caption{The GYO decomposition of the query in Figure~\ref{fig:full_conjunctive} (left) and the resulted join tree (right). \revc{Here R3(AE), R4(BF) and R4(ABD) are all ears of R1(ABC), so we remove them from the hypergraph and connect them to R1(ABC) in the join tree.} }
	\vspace{-12pt}
	\label{fig:gyo}
\end{figure}

\section{Complexity Analysis}\label{sec:complexity}

\paratitle{Query, Data, and Combined Complexity.~} For evaluation of database queries, both the query size (the number of relations and attributes as $m$ and $k$) and the instance size (the number of tuples $n$) are inputs, and therefore based on the parameter that is considered as variable, three different notions of complexity are considered \cite{Vardi82}.  \emph{Query or expression complexity} treats the query size ($m, k$) as a variable and the data size ($n$) as a  constant. \emph{Data complexity} treats data size as a variable and query size as a constant, whereas \emph{combined complexity}  treats both query 
and data size as variables. It is known that even query evaluation for general CQs is NP-hard for query and combined complexity (e.g., by a simple reduction from clique), but has polynomial data complexity. 

\subsection{Polynomial Data Complexity}\label{sec:poly-data}
The naive solution of computing the local sensitivity is to check the tuple sensitivity of all possible tuples from all tables. \reva{While for downward tuple sensitivity, we need to consider deletions of at most $n$ tuples from the database, for the upward tuple sensitivity when we consider inserting a tuple, the domain of a possible tuple can be arbitrarily larger than $n$ (and even infinite if any attribute has infinite, e.g., integer domain). However, we show below that we can always have a polynomial data complexity by narrowing down the domain of interest}.

\par
\emph{Active domain} of an attribute with respect to a given instance typically refers to the set of values of that attribute appearing in the instance. In our context, given an instance $\database$, a relation $R_i$ in $\database$, and an attribute $\attr \in \attributes_i$, we use $\actdomain^{A, i} = \cup_{t \in R_i} t.A \subseteq \domain^{\attributes_i}$ to denote the active domain of $A$ with respect to $R_i$ in $\database$. If an attribute $A$ appears in two relations $R_i, R_j$, it may happen that $\actdomain^{A, i} \neq \actdomain^{A, j}$.

\eat{
\sr{edited up to here.. PLEASE DO NOT READ OR EDIT THIS SECTION}

\am{There are several concepts here -- full, active, idle, representative, bounded, completementary trivial. Do we need all these terms here? There is no definition of "bounded", and representative and complementary-trivial seem to mean the same thing. 

Shouldn't representative depend on the query (the notation and the discussion after Def 3.1 does not mention a query)? If a query joins two datasets or has a selection then the size of he representative domain is smaller than if the query joins three datasets. Also, definition 3.1 is undefined in the absence of a query. 

I wonder if Lemma 3.1 should just say: ``$\reprdomain^{\attributes_i}$ as defined in Equation?? is $\representative$ and has size $O(n^k)$, where $n$ is the number of rows in each relation and $k$ is the number of attributes in the query (?).''}
}

\eat{
For this $v$, We also define the frequency of $v$ as
$ \freq{v}^i = \sum_{t \in \relation_i} \oneFunction_{[t.\attributes = v]} $.
If the context is clear, we simply write it as $\freq{v}$.
}

\revc{For the upward tuple sensitivity, we only consider tuples that can possibly change the result after the insertion, so its attribute values should appear in all other relations. We define representative domain 
to capture this intuition:}

\begin{definition}[\textbf{Representative domain}]\label{def:repr} 
Given an instance $\database$, a relation $R_i$ in $\database$, and an attribute $\attr \in \attributes_i$, we define the \emph{representative domain} of $\attr$ 
with respect to  $R_i$ as $ \reprdomain^{\attr, i} =  \bigcap_{j ~:~ \attr \in \attributes_j, j \neq i} \actdomain^{A, i}$, if $A$ appears in  at least two relations, and 
    set it as $\{a\}$ for any arbitrary value $a \in \actdomain^{\attr, i}$, if $A$ does not appear in any other relation. 
\par
\revc{
The representative domain for a relation $R_i$, denoted by $\reprdomain^{\attributes_i} \subseteq \domain^{\attributes_i}$, is defined as
$\reprdomain^{\attributes_i} = \reprdomain^{A_1, i} \times \ldots \times \reprdomain^{A_{k_i}, i}$
where $\attributes_i = \{A_1, \cdots, A_{k_i}\}$ are the attributes in $R_i$.
}

\end{definition}

\vspace{-2mm}
\revc{
\begin{example}\label{eg:repr}
 From Figure \ref{fig:full_conjunctive}, the representative domain of $A$ in $R_1$ is $\reprdomain^{\attr, 1} = \actdomain^{\attr, 2} \cap \actdomain^{\attr, 3} = \{a1, a2\} \cap \{a1, a2\} = \{a1, a2\} $
\end{example}
}

We show the following theorem with the proof in appendix.

\begin{theorem}\label{thm:poly-data-complexity}
The local sensitivity of a full CQ $Q$ with respect to a given instance $\database$ can be computed in polynomial time in data complexity.
\end{theorem}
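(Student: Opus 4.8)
The plan is to replace the search over the (possibly infinite) domain $\domain^{\attributes_1} \cup \cdots \cup \domain^{\attributes_m}$ by a search over a polynomially sized set of candidate tuples, and then to show that the sensitivity of each candidate can be evaluated in time polynomial in $n$ once $m$ and $k$ are treated as constants. I would handle the two directions of sensitivity separately. For the downward direction, note that if $t \notin \database$ then $\database \setminus \{t\} = \database$, so $\dtsens(t) = 0$; hence only the $n$ tuples actually present in $\database$ can attain a positive downward sensitivity, giving at most $n$ candidates to check.

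For the upward direction the candidate set is a priori unbounded, and the key step is to argue that it suffices to let $t$ range over the representative domain $\reprdomain^{\attributes_i}$ of each relation $\relation_i$. Adding $t$ to $\relation_i$ creates new tuples in the full natural join only by matching $t$ on its shared attributes with tuples of the remaining relations. Consequently, if some attribute $\attr \in \attributes_i$ shared with another relation $\relation_j$ takes a value $t.\attr \notin \actdomain^{\attr, j}$, then no output tuple can use $t$ and $\utsens(t) = 0$; any tuple with positive upward sensitivity must therefore take, on each shared attribute, a value in $\bigcap_{j\,:\,\attr\in\attributes_j,\, j\neq i}\actdomain^{\attr,j}$. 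For attributes private to $\relation_i$ the value is merely copied into the output and does not change the number of matches, so it can be fixed to an arbitrary active value. These are exactly the choices recorded by Definition~\ref{def:repr}, so the maximum of $\utsens$ over $\domain^{\attributes_i}$ is attained within $\reprdomain^{\attributes_i}$, whose size is $\prod_{\attr\in\attributes_i}|\reprdomain^{\attr,i}| \le n^{k_i} \le n^{k}$. Summing over the $m$ relations yields at most $m\,n^{k}$ upward candidates.

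It remains to evaluate $\tsens(t) = \max(\utsens(t), \dtsens(t))$ for each of the $O(n + m\,n^{k})$ candidates. Since $m$ and $k$ are constants, the full join $\CQ$ has output size at most $n^{m}$ and can be materialized in time polynomial in $n$; computing $\CQ(\database)$, $\CQ(\database \cup \{t\})$, and $\CQ(\database \setminus \{t\})$ and counting their symmetric differences therefore takes polynomial time per candidate. More directly, $\utsens(t)$ equals the number of tuples in the join of the relations other than $\relation_i$ that agree with $t$ on the shared attributes, a fixed-arity selection--join that is again polynomial. The local sensitivity $\LS(\CQ,\database)$ is then the maximum of these values over the candidate set, and a maximizing candidate is the desired $\kingtuple$; the whole computation runs in time polynomial in $n$.

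The main obstacle is the representativeness claim in the upward direction: establishing that restricting to $\reprdomain^{\attributes_i}$ loses none of the maximum, i.e.\ that a shared-attribute value outside the active-domain intersection forces zero upward sensitivity while private attributes may be fixed arbitrarily. Once this reduction to $O(n^{k})$ candidates is in place, the remaining bound on the candidate count and the polynomial-data-complexity evaluation of a fixed-size join are routine.
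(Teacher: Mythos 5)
Your proposal follows essentially the same route as the paper's proof: bound the downward candidates by the tuples present in $\database$, restrict the upward candidates to the representative domains $\reprdomain^{\attributes_i}$ (of size at most $n^{k}$ each), and evaluate the query for every candidate, which is polynomial in $n$ once $m$ and $k$ are constants. In fact, your argument that a shared-attribute value outside $\bigcap_{j\,:\,\attr\in\attributes_j,\, j\neq i}\actdomain^{\attr,j}$ forces $\utsens(t)=0$ while private attributes can be fixed arbitrarily supplies precisely the correctness step that the paper omits ``due to space constraints,'' so your write-up is, if anything, more complete.
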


\eat{
\begin{sloppypar}
\begin{proof}
\textbf{(Algorithm)} The algorithm works as follows. First, compute the maximum downward tuple sensitivity $\delta^{-*} = \max_{t \in Q(\database)}\dtsens(t, \CQ, \database)$  (see Definition~\ref{def:tuple_sens}), and note the tuple giving the max value. Next, compute the maximum upward 
tuple sensitivity as $\delta^{+*}$  = $\max_{i \in \{1, \cdots, m\}} \max_{t \in \reprdomain^{\attributes_i}}\delta^{+*}(t, \CQ, \database)$, again noting the tuple giving the maximum value. Return $\delta^{*} = \max(\delta^{+*} , \delta^{-*})$ along with the tuple that led to this highest value.  
\par
\textbf{(Correctness)} We omit the proof that this algorithm correctly computes the local sensitivity due to space constraints.
\par
\textbf{(Polynomial data complexity)} Finally we argue that the algorithm runs in time polynomial in $n = |D|$. Note that the active domain of any single attribute $A \in \attributes_D$ in any relation $R_i$ can be computed in time polynomial in $n$ (in $O(n \log n)$ time if we use sorting to remove duplicates), and $|\actdomain^{\attr, i}| \leq n$. Since each relation $R_i$ has at most $k$ attributes,  $|\reprdomain^{\attributes_i}| \leq n^k$.  Hence the above algorithm iterates over polynomial number of choices for $t$, for each $t$ it evaluates the query $Q(D \cup \{t\})$ or $Q(D - \{t\})$, which can be done in polynomial time in $n$. Hence the total time of the above algorithm is also polynomial in $n$.
\end{proof}
\end{sloppypar}
}




\cut{
\revc{The proof uses the following idea. Consider an algorithm that inserts or deletes one tuple from $\database$, evaluate the query, compare the difference and repeat. For insertion, we only consider tuples from the representative domain, and for each representative domain we have$|\reprdomain^{\attr, i}| \leq n$ and  $|\reprdomain^{\attributes_i}| \leq \prod_{j=1}^{k_i}|\reprdomain^{A_j, i}| \leq n^{k_i} \leq n^k$. Consequently, there is at most $n$ deletions and $m n^k$ insertions that could possibly change the output size. Since evaluation is also polynomial in $n$, together it is polynomial in data complexity. }
}

\subsection{Combined Complexity: NP-hardness}
Theorem~\ref{thm:poly-data-complexity} shows that the sensitivity problem has polynomial data complexity, but the algorithm may run in $O(m n^k)$ time, which is inefficient even for a small number of relations and attributes. 
Therefore, in this section, we study the combined complexity for the problem and show that the exponential dependency on the query size is essential  under standard complexity assumptions not only for general CQs, but also for acyclic queries, thereby motivating the study of efficient, practical algorithms for the sensitivity problem discussed in the subsequent sections. 

\eat{
\am{Review how combined complexity is different from data complexity, and why we want to study this? For instance, say that an algorithm that is $O(n^k)$ would not scale even for queries that join 3 relations each of size a million rows.}
It is known that evaluating a CQ query can be NP-hard in combined complexity if this query is non-acyclic. However, although evaluating acyclic queries is of polynomial time in the size of the input and output data \note{If it is polynomial to the output size, it means it can be exponential in the combined complexity. Shall we emphasize this? since our solution to the double acyclic query can find LS in polynomial time in combined complexity}, computing local sensitivty of acyclic queries can be NP-hard.
}

\begin{theorem}\label{thm:combined-nphard} 
Given a CQ $\CQ$ and a database $\database$ as input, the sensitivity problem is NP-hard in combined complexity. In particular, checking whether $LS(\CQ, \database)>0$ is NP-hard for combined complexity \reva{even if $Q$ is acyclic}.
\end{theorem}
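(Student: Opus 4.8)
The plan is to reduce from $k$-\textsc{Clique} (equivalently, Boolean evaluation of a cyclic full CQ), which is NP-hard in combined complexity, and to exploit a simple observation: inserting a tuple into one relation of a full CQ can create a new output tuple \emph{only} when the natural join of the \emph{remaining} relations is non-empty. The key device will be to wrap a cyclic, hard-to-evaluate query inside an acyclic one by attaching a single ``universal'' relation whose attribute set is all of $\attributes_\database$.

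First I would fix the hard core. Given a graph $G=(V,E)$ and an integer $k$, I introduce an attribute $X_i$ for each $i\in\{1,\dots,k\}$ and, for every pair $1\le i<j\le k$, a \emph{distinct} relation $E_{ij}(X_i,X_j)$, each populated with the symmetric, loop-free edge set of $G$. The query $Q' = \bowtie_{i<j} E_{ij}$ uses pairwise-distinct relation symbols, so it has no self-joins, yet its hypergraph is the clique $K_k$ (cyclic for $k\ge 3$), and $Q'(\database')\neq\emptyset$ iff $G$ contains a $k$-clique. Next I make the overall query acyclic: add one relation $R^*$ whose attribute set is all of $\{X_1,\dots,X_k\}$, initialised to be \emph{empty}, and set $\CQ = Q' \Join R^*$ over the database $\database$ built from the $E_{ij}$'s and $R^*$. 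Since $R^*$ contains every vertex of the hypergraph, each $E_{ij}$ is an ear (its vertices $\{X_i,X_j\}$ are all contained in the other hyperedge $R^*$), so the GYO-decomposition peels off every $E_{ij}$ in turn and finally $R^*$ itself, returning the empty hypergraph; hence $\CQ$ is acyclic and self-join-free, and the reduction is clearly polynomial in $|G|$ and $k$.

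Finally I would verify $\LS(\CQ,\database)>0 \iff G$ has a $k$-clique. Because $R^*$ is empty, $\CQ(\database)=\emptyset$, so every $\dtsens=0$ (deleting a tuple cannot shrink an already-empty output) and inserting a tuple into any $E_{ij}$ still leaves the output empty, since the residual join of the other relations always contains the empty $R^*$. Thus the only route to positive sensitivity is inserting a tuple $t$ into $R^*$; as $R^*$ carries all attributes, the new output tuples are exactly the full join tuples $u$ with $u=t$ and $u[X_i,X_j]\in E_{ij}$ for all $i<j$, i.e.\ $u\in Q'(\database')$. Hence a tuple $t$ raising the count exists iff $Q'(\database')\neq\emptyset$, giving $\LS(\CQ,\database)\ge 1$ exactly on yes-instances and $\LS(\CQ,\database)=0$ otherwise.

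The main obstacle I anticipate is conceptual rather than computational: recognising that the insertion-sensitivity of a relation is governed by the emptiness of the join of all the \emph{other} relations, and that one can import the hardness of a cyclic query while forcing global acyclicity simply by attaching a universal relation that turns every original atom into an ear. Secondary care is needed to keep the base query genuinely self-join-free (distinct symbols $E_{ij}$ with identical contents) so that $\CQ$ lies in the self-join-free full-CQ class, and to confirm no spurious positive sensitivity arises from deletions or from insertions into the $E_{ij}$'s — both of which are ruled out by initialising $R^*$ to be empty.
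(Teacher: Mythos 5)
Your proof is correct and takes essentially the same approach as the paper's: both constructions attach an \emph{empty} relation over all attributes so that every other atom becomes an ear (forcing acyclicity and self-join-freeness), and both argue that the only route to positive sensitivity is inserting into that universal relation a tuple consistent with all remaining relations, so that $\LS(\CQ,\database)>0$ holds exactly on yes-instances. The only difference is the choice of source problem --- you reduce from $k$-\textsc{Clique} via edge relations $E_{ij}$, while the paper reduces from 3SAT via clause relations --- which does not change the structure of the argument.
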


\eat{
\am{This theorem statement needs to be rewritten. It reads ... the problem if hard for \textit{every} query Q (which would be data complexity)? Rather the theorem statement should read: The problem of checking whether $LS(Q,D)>0$, where $Q$ and $D$ are input query and database respectively is NP-hard with respect to the combined complexity of the inputs $Q$ and $D$.}
}

\cut{
\vspace{-10pt}
\revc{
\begin{proof}(sketch)
For any SAT instance, we transform each clause into a table with attributes matching the variables in the clause. Each table only keeps the values of attributes that makes the clause as true. We add one more empty table $R_0$ with attributes matching all variables, so together all tables form an acyclic query $\CQ(\database)$ and $|\CQ(\database)| = 0$. The problem that whether there is a solution to the SAT instance now is equivalent to whether adding a tuple in $R_0$ results in $|\CQ(\database)| > 0$ or not. The full proof goes to the full version.
\end{proof}
}
}

\eat{
\begin{proof}
\cut{
The local sensitivity of a CQ query is defined as the maximum change to the query output size if adding to one of the tables or removing a tuple from one of the tables.
The changes to the query output size when removing a tuple is upper bounded by the changes to the query output when adding a tuple. Hence, we will just show that checking if the changes to the query output size is greater than 0 if adding a tuple to one of table is NP-hard.
}

We give a reduction from the 3SAT problem. Consider any instance of 3SAT $\phi$ with $s$ clauses
($C_1,\ldots,C_s$) and $\ell$ variables ($v_1,\ldots,v_\ell$), where each clause is disjunction three literals (a variable or its negation), and the goal is to check if the formula $\phi = C_1 \wedge \cdots \wedge C_s$ is satisfiable. 
We create an instance of the sensitivity problem $LS(Q, \database)$ with $s+1$ relations and $\ell$ attributes in total.
For each clause $C_i$ that involves variables $v_{i_1},v_{i_2},v_{i_3}$, we add a table $R_i$ with three Boolean attributes $A_{i_1},A_{i_2},A_{i_3}$, and insert all possible triples of Boolean values that satisfy the clause $C_i$ into $R_i$ in $\database$. For example, if $C_i$ is $v_{2}\lor \bar{v}_{5} \lor \bar{v}_{7}$, then $R_i(A_2, A_5, A_7)$ contains seven Boolean triples
$(0,0,0), (0,0,1),\ldots,(1,1,1)$ {\em except} $(0,1,1)$.
In addition, we create an empty relation $R_0(A_1,\ldots,A_\ell)$, which does not contain any tuple in $\database$. 
The query is:
\begin{small}
\begin{align*}
    \CQ(A_1 ,\ldots, A_\ell) = R_0(A_1, \cdots, A_\ell) \wedge \underset{i = 1, \ldots, m}{\bigwedge} \R{i}(A_{i_1},A_{i_2},A_{i_3})
\end{align*}
\end{small}
Note that $Q$ is acyclic, as all of $R_1, \cdots, R_s$ correspond to ears (see Section~\ref{sec:acyclic}). Further, the reduction is in polynomial time in the number of variables and clauses in $\phi$. Next we argue that $\phi$ is satisfiable if and only if $LS(Q, \database) > 0$.
\par
\textbf{(only if)} Suppose $\phi$ is satisfiable, and $\mathbf{v} = (v_1 = b_1, \cdots, v_\ell = b_\ell)$ is a satisfying assignment. Then the join of $R_1 \Join \cdots \Join R_s$ is not empty and $\mathbf{v}$ belongs to their join result. However, $Q(\database) = \emptyset$ as $R_0 = \emptyset$ in $\database$. Now, if we add a tuple corresponding to $\mathbf{v}$ to $R_0$, then $Q(\database \cup \{\mathbf{v}\})$ is no longer empty (at least contains $\mathbf{v}$), and therefore $LS(Q, \database) > 0$. 
\par
\textbf{(if)} 
Suppose $LS(Q, \database) > 0$. Hence there exists at least one tuple $t$ such that if it  is added to one of the relations $R_0, R_1, \cdots, R_s$, then $|Q(\database \cup \{t\})| > |Q(\database)|$. Since  $Q(\database) = \emptyset$ as $R_0$ is empty, this tuple must be inserted to $R_0$ to have  a non-empty output. Further, the projection of this tuple to $A_{i_1}, A_{i_2}, A_{i_3}$ for relation $R_i$ must match one of the existing seven tuples of $R_i$ in $\database$ to have a non-empty join result. Therefore, this tuple (Boolean values for $v_1, \cdots, v_\ell$) gives a satisfying solution for $\phi$ by satisfying all the clauses, and makes $\phi$ satisfiable.
\end{proof}
}


\revc{The proof can be found at the appendix and some intuitions of hard acyclic queries are discussed in Section~\ref{sec:general-algo}.} Although Theorem~\ref{thm:combined-nphard} gives a negative result even for acyclic queries, the proof suggests that we may get polynomial-time for special classes of acyclic queries. Indeed, as we show in Sections~\ref{sec:path} and \ref{sec:general}, we can get polynomial combined complexity for the sensitivity problem for \emph{path queries}, and  for an interesting sub-class that we call \emph{doubly acyclic queries}. The algorithm uses join trees and works for other full acyclic CQ. It gives polynomial running time in combined complexity when the max degree in the join tree is bounded and can be extended to certain non-acyclic CQs. 


\cut{
\begin{corollary} Given an CQ query $\CQ$ on a database $\database$ with $m$ tables, each of which contains $k$ attributes and $n$ tuples, and a tuple $t$, checking whether $\tsens(t, \CQ, \database)>0$ is NP-hard for combined complexity in terms of $m, n$ and $k$.
\end{corollary}

\begin{proof}
It is similar to the proof above. Just rewrite the dummy relation ($R_0$) as a single attribute relation, and run algorithm twice on v = 0 and v = 1.
\end{proof}
\am{What is the difference between LS and $\delta$?}

The hardness of finding a tuple whose tuple sensitivity matches the local sensitivity (\am{use the term "sensitivity problem"}) follows that it can indicate whether $LS > 0$ or not.

\am{I think we need to end this section with a para or a theorem statement (without proof) that says that notwithstanding the hardness result, we next show an algorithm for solving the sensitivity problem that has polynomial combined complexity for ``doubly acyclic" queries. Also add a line about general queries ... ask Sudeepa how to write the parameterized complexity statement. }
}

\section{Path Join Query}\label{sec:path}

In this section, we give an efficient algorithm for a special class of acyclic queries called \emph{path join queries} or \emph{path queries} that run in polynomial time in combined complexity, {\em irrespective of the output size} (note that the polynomial combined complexity for query evaluation of acyclic and path queries is polynomial in input query, input database, and also the output size, which can be exponential in the query size).  
A path join query has the following form:
\begin{align*}
    \Qpjoin(\attributes_\database)
    \datalogEQ 
    \relation_1(A_0, A_1),
    \relation_2(A_1, A_2),
    \ldots,
    \relation_{m}(A_{m-1}, A_{m})
\end{align*}
where $\attributes_\database = \{A_0, A_1, \ldots, A_m\}$ and each relation $\relation_i$ contains two attributes: $A_{i-1}$ and $A_i$. 
Note that the above form can be used when two adjacent relations share more than one attributes, since we can replace multiple attributes with a single attribute using combinations of values for multiple attributes. Path joins can capture natural joins in many scenarios, like joining {\tt Students, Enrollment, Courses, TaughtBy, Instructors, $\cdots$} relations, or, joining {\tt Region, Nation, Customer, Orders, and Lineitem} (e.g., in TPC-H data, see Section~\ref{sec:experiments}). In addition, our algorithm for path join queries will give the basic ideas of our algorithms that can handle general CQs discussed in Section~\ref{sec:general}. 

\begin{figure}[t]
	\begin{center}
		\includegraphics[scale=0.6]{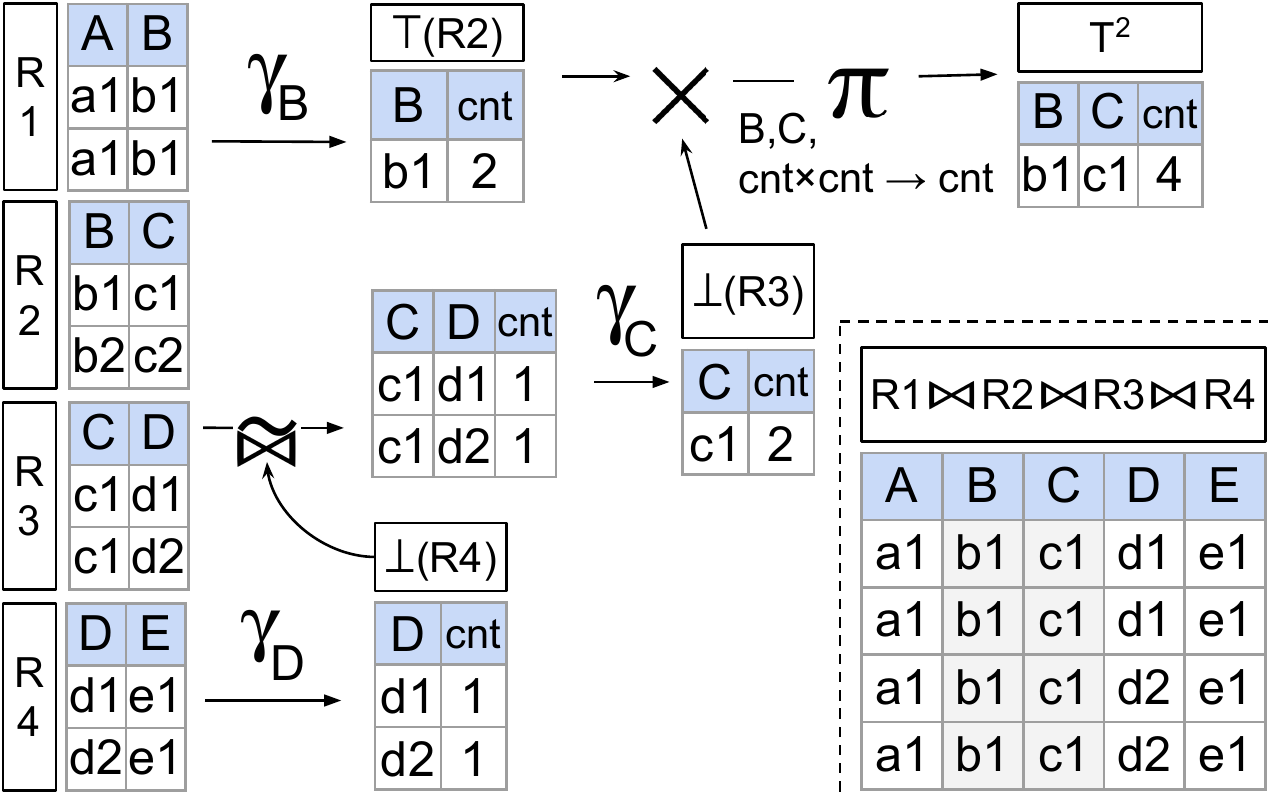}
		\caption{\revc{A path join query $\qpeg(A, B, C, D, E):-R_1(A, B), R_2(B, C), R_3(C, D), R_4(D, E)$ and the procedure of computing tuple sensitivities from $R_2$. 
		}}
		\vspace{-15pt}
		\label{fig:path}
	\end{center}
\end{figure}

\change{
\subsection{Intuition}\label{sec:pj_simple_ineffi}
First, we discuss the basic idea of our algorithm using a toy example of a path query in Figure~\ref{fig:path} with four relations:

\vspace{2pt}
\hspace{-10pt}
\scalebox{0.85}{
$  \qpeg(A, B, C, D, E):-R_1(A, B), R_2(B, C), R_3(C, D), R_4(D, E)$
}
\vspace{-7pt}

The number of output tuples affected by adding or removing a tuple $t$ to any of the relations $R_i$ depends on the number of ways in which $t$ can combine with tuples,  or in this case `join-paths', from the remaining relations. 
\revc{Recall that we are using bag semantics from Section \ref{sec:prelim}, so a `join-path' can repeat multiple times and lead to multiple output tuples.}

\vspace{-2mm}
\begin{example}\label{eg:pathjoin_intuition}
In Figure \ref{fig:path}, if removing the tuple $R_2(b1,c1)$, all the 4 tuples in the current answer of $\qpeg(\database)$ will be removed. These 4 tuples are formed by the join between the 2 tuples from $R_1$ (the ``incoming'' paths ending at $b1$) and the 2 tuples from $R_3\bowtie R_4$ (the ``outgoing'' paths starting at $c1$). On the other hand, if the initial $R_2$ does not have the tuple $(b1,c1)$, inserting $(b1,c1)$ to $R_2$ will add 4 new tuples to the query answer. 
\end{example}
\vspace{-2mm}

It is easy to see that the sensitivity of adding or removing a tuple $(a_i,b_i)\in R_i$ is the product of the number of incoming paths ending in $a_i$ and the number of outgoing paths starting in $b_i$. However, computing sensitivity by enumerating all join paths is inefficient since the number of incoming/outgoing paths can be exponential in the number of relations (and thus not polynomial in combined complexity). We also need to consider tuples from all the relations including tuples that are not in the active domain but can possibly connect a new path, further worsening the runtime. Hence, we propose the following algorithm to avoid repeated query evaluation and capture the effects of adding and removing tuples simultaneously. 
}

\subsection{Efficient Algorithm for Path Queries}\label{sec:pj_effi}

\change{
To efficiently represent the data, we first append each relation with an additional attribute $\cnt$ to record the multiplicity of the other attribute values in  that relation. To keep track of the multiplicity of the incoming paths and outgoing paths for the tuples in $R_i$, we define the following terms.

\paratitle{Topjoin and botjoin.~} We define \emph{topjoin} $\topjoin(\R{i})$ and \emph{botjoin} $\botjoin(\R{i})$ for $\R{i}$ as follows, which respectively compute the multiplicities of the values of  attribute $A_{i-1}$ for the partial path joins from $\R{1}$ to $\R{i-1}$, and $\R{i}$ to $\R{m}$. 
{\small
\begin{align}
\topjoin(\R{i}) &= \groupbyagg{\attr_{i-1}}{\big(\superjoinagg(\relation_1, \ldots ,\relation_{i-1})\big)} \label{eq:pj_topjoin}
    \\
    \botjoin(\R{i}) &= \groupbyagg{\attr_{i-1}}{\big(\superjoinagg(\relation_{i}, \ldots ,\relation_m)\big)} \label{eq:pj_botjoin}
\end{align}}    
The notation $\superjoinagg(R_i, \ldots, R_j)$ for $j>i$ used above is a shorthand of two steps: 
    (a) a natural join among $R_i, R_{i+1},\ldots, R_j$ except the attributes $\cnt$, and 
    (b) a projection of the product of these multiplicity attributes $\cnt$ to a new multiplicity column\footnote{A more systematic way to propagate the multiplicity for arbitrary queries has been discussed in the literature, e.g., \cite{AhmadKKN12, AmsterdamerDT11}.}, i.e., abusing RA expressions:
    $\superjoinagg(R_i,\ldots,R_j)= \pi_{\attributes_i,\ldots,\attributes_j,(R_i.\cnt\times \cdots \times R_j.\cnt)\rightarrow \cnt} (R_i \Join \ldots \Join R_j)$. The group-by operation $\groupbyagg{\attributes}(R)$ computes groups according to a set of attributes $\attributes \subseteq \attributes_R$, and also sums the counts as the new count attribute, i.e., $\groupbyagg{\attributes}(R)= \aggregate{\attributes}{sum(\cnt) \rightarrow \cnt}{(R)}$.
    
\vspace{-2mm}
\begin{example}\label{eg:pathjoin}
In Figure~\ref{fig:path}, the topjoin for $R_2$ is   $\topjoin(R_2)=\groupbyagg{B}{(R_1)}=\{(B:b1,\cnt:2)\}$ and the botjoin for $R_3$ is $\botjoin(R_3)=\groupbyagg{C}{(\superjoinagg(R_3,R_4) )}=\{(C:c1,\cnt:2)\}$. In order to compute the maximum change to the query output by adding/removing a tuple $(b1,c1)$ to/from $R_2$, we can multiply the $\cnt$ of $b1$ from $\topjoin(R_2)$ and the $\cnt$ of $c1$ from  $\botjoin(R_3)$, i.e., $2 * 2 =4$. This is the largest possible change to the query answer if adding or removing a tuple to $R_2$, as the multiplicities of the other values are all smaller than the $\cnt$ of $b1$ and  the $\cnt$ of $c1$.
\end{example}
\vspace{-2mm}
}

\change{
Hence, to compute the most sensitive tuple $t^*_i$ within each $R^i$ just requires the tuple $t^{\top}_i$ from $\topjoin(\R{i})$ with the largest multiplicity and the tuple $t^{\bot}_i$ from $\botjoin(\R{i})$ with the largest multiplicity, i.e.,
{\small
    \begin{equation}\label{eq:topbottuple}
    t^{\top}_i =  \argmax_{t \in \topjoin(\R{i})} t.\cnt  \text{ and }
    t^{\bot}_i =  \argmax_{t \in \botjoin(\R{i+1})} t.\cnt
\end{equation}
}
Then $t^*_i$ takes $(t^{\top}_i.A_{i-1},t^{\bot}_i.A_i)$ and its sensitivity takes $(t^{\top}_i.\cnt * t^{\bot}_i.\cnt)$. For $R_1$, the most sensitive tuple $t^*_1.A_1$ can be derived from the most sensitive tuple in $\botjoin(\R{2})$ and $t^*_1.A_0$ can take any values. Similarly, for $R_m$,  the most sensitive tuple $t^*_m.A_{m-1}$ can be derived from the most sensitive tuple in $\topjoin(\R{m})$ and $t^*_m.A_m$ can take any values.
The most sensitive tuple can be identified from these $m$ tuples $(t_1^*,\ldots,t_m^*)$.

The two relations $\topjoin(\R{i})$ and $ \botjoin(\R{i})$ for deriving $t^*_i$ do not share any attribute, so their join is equivalent to a cross product. Hence, we are not only getting the tuples in the active domain of $R_i$, but also considering all the tuples from its representative domain (Definition~\ref{def:repr}) that can lead to a non-zero local sensitivity by joining with tuples in the other relations, which takes care of both \reva{\textbf{upward}} and \reva{\textbf{downward}} tuple sensitivities. 
}


\paratitle{Algorithm.~} 
\change{Explicitly computing topjoin (\ref{eq:pj_topjoin}) and botjoin (\ref{eq:pj_botjoin}) can require exponential combined complexity, so we give an iterative approach in Algorithm~\ref{algo:ls_pathquery} to compute them in polynomial combined complexity. 
We first compute $\topjoin(\R{2})$ as a base case in the way as topjoin is defined in equation (\ref{eq:pj_topjoin}). Next, we iteratively compute $\topjoin(\R{i})$ for $i =   $ 
3 to m in the algorithm. 
As in the efficient query evaluation for acyclic queries\cite{Abiteboul+1995}, we use \emph{sort-merge joins} to compute the pairwise joins and the then groupby (sort both relations on the join column, join together, then groupby and add the $\cnt$ values).  which can be implemented in 
\revm{$O(n_i \log n_i)$}
time
\revc{as $R_{i-1}$ can join at most one tuple in $\topjoin(R_{i-1})$.} We apply a similar approach to compute botjoin for all relations. In total it takes $O(n \log n)$ time.

After preparing topjoin and botjoin, the third step is to combine them together and find the most sensitive tuple. We first find the tuple $t^{\top}_i$ from $\topjoin(\R{i})$ with the highest count and the tuple $t^{\bot}_i$ from $\botjoin(\R{i})$ with the highest count (Eqn.~\eqref{eq:topbottuple}). Then using these tuples, we can construct the most sensitive tuple $t^*_i$ and its sensitivity for each $T^i$ and identify the most sensitive tuple. \revm{Note that finding the tuple with the highest count in any of these relations can be done in $O(n_i)$ time, taking $O(n)$ time in total. Therefore, the following theorem holds (formal correctness and complexity proofs are deferred to the full version due to lack of space):}


\begin{theorem}\label{thm:path}
 
 Algorithm~\ref{algo:ls_pathquery} solves the sensitivity problem and finds the most sensitive tuple for path join queries. The time complexity is \revm{$O(n \log n)$} where $n$ is the size of the database instance irrespective of the size of the output. 
\end{theorem}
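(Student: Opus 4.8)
The plan is to prove correctness and running time separately, basing correctness on an \emph{exact closed form} for every candidate tuple's sensitivity. For correctness I would first fix a relation $\R{i}$ with $1 < i < m$ and a tuple $t = (a,b)$ from its representative domain, where $a$ is the value of $A_{i-1}$ and $b$ the value of $A_i$. Under bag semantics the output tuples of $\Qpjoin$ routed through one fixed copy of $(a,b)$ are exactly the pairs consisting of an incoming path from $\R{1}\Join\cdots\Join\R{i-1}$ ending at $a$ and an outgoing path from $\R{i+1}\Join\cdots\Join\R{m}$ starting at $b$. Hence adding one copy of $(a,b)$ inserts, and removing one copy deletes, exactly $c_\top(a)\cdot c_\bot(b)$ output tuples, where $c_\top(a)$ is the $\cnt$ of $a$ in $\topjoin(\R{i})$ and $c_\bot(b)$ is the $\cnt$ of $b$ in $\botjoin(\R{i+1})$. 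This yields $\utsens(t)=c_\top(a)\cdot c_\bot(b)$ always, and $\dtsens(t)=c_\top(a)\cdot c_\bot(b)$ when $(a,b)$ is already present and $0$ otherwise; in either case $\tsens(t)=c_\top(a)\cdot c_\bot(b)$ since both factors are nonnegative. The subtle point is that a value $a$ appears in $\topjoin(\R{i})$ only if it has a full path back to $\R{1}$, which forces $a\in\actdomain^{A_{i-1}, i-1}=\reprdomain^{A_{i-1},i}$, so the values reachable through $\topjoin$ and $\botjoin$ are precisely the representative-domain values that can produce nonzero sensitivity.

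Next I would reduce the maximization over $\reprdomain^{\attributes_i}$ to two independent maximizations. Because $\tsens((a,b))=c_\top(a)\cdot c_\bot(b)$ factorizes and both factors are nonnegative, $\max_{(a,b)}c_\top(a)\cdot c_\bot(b)=(\max_a c_\top(a))(\max_b c_\bot(b))$, attained by $t^{\top}_i$ and $t^{\bot}_i$ from Equation~\eqref{eq:topbottuple}. Since $\reprdomain^{\attributes_i}$ is the Cartesian product $\reprdomain^{A_{i-1},i}\times\reprdomain^{A_i,i}$ (Definition~\ref{def:repr}), the assembled tuple $t^*_i=(t^{\top}_i.A_{i-1},t^{\bot}_i.A_i)$ is a legal candidate even if it never occurred in $\R{i}$, and its upward sensitivity witnesses the product value. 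The boundary relations $\R{1}$ and $\R{m}$ are handled by treating the missing side as a trivial single path of count $1$ with a free attribute value. Taking the maximum of $\tsens(t^*_i)$ over $i\in\{1,\dots,m\}$ then equals $\LS(\Qpjoin,\database)$ by Definition~\ref{def:local_sens}, and the witnessing tuple is a most sensitive tuple, establishing that Algorithm~\ref{algo:ls_pathquery} solves the sensitivity problem.

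For the iterative computation I would prove by induction the recurrence $\topjoin(\R{i})=\groupbyagg{A_{i-1}}(\superjoinagg(\topjoin(\R{i-1}),\R{i-1}))$ with base case $\topjoin(\R{2})=\groupbyagg{A_1}(\R{1})$, and symmetrically for $\botjoin$. Correctness holds because the count of paths ending at a value $x$ of $A_{i-1}$ equals $\sum_{(y,x)\in\R{i-1}}$ (paths ending at $y$) times the multiplicity of $(y,x)$, which is exactly a join followed by a grouped sum of $\cnt$. The crucial invariant is that $\topjoin(\R{i-1})$ holds at most one tuple per value of $A_{i-2}$; therefore each tuple of $\R{i-1}$ joins with at most one tuple of $\topjoin(\R{i-1})$, the intermediate result has size $O(n_{i-1})$ rather than the exponential size of the materialized path join, and a sort-merge join plus group-by costs $O(n_{i-1}\log n_{i-1})$.

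Finally I would sum the costs: computing all topjoins and botjoins costs $\sum_i O(n_i\log n_i)=O(n\log n)$; scanning each $\topjoin(\R{i})$ and $\botjoin(\R{i})$ for the largest $\cnt$ costs $O(n_i)$ per relation and $O(n)$ overall; and assembling the $m$ candidates and taking their maximum costs $O(m)=O(n)$, for a total of $O(n\log n)$ independent of $|\Qpjoin(\database)|$. I expect the main obstacle to be the correctness argument for the exact sensitivity formula under bag semantics—specifically showing that the same product governs both upward and downward sensitivity, and that restricting candidates to the summarized $\topjoin$/$\botjoin$ values loses no sensitivity relative to the full representative domain—together with verifying the size invariant that prevents the iterative join from blowing up exponentially.
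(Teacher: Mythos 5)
Your proposal is correct and takes essentially the same approach as the paper: the exact product formula $c_\top(a)\cdot c_\bot(b)$ for tuple sensitivity via incoming/outgoing path counts, the factorized maximization yielding $t^*_i$ from Equation~\eqref{eq:topbottuple} (including the boundary treatment of $\R{1}$ and $\R{m}$), and the $O(n_i\log n_i)$ sort-merge-join bound resting on the invariant that each tuple of $\R{i-1}$ joins at most one tuple of $\topjoin(\R{i-1})$ are precisely the ingredients the paper uses. The paper defers the formal correctness and complexity proofs to its full version, so your write-up simply supplies in full the details the paper only sketches.
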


\reva{\textbf{Connection with Yannakakis's algorithm \cite{Yannakakis:1981}:} Algorithm~\ref{algo:ls_pathquery} is inspired by Yannakakis's algorithm \cite{Yannakakis:1981} that computes join results for acyclic queries in (near)-linear time in the size of the input and output, and can be adapted to compute the join size in near-linear $O(n \log n)$  time only in the input size $n$ in a single pass. 
In Algorithm~\ref{algo:ls_pathquery}, we make two passes to compute intermediate topjoins and botjoins, and hence have a similar complexity. We note that, however, this is the \emph{total} time complexity for sensitivity computation considering all possible tuple additions and deletions from all relations. If we naively repeat the $O(n \log n)$ time algorithm inspired by \cite{Yannakakis:1981} to compute the output join size 
for all possible tuple deletions, the time would be multiplied by $n$. Further, if we repeat this algorithm for all possible tuple insertions using the representative domain in Definition~\ref{def:repr}, the time would be (approximately) multiplied by a factor of $O(n^2)$. 
Algorithm~\ref{algo:ls_pathquery} provides an approach to the sensitivity problem using ideas from \cite{Yannakakis:1981} in $O(n \log n)$ time for path queries (we compare these experimentally in Section~\ref{sec:exp_ls}.)
\cut{
Unlike traditional query evaluation which considers active domain, we need to show that our algorithm considers the full representative domain within two passes of the query. } 
}}

However, the above theorem is not generalizable to even all acyclic queries (recall Theorem~\ref{thm:combined-nphard}). In the next section, we give algorithms that can handle acyclic CQs in parameterized polynomial time and run in sub-quadratic time for a class called `doubly acyclic queries' that generalizes path queries.

\begin{algorithm}[t]
\begin{small}
	\caption{\change{Compute Local Sensitivity of a Path Join Query and the corresponding Most Sensitive Tuple}}
	\label{algo:ls_pathquery}
	\begin{algorithmic}[1]
	\Require Path query $\Qpjoin(\attr_0 \ldots \attr_m)$, the database instance $\database$
	\Ensure $\LS(Q, \database)$, and a most sensitive tuple $\kingtuple$
		\Procedure{LSPathJoin}{} 
		\Statex I) Prepare topjoin
		\State $\topjoin(\R{2}) = \groupbyagg{\attr_1}{\relation_1}$ \emph{ /* also adds the $\cnt$ values */} 
		\For{$i=3,\ldots,m$}
		    \State $\topjoin(\R{i}) = 
		    \groupbyagg{A_{i-1}}{\superjoinagg{(\topjoin(\R{i-1}), \relation_{i-1}})}$ \emph{ /* also multiplies and adds the $\cnt$ values */}
		\EndFor
		\Statex II) Prepare botjoin
        \State $\botjoin(\R{m}) = \groupbyagg{\attr_{m-1}}{\relation_m}$ \emph{/* also adds the $\cnt$ values */}
		\For{$i=m-1,\ldots,2$}
		    \State $\botjoin(\R{i}) = 
		    \groupbyagg{A_{i-1}}{\superjoinagg{(\botjoin(\R{i+1}), \relation_i)}}$ 
		    \emph{ /* also multiplies and adds the $\cnt$ values */}
		\EndFor
		\Statex III) Select most sensitive tuple \change{
        \For{$i=1,\ldots,m$}
            \State $t^{\top}_i =  \argmax_{t \in \topjoin(\R{i})} t.\cnt$ 
            \State $t^{\bot}_i =  \argmax_{t \in \botjoin(\R{i+1})} t.\cnt$
            \State $t^*_i = (t^{\top}_i.A_{i-1},t^{\bot}_i.A_i)$ with sensitivity $\cnt=(t^{\top}_i.\cnt * t^{\bot}_i.\cnt)$ \emph{/* when $i=1$ (or $i=m$), $A_{0}$ and $A_{m}$ takes any value and $t^{\top}_1.\cnt=t^{\bot}_m.\cnt=1$. */}
        \EndFor}
        \State $t^* = \argmax_{i=1,\ldots,m} t_i^*.\cnt$
		\State $\LS = t^*.\cnt$
		\State \textbf{return} $\LS, t^*$
		\EndProcedure
	\end{algorithmic}
	\end{small}
\end{algorithm}

\section{Acyclic and Other Join Queries}\label{sec:general}

In this section, we discuss a general solution to  acyclic queries (Section~\ref{sec:general-approach}), and then present an efficient algorithm with additional parameters in the running time complexity of the algorithm (Section~\ref{sec:general-algo}). In Section~\ref{sec:double-acyclic}, we show that a class of queries that we call \emph{doubly-acyclic queries} has a polynomial combined complexity. Last, we discuss several extensions of this algorithm to general cases. 

We consider queries with no self joins; i.e., there are no duplicated relations in the query body. For simplicity, we assume an acyclic full CQ of the following form: 
\begin{align*}
\Qajoin(\attributes_\database) :- \relation_1(\attributes_1), \ldots, \relation_m(\attributes_m).
\end{align*}
We assume that the query hypergraph (Section~\ref{sec:acyclic}) is connected. We also assume that each attribute appears in at least two relations in the body; further, there are no selections in the body, and no projections in the head of the query, i.e., $\attributes_\database = \cup_{i=1}^m \attributes_i$, and the total number of attributes $|\attributes_\database|$ is $k$. These assumptions, except the no-self-join assumption (which introduces new challenges and we leave it as a future direction), are without loss of generality as how they can be relaxed using our algorithm is discussed in Section~\ref{sec:extensions}. 

\subsection{Basic Idea using Join Trees}\label{sec:general-approach}
\revc{Similar to a path join query, the sensitivity of adding or removing a tuple in a relation depends on the number of the incoming/outgoing paths to/from this tuple. To compute the multiplicity of these paths, we represent an acyclic query using a join tree constructed from its GYO decomposition (Section~\ref{sec:acyclic}). For example, given the join tree for a join between 12 relations in Figure~\ref{fig:gj_notations}, in order to compute the sensitivity of tuples in $R_8$ (node 8), we need to construct the join between two groups of relations: (i) the set of relations that are not the descendants of node 8, i.e., $\{11,12,9,10,1,2,7,3,4\}$ --- the incoming paths and (ii) the relations rooted at node 8, i.e., $\{5,6\}$ --- the outgoing paths. }


\change{Formally,} we denote this join tree derived based on GYO decomposition
as $\tree(V,E)$, where the nodes in the tree $V=\{\R{1},\ldots,\R{m}\}$ correspond to relations in the query.  
Let $p(\R{j})$ denote the parent of node $\R{j}$ in the join tree,
$C(\R{j})$ denote the children of $\R{j}$,
and $N(\R{j})$ denote the \emph{neighbors} or \emph{siblings} of $\R{j}$, i.e. $N(\R{j}) = C(p(\R{j})) \setminus \{\R{j}\}$.
We denote $\subtree(\R{j})$ as the relations in the subtree rooted at $\R{j}$, while $\comptree(\R{j})$ is the set of relations in the complement of $\subtree(\R{j})$ on the tree $\tree(V, E)$.

\vspace{-2mm}
\change{
\begin{example}\label{eg:jointree}
In Figure~\ref{fig:gj_notations}, the complementary tree of $R_8$, $\comptree(\R{8})$,  includes $\{11,12,9,10,1,2,7,3,4\}$ and the subtrees at the children of $R_8$ are leaf node 5 and leaf node 6. Computing the joins of these relations can be exponential in the number of the relations or the number of attributes (and thus not have a polynomial combined complexity). We propose an algorithm to make two passes on $\tree$ to efficiently track the incoming/outgoing paths.
\end{example}
}
\vspace{-2mm}

\eat{
To compute the local sensitivity of an acyclic CQ, the basic idea is similar to that in the path join case, where the join and aggregate operators also update the \cnt\ column.  

\squishlist
\item[(i)] For each $i=1,\ldots,m$, compute
{\small
\begin{align}
S^i   =  \superjoin(\{\R{j} \mid \R{j} \neq \R{i}\}) 
 =  \superjoin(\comptree(\R{i}), \{\subtree(\R{j}) \mid \R{j} \in C(\R{i}) \} )\label{eq:general_gj_sub_query}
\end{align}
}
i.e., join the relations that are not in the subtree rooted at $R_i$, with the subtrees rooted at the children of $R_i$.
\item[(ii)] For each $i=1,\ldots, m$, compute 
{\small
\begin{align}\label{eq:general_gj_freq}
T^i = \groupbyagg{\attributes_i}(S^i)
\end{align}
}
\item[(iii)] Finally, return the most sensitive tuple:
{\small
\begin{align} \label{eq:general_gj_most_sensitive_tuple}
t^* = \argmax_{t:t\in T^i, i=1,\ldots,m} t.\cnt
\end{align}
}
\squishend

Note that for path queries, the join tree is a simple path, and therefore \eqref{eq:general_gj_sub_query} reduces to $S^i$ 
as in \eqref{eq:simple-1}.
Similar to the previous section, the size of the sub-queries $S^i$ can be exponential in $m$ or $k$, which is polynomial in data complexity, but may not be efficient for practical purposes. 
To design an efficient algorithm for acyclic CQs, we compress the frequency table by tracking topjoins and botjoins with two passes on $\tree$, which we will discuss next.
}

\subsection{Efficient Algorithm for Acyclic Queries}\label{sec:general-algo}

\eat{
Since $\tree$ is a join tree, for each attribute $A$, relations that contain $A$ always form a connected subtree.  
Hence, all the attributes of  $\R{i}$ that appear in the join tree should be either in the attributes of its complementary tree or the attributes of its descendants. Then applying group by according to the attributes in $\R{i}$ on the join between all the remaining relations gives us the sensitivities of all the tuples in representative domain of $\R{i}$, i.e., 
{\small
\begin{align}\label{eq:general_gj_freq}
T^i &= \groupbyagg{\attributes_i}(\superjoin(\{\R{j} \mid \R{j} \neq \R{i}\})) \nonumber \\ 
&=\groupbyagg{\attributes_i}(\superjoin(\comptree(\R{i}), \{\subtree(\R{j}) \mid \R{j} \in C(\R{i}) \} )) \label{eq:general_gj_freq}
\end{align}
}
The operators $\superjoinagg$ and $\groupbyagg{}$ are the same as the ones used for path join queries in Section~\ref{sec:pj_effi} which take into account multiplicities and allow us to find the most sensitive tuple in $R_i$ $t^*_i$ --- the tuple with the largest multiplicity in $T^i$. At last, we find the most sensitive tuples among all the relations. 
}

{\bf Topjoin and botjoin.~}
\change{To compute the sensitivity of the tuples in $R_i$, we need to evaluate the join between two groups of relations: (i) the complementary tree of $\R{i}$, and (ii) the subtrees rooted at the children of $\R{i}$. These two groups of relations can be represented as  \emph{topjoin} \change{$\topjoin$} and $\emph{botjoin}$ \change{$\botjoin$}:
\begin{align}
\topjoin(\R{i}) &= 
\groupbyagg{\attributes_{i} \cap \attributes_{p(\R{i})}} (\superjoinagg
(\comptree(\R{i})))
\label{eq:gj_topjoin}
\\
\botjoin(\R{i}) &= 
\groupbyagg{\attributes_{i} \cap \attributes_{p(\R{i})}} (\superjoinagg(\tree(\R{i})))
\label{eq:gj_botjoin}
\end{align}
The operators $\superjoinagg$ and $\groupbyagg{}$ are the same as the ones used for path queries in Section~\ref{sec:pj_effi} which take into account multiplicities. 

Since $\tree$ is a join tree, for each attribute $A$, the relations that contain $A$ always form a connected subtree.  Hence, all the attributes of  $\R{i}$ that appear in the join tree should be either in the attributes of its complementary tree or the attributes of its descendants. Then applying group by according to the attributes in $\R{i}$, $\attributes_{\R{i}}$, on the join between all the remaining relations gives
us the sensitivities of all the tuples in representative domain of $\R{i}$, i.e., 
}
\begin{align}\label{eq:gj_fast_node_freq}
T^{i}=  
\groupbyagg{\attributes_{\R{i}}}\big(\superjoin(\topjoin(\R{i}), \{\botjoin(\R{j}) \mid \R{j} \in C(\R{i}) \} )\big)
\end{align}
\change{We name $T^{i}$ the {\emph multiplicity table} of $R_i$. The expression for $T^i$ is simpler if $\R{i}$ is the root or a leaf. We will discuss it in the algorithm below.}




\eat{
Now we can rewrite the Equation~\eqref{eq:general_gj_sub_query} in step (1) as \begin{align}\label{eq:gj_fast_sub_query}
\Qgjoin^i(\attributes_\database) 
=\superjoin(\Qgjoin^{\pi(R_i)}(\attributes_\database) , H(R_i))
\end{align}
Let $v_i$ be the corresponding node of $R_i$ in the hypertree, i.e., $v_j=\pi(R_i)$. This expression joins three types of relations: (i) the complementary tree of $v_j$, (ii) the subtrees rooted at the children of $v_j$, and (iii) the group of cohorts of $R_i$ in $v_j$.
}

\begin{figure}
	\centering
	\includegraphics[scale=0.45]{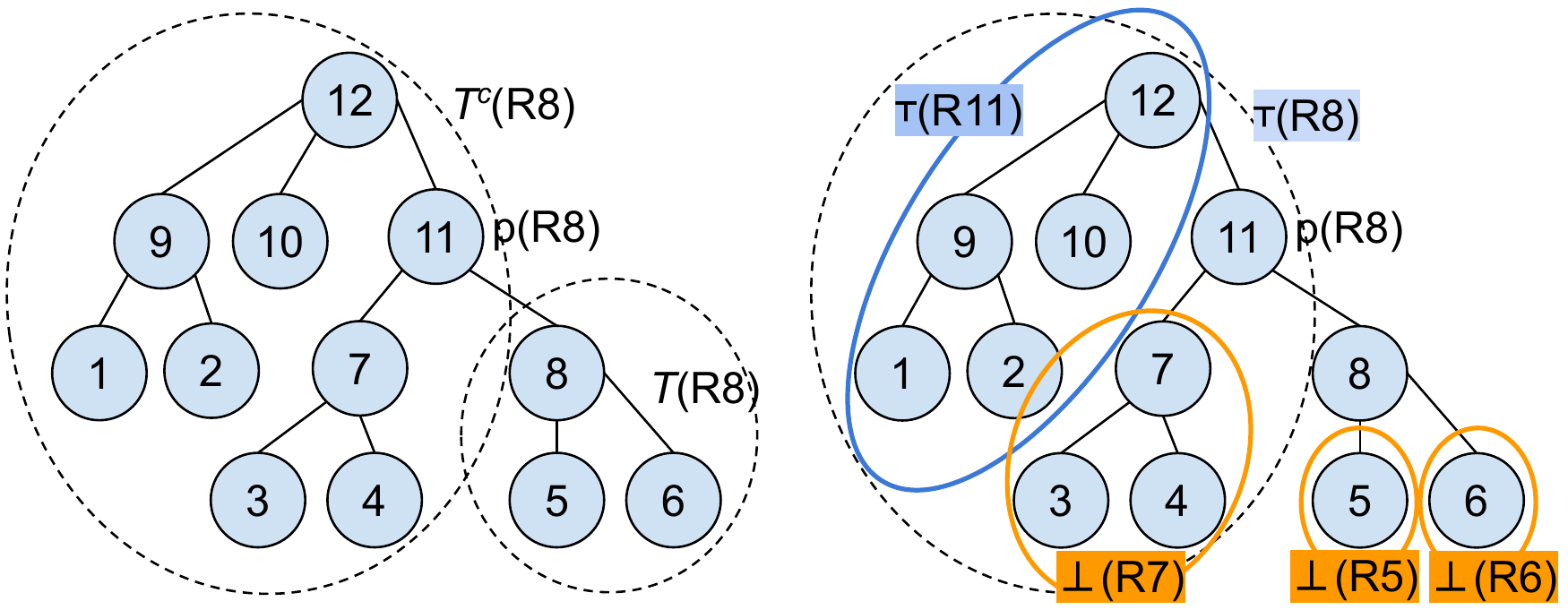}
	\caption{The given join tree consists of 12 relations. The node with number $i$ means $R_i$. \change{The left figure circles the subtree rooted at $R_8$, $\tree(R_8)$, and the complementary subgraph of $\tree(R_8)$, 	$\tree^c(R_8)$. The right figure highlights that the multiplicity table of $R_8$, $T^8$, requires the join between the topjoin of $R_8$, $\topjoin(R_8)$ and the botjoins of all its children, $\{\botjoin(R_5), \botjoin(R_6)\}$.} The topjoin of $R_8$ can be iteratively computed from the join between the topjoin of the parent of $R_8$, i.e., $\topjoin(R_{11})$, and the botjoins of the neighbors of $R_8$, i.e., $\{\botjoin(R_7) \}$.
	}
	\label{fig:gj_notations}
	\vspace{-15pt}
\end{figure}

\eat{
Given a table $R_i$ in $T$, as long as we know the maximum frequency of tuples projected on $\attributes_i$ from the joint output between tables other than $R_i$ in a join tree of $\Qgjoin$, as defined by $\Qgjoin^i$, we immediately get the max change to the final joint output due to adding or removing a tuple to $R_i$, and hence the local sensitivity of the query. We define the following invariants to compute the tuple sensitivity of a general query.
}


\eat{
For each node $v_j$, we denote $\botjoin(v_j)$ and $\topjoin(v_j)$ as tables that computes the frequency of tuples projected on attributes shared by $v_j$ and its parent $p(v_j)$ from $\subtree(v_j)$ and $\comptree(v_j)$.
\begin{align}
\botjoin(v_j) &=  
\groupbyagg{\attributes_{v_j,p(v_j)}} (T(v_j))
\label{eq:gj_botjoin}
\\
\topjoin(v_j) &= 
\groupbyagg{\attributes_{v_j,p(v_j)}} (\comptree(v_j))
\label{eq:gj_topjoin}
\end{align}

Denote the frequency table for $v_j$ as:
\begin{align}\label{eq:gj_fast_node_freq}
\Qgfreq^{v_j}(\attributes_\database) =  
\groupbyagg{\attributes_{v_j}}(\superjoinagg(\topjoin(v_j),
\{\botjoin(v_{j'})~|~v_{j'}\in C(v_j) \}))
\end{align}


Now, equation (\ref{eq:general_gj_freq}) can be rewritten as:
\begin{align}\label{eq:gj_fast_freq}
\Qgfreq^i(\attributes_\database) = \groupbyagg{\attributes_i}(
	\superjoinagg(\groupbyagg{\attributes_i}(\Qgfreq^{\pi(R_i)}), H(\R{i}))
)
\end{align}

\begin{lemma}
	Equation (\ref{eq:general_gj_freq}) and (\ref{eq:gj_fast_freq}) for step (2) are equivalent.
\end{lemma}
\begin{proof}
    
\end{proof}
}


\paratitle{Algorithm.} 
\change{Algorithm~\ref{algo:gj_query} takes as input the join tree $\mathcal{T}$ of the acyclic query and database $\mathcal{D}$ .  It first prepares botjoin and topjoin for each node with an iterative approach.}
To prepare {botjoin} $\bot(\R{i})$ \eqref{eq:gj_botjoin}, we start from the leaf nodes. The botjoin of a leaf node $R_i$ is simply a group by on the common attributes between $R_i$ and its parent node $p(R_i)$ on $R_i$. Next, we compute $\botjoin$ for other nodes in a post-order traversal of the tree with this iterative formula:
\begin{eqnarray}\label{eq:gj_botjoin_iter}
\bot(\R{i}) = \groupbyagg{\attributes_{i} \cap \attributes_{p(\R{i})}} (\superjoinagg(\R{i}, \{\bot(\R{j}) \mid  \R{j} \in C(\R{i}) \}))
\end{eqnarray}
For each $\botjoin(\R{i})$, the join starts with $\R{i}$ and follows by the children of $\R{i}$ one by one.  
\change{
\vspace{-2mm}
\begin{example}
In Figure~\ref{fig:gj_notations}, we first compute the botjoins for all the leaf nodes including $\bot{(\R{1}})$, $\bot{(\R{2}})$, $\bot{(\R{3}})$, $\bot{(\R{4}})$,$\bot{(\R{5}})$, $\bot{(\R{6}})$,  $\bot{(\R{10}})$. Next, if all the children of a node has a computed botjoin, then we can compute the botjoin of this node, e.g. $\botjoin(R_8) = \groupbyagg{\attributes_{R_8} \cap \attributes_{R_{11}}} \left(\superjoin\left(\superjoin(R_8, \bot{(\R{5})}),  \bot{(\R{6})}\right)\right)$, where $R_5$ and $R_6$ are the children of $R_8$, and $R_{11}$ is the parent of $R_8$.
\end{example}
\vspace{-2mm}
}

To prepare {topjoin} $\top(\R{i})$, we start with the children of the root \revb{node}. The topjoin of each child $R_i$ of the root is the join between the root and the botjoins of all its neighbors followed by a group by on the common attributes between $\R{i}$ and the root. Next, we compute topjoin $\topjoin$ for other nodes in a pre-order traversal of the tree with this iterative formula:
\begin{small}
\begin{eqnarray}\label{eq:gj_topjoin_iter}
\hspace{-1pt}
\top\hspace{-1pt}(\R{i})\hspace{-3pt}=
\hspace{-2pt} \groupbyagg{\attributes_{i} \cap \attributes_{p(\R{i})}}
\hspace{-2pt}
(
\superjoinagg(p(\R{i}),
\hspace{-2pt}
\top(p(\R{i})), 
\hspace{-2pt}
\{\bot(\R{j}) \hspace{-2pt} \mid \hspace{-2pt} \R{j}\in N(\R{i})  \}))
\hspace{-5.5pt}
\end{eqnarray}
\end{small}
For each $\top(\R{i})$, the join starts with $p(\R{i})$ and $\top(p(\R{i}))$ and follows by the botjoin of the neighbors of $\R{i}$ one by one. 
For example, computing the topjoin of $R_8$ in  Figure~\ref{fig:gj_notations} requires the join of its parent $R_{11}$, the topjoin of its parent $\topjoin(R_{11})$, and the botjoins of all its neighbors, here $\{\botjoin(R_7)\}$.

\eat{
The algorithm takes as input the join tree $\mathcal{T}$ of the acyclic query and database $\mathcal{D}$ .  It first prepares $\bot(v_j)$ and $\top(v_j)$ for each node. Note that computing  botjoin $\bot(\R{i})$ \eqref{eq:gj_botjoin} and topjoin $\top(\R{i})$ \eqref{eq:gj_topjoin} for each node again may take exponential time in $m$ or $k$, so we give an iterative approach in Algorithm~\ref{algo:gj_query} to compute them efficiently (albeit the time may not be polynomial in combined complexity, recall Theorem~\ref{thm:combined-nphard}). 

After preparing botjoin and topjoin for each node, we use them to compute the frequency tables of each node. 
The final step is to combine the frequency tables to find the most sensitive tuple. 


To prepare {botjoin} $\bot(\R{i})$ \eqref{eq:gj_botjoin}, we start from the leaf nodes. The botjoin of a leaf node $R_i$ is simply a group by on the common attributes between $R_i$ and its parent node $p(R_i)$ on $R_i$. Next, we compute $\botjoin$ for other nodes in a post-order traversal of the tree with this iterative formula:
\begin{eqnarray}\label{eq:gj_botjoin_iter}
\bot(\R{i}) = \groupbyagg{\attributes_{i} \cap \attributes_{p(\R{i})}} (\superjoinagg(\R{i}, \{\bot(\R{j}) \mid  \R{j} \in C(\R{i}) \}))
\end{eqnarray}
For each $\botjoin(\R{i})$, the join starts with $\R{i}$ and follows by the children of $\R{i}$ one by one.  
For example, in Figure~\ref{fig:gj_notations}, the botjoin of $R_7$ requires a join between $R_7$ and the botjoins of its children $R_3$ and $R_4$.

\eat{The correctness follows as replacing $\botjoin(\R{i})$ from the above formula for each $\R{i} \in C(v_i)$ by the definition of topjoin as equation (\ref{eq:gj_topjoin}) shows, and proving by induction.\note{Do we need more steps to show the correctness/equivalence?} \xh{Move this correctness to proof.}}

To prepare {topjoin} $\top(\R{i})$, we start with the children of the root \revb{node}. The topjoin of each child $R_i$ of the root is the join between the root and the botjoins of all its neighbors followed by a group by on the common attributes between $\R{i}$ and the root. Next, we compute topjoin $\topjoin$ for other nodes in a pre-order traversal of the tree with this iterative formula:
\begin{small}
\begin{eqnarray}\label{eq:gj_topjoin_iter}
\top(\R{i}) = \groupbyagg{\attributes_{i} \cap \attributes_{p(\R{i})}}(
\superjoinagg(p(\R{i}), \top(p(\R{i})), \{\bot(\R{j}) \mid \R{j}\in N(\R{i})  \}))
\end{eqnarray}
\end{small}

For each $\top(\R{i})$, the join starts with $p(\R{i})$ and $\top(p(\R{i}))$ and follows by the botjoin of the neighbors of $\R{i}$ one by one. 
For example, computing the topjoin of $R_8$ in  Figure~\ref{fig:gj_notations} requires the join of its parent $R_{11}$, the topjoin of its parent $\topjoin(R_{11})$, and the botjoins of all its neighbors, here $\{\botjoin(R_7)\}$. 

\eat{The correctness follows as replacing $\botjoin(\R{i})$ from the above formula for each $\R{i} \in N(v_i)$ by the definition of botjoin as equation (\ref{eq:gj_botjoin}) shows, and replacing $\topjoin(p(v_i))$ by the definition of topjoin as equation (\ref{eq:gj_topjoin}) shows. Finally, prove it by induction.\note{Do we need more steps to show the correctness/equivalence?} \xh{Move this correctness to proof.}
}
}

After preparing all the topjoins and botjoins, we combine these results to obtain the \change{multiplicity} tables $T^i$  for $i=1,\ldots,m$ based on Eqn.~\eqref{eq:gj_fast_node_freq}. For instance, to compute $T^8$ in Figure~\ref{fig:gj_notations}, we join the topjoin of $R_8$ and the botjoins of all its children, $\{\bot(R_5),\bot(R_6)\}$. \change{This does not require the topjoins of the root node or the botjoins of the leaf nodes.}
We iterate all the \change{multiplicity} tables $T_i$ and find the tuple with maximum $\cnt$. This tuple is returned as the most sensitive tuple with its $\cnt$ as the local sensitivity of this query on the given database instance. 

The  runtime of the algorithm depends on the \emph{max degree} of the tree, which is the maximum children size + 1 (for a non-root node including the parent) of any node in the tree.

\begin{algorithm}[t]
\algrenewcommand\algorithmicindent{1em}%
	\small
	\caption{\change{Compute Local Sensitivity of an acyclic CQ and the corresponding Most Sensitive Tuple}} 
	\label{algo:gj_query}
	\begin{algorithmic}[1]
		\Require Acyclic CQ $\Qajoin(\attributes_\database)$ as a join tree $\tree$, the database $\database$
		\Ensure $\LS(\Qajoin,\database)$, and the most sensitive tuple $\kingtuple$
		\Procedure{LSAcyclicJoin}{} 
		\State I) Compute $\botjoin(\R{i})$ in post-order (leaf to root)
\[
\begin{cases}
   \groupbyagg{\attributes_{i} \cap \attributes_{p(\R{i})}}(\R{i}) ,& \text{if $R_i$ is leaf} \\
   \groupbyagg{\attributes_{i}} \superjoin (\R{i}, \{\botjoin(\R{j}) \mid \R{j} \in C(\R{i})\}), & \text{if $R_i$ is root}\\
   \groupbyagg{\attributes_{i} \cap \attributes_{p(\R{i})}} \superjoin (\R{i}, \{\botjoin(\R{j}) \mid \R{j} \in C(\R{i})\}), &\text{o.w.}
\end{cases}
\]
		\State II) Compute $\topjoin(\R{i})$ in pre-oder (root to leaf) 
\[
\hspace{-10pt}
\begin{cases}
   \emptyset, & \hspace{-27pt} \text{if $R_i$ is root}\\ 
   \groupbyagg{\attributes_{i} \cap \attributes_{p(\R{i})}} \superjoin(p(\R{i}), \{\botjoin(\R{j}) \mid \R{j} \in N(\R{i}) \}), & \hspace{-39pt}\text{if $p(R_i)$ is root}\\
   \groupbyagg{\attributes_{i} \cap \attributes_{p(\R{i})}} \superjoin(p(\R{i}),\topjoin(p(\R{i})),\{\botjoin(\R{j}) \mid \R{j}\in N(\R{i}) \}), & \hspace{3pt} \text{o.w.}
\end{cases}
\]		
		\State III) Prepare multiplicity tables $T^{i}$ of nodes for $i=1, \dots, m$
\[
\begin{cases}
   \groupbyagg{\attributes_{i}} (\topjoin{(\R{i}})), & \text{if $R_i$ is leaf}\\
   \groupbyagg{\attributes_{i}}\superjoin(\{\botjoin(\R{j}) \mid \R{j}\in C(\R{i}) \}), & \text{if $R_i$ is root}\\ 
  \groupbyagg{\attributes_{i}}\superjoin(\topjoin{(\R{i})},\{\botjoin(\R{j}) \mid \R{j} \in C(\R{i}) \}), &\text{o.w.}
\end{cases}
\]			
		\Statex IV) Select  the most sensitive tuple 
		\State $t^* = \argmax_{t:t\in T^i, i=1,\ldots,m} t.\cnt$ and $\LS = t^*.\cnt$ 
		\State \textbf{return} $\LS, t^*$
		\EndProcedure
	\end{algorithmic}
\end{algorithm}

\eat{
\begin{algorithm}[t]
\algrenewcommand\algorithmicindent{1em}%
	\small
	\caption{Compute Local Sensitivity of an acyclic CQ and the corresponding Most Sensitive Tuple} 
	\label{algo:gj_query}
	\begin{algorithmic}[1]
		\Require Acyclic CQ $\Qajoin(\attributes_\database)$ as a join tree $\tree$, 
		the database $\database$
		\Ensure $\LS(\Qajoin,\database)$, and the most sensitive tuple $\kingtuple$
		\Procedure{LSAcyclicJoin}{} 
		\Statex I) Prepare botjoin
		\For{$i$ in postoder}
		\If{$\R{i}$ is a leaf}
		\State $\botjoin(\R{i}) = \groupbyagg{\attributes_{i} \cap \attributes_{p(\R{i})}}(\R{i}) $
		\ElsIf{$\R{i}$ is the root}
		\State $\botjoin(\R{i}) =  \groupbyagg{\attributes_{i}} \superjoin (\R{i}, \{\botjoin(\R{j}) \mid \R{j} \in C(\R{i})\})$
		\Else
		\State $\botjoin(\R{i}) =  \groupbyagg{\attributes_{i} \cap \attributes_{p(\R{i})}} \superjoin (\R{i}, \{\botjoin(\R{j}) \mid \R{j} \in C(\R{i})\})$ 
		\EndIf
		\EndFor
		\Statex II) Prepare topjoin
		\For{$i$ in preoder}
		\If{$p(\R{i})$ is the root}
		\State $\topjoin(\R{i}) = \groupbyagg{\attributes_{i} \cap \attributes_{p(\R{i})}} \superjoin(p(\R{i}), \{\botjoin(\R{j}) \mid \R{j} \in N(\R{i}) \}) $
		\ElsIf{\change{$p(\R{i})$ is not root and $R_i$ is not root} }
		\State $\topjoin(\R{i}) = \groupbyagg{\attributes_{i} \cap \attributes_{p(\R{i})}} \superjoin(p(\R{i}),\topjoin(p(\R{i})),\{\botjoin(\R{j}) \mid \R{j}\in N(\R{i}) \})$
		\EndIf
		\EndFor
		\Statex III) Prepare frequency tables of nodes
		\For{$i=1, \dots, m$}
		\If {$\R{i}$ is the root}
		\State $T^{i} = \groupbyagg{\attributes_{i}}\superjoin(\{\botjoin(\R{j}) \mid \R{j}\in C(\R{i}) \})$
		\ElsIf {$\R{i}$ is a leaf}
		\State $T^{i} = \groupbyagg{\attributes_{i}} (\topjoin{(\R{i}}))$
		\Else 
		\State $T^{i} = \groupbyagg{\attributes_{i}}\superjoin(\topjoin{(\R{i})},\{\botjoin(\R{j}) \mid \R{j} \in C(\R{i}) \})$
		\EndIf
		\EndFor

		\Statex IV) Select  the most sensitive tuple
		\State $t^* = \argmax_{t:t\in T^i, i=1,\ldots,m} t.\cnt$
		\State $\LS = t^*.\cnt$ 
		\State \textbf{return} $\LS, t^*$
		\EndProcedure
	\end{algorithmic}
\end{algorithm}
}


\eat{
	and find the most sensitive tuple, for each relation $\R{i}$, we need to construct $\Qgfreq^i(\attributes_\database)$ according to equation (\ref{eq:gj_fast_node_freq}) and (\ref{eq:gj_fast_freq}). These equations include joins among $\topjoin(\R{i})$, $\botjoin(v_{j'}) ~ \forall v_{j'} \in C(\R{i})$ and $\forall \R{i'} \in H(\R{i})$. Although there exists efficient ways to compute these joins, we postpone this to the future work and leave this as an open problem to analyze the tightest parameterized complexity for these equations. Here we use a naive join algorithm as doing merge-sort join one by one in an arbitrary order. }  

\eat{\note{New rule. We need to avoid cross product. This matches the rule in path join.} For the last step, instead of computing $\Qgfreq^i(\attributes_\database)$ with all joins together, we divide joins into several join groups and get disjoint join results in aspect to attributes. Any two join groups share no common attributes from relations inside the group. We later merge them back by choosing max from each cluster. 
	\note{Double Hypertree Width} Within each join group, we arrange the join order according to the hypertree decomposition of relations within the group. Denote double hypertree width as the max hypertree width within each join group.}

\begin{theorem}\label{thm:gj_query}
	Algorithm~\ref{algo:gj_query} computes the local sensitivity of an acyclic CQ and also finds the corresponding most sensitive tuple. Given $m$ tables with $k$ attributes in total, $n$ tuples in the database instance, and a join tree of the query with max degree  $d$, the time complexity is \revm{$O(m ~d~ n^{d}~\log n)$}.
\end{theorem}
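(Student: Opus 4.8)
The plan is to split the claim into a \emph{correctness} part (the algorithm returns $\LS(\Qajoin,\database)$ together with a matching tuple) and a \emph{running-time} part, and to establish correctness in two stages: first that an ``ideal'' multiplicity table captures tuple sensitivity, and then that the iterative formulas \eqref{eq:gj_botjoin_iter}--\eqref{eq:gj_topjoin_iter} actually compute it. For the semantic stage I would define, for each $i$, the ideal table $\hat{T}^i = \groupbyagg{\attributes_i}\!\big(\superjoinagg(\{\R{j}\mid j\neq i\})\big)$, i.e.\ the multiplicity-tracking join of \emph{all relations except} $\R{i}$, grouped by $\attributes_i$. The key lemma is that for any value $v\in\domain^{\attributes_i}$, the $\cnt$ stored against $v$ in $\hat T^i$ equals the number of output tuples (in bag semantics) produced by joining a single tuple $t$ with $t.\attributes_i=v$ against the rest of $\database$. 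From this I would read off both directions of Definition~\ref{def:tuple_sens}: inserting such a $t$ changes $|\CQ(\database)|$ by exactly $\hat T^i.\cnt(v)$ (upward), and deleting one copy of an existing $t$ changes it by the same amount (downward), so $\tsens(t)=\hat T^i.\cnt(v)$ for every $t$ whose value lies in the support of $\hat T^i$, while values outside the support contribute $0$.

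Since this support is exactly the set of representative-domain values (Definition~\ref{def:repr}) that can combine with every other relation, maximizing over $i$ and $v$ gives $\LS(\Qajoin,\database)=\max_i\max_{t\in T^i}t.\cnt$, and the argmax tuple $t^*$ — fully determined because every attribute of $\R{i}$ is shared and hence appears in $v$ — is a most sensitive tuple. This reduces everything to showing that the algorithm's $T^i$ equals $\hat T^i$.

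The main obstacle is this equivalence, i.e.\ showing the aggregations may be pushed inside the join without changing counts. I would prove by induction on the join tree that \eqref{eq:gj_botjoin_iter} computes $\botjoin(\R{i})=\groupbyagg{\attributes_i\cap\attributes_{p(\R{i})}}\!\big(\superjoinagg(\subtree(\R{i}))\big)$ of \eqref{eq:gj_botjoin} (post-order: base case a leaf; step using $\subtree(\R{i})=\{\R{i}\}\cup\bigcup_{\R{j}\in C(\R{i})}\subtree(\R{j})$), and dually that \eqref{eq:gj_topjoin_iter} computes $\topjoin(\R{i})$ of \eqref{eq:gj_topjoin} over $\comptree(\R{i})$. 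The crucial structural fact is the running-intersection property of a join tree: for each attribute $A$ the relations containing $A$ form a connected subtree, so a separator $\attributes_i\cap\attributes_{p(\R{i})}$ (resp.\ $\attributes_{j}\cap\attributes_i$) separates a subtree from the rest of the tree; hence any attribute summed out inside $\subtree(\R{j})$ never reappears elsewhere in the join, which is exactly what licenses replacing $\subtree(\R{j})$ by its grouped summary $\botjoin(\R{j})$ and distributing $\groupbyagg{}$ over $\superjoinagg$ (a sum-product factorization identical in spirit to Yannakakis-style message passing and junction-tree inference). Combining the established $\topjoin(\R{i})$ with the children's $\botjoin(\R{j})$ as in \eqref{eq:gj_fast_node_freq}, and again invoking that $\attributes_i$ is covered by these separators, yields $T^i=\hat T^i$.

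Finally I would bound the running time per step. For $\botjoin(\R{i})$ and $\topjoin(\R{i})$ the join is \emph{anchored} by $\R{i}$ (resp.\ $p(\R{i})$): each already-grouped summary joins each anchor tuple to at most one summary tuple on its separator, so the intermediate result never exceeds $n$ tuples and a node's botjoin/topjoin costs $O(d\,n\log n)$ via sort-merge, i.e.\ $O(m\,d\,n\log n)$ over the tree. The bottleneck is step~III: in $T^i=\groupbyagg{\attributes_i}(\superjoinagg(\topjoin(\R{i}),\{\botjoin(\R{j})\mid \R{j}\in C(\R{i})\}))$ the $\R{i}$ anchor is absent, and in the worst case the up to $d$ separators are pairwise disjoint so the join degenerates to a product of $d$ tables each of size $\le n$, giving $O(n^{d})$ rows and $O(d\,n^{d}\log n)$ work per node. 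Summing over the $m$ nodes yields the stated $O(m\,d\,n^{d}\log n)$; I would close by noting this specializes correctly (e.g.\ a path has $d=2$, and there the product need not be materialized, recovering the $O(n\log n)$ bound of Theorem~\ref{thm:path}).
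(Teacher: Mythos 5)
Your proposal is correct and follows essentially the same route as the paper's (sketched) proof: correctness via the running-intersection property of the join tree to justify pushing the group-by/multiplicity aggregation through the joins (the inductive botjoin/topjoin equivalence), and the runtime split into anchored sort-merge joins for the topjoins/botjoins versus a brute-force $O(d\,n^{d}\log n)$ bound per node for the unanchored step-III joins, summed over $m$ nodes. You supply more detail than the paper — notably the explicit lemma tying the ideal multiplicity table $\hat{T}^i$ to upward/downward tuple sensitivities, which the paper treats as part of the algorithm's derivation rather than the proof — but the underlying argument is the same.
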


\change{
\begin{proof} 	
    (sketch) If two nodes $R_i$ and $R_j$ share a common attribute $A$, then all the nodes  on the path between $R_i$ and $R_j$ in the join tree also contain $A$.
    Hence, the iterative equations \eqref{eq:gj_botjoin_iter} and \eqref{eq:gj_topjoin_iter} correctly compute the botjoin \eqref{eq:gj_botjoin} and topjoin \eqref{eq:gj_topjoin} by tracking multiplicities through common attributes. 
    
    Now we analyze the running time of the algorithm. Notice that all joins in any topjoin equation \eqref{eq:gj_topjoin_iter} and botjoin equation \eqref{eq:gj_botjoin_iter} have at least one common join attribute,  according to the definition of join tree and the fact that the projection of $\topjoin(R_i)$ and $\botjoin(R_i)$ is always the subset of $\attributes_i$ and $\attributes_{p(R_i)}$. For botjoin \eqref{eq:gj_botjoin_iter}, we join relations with $R_i$ one at a time using sort-merge-join and then do groupby count. The size of each join is always $\leq n_i$ since each tuple $R_i$ can join at most one tuple from any botjoin of its children. In total, it takes $O(d_i n_i \log n_i)$ for each botjoin and $O(m n \log n)$ for all botjoins since the summation of $d_i$ is m and $n_i \leq n$.  
    
    \par
    Next we discuss the running time for step III) in Algorithm~\ref{algo:gj_query}. Unlike the computation for topjoins and botjoins, this step requires joining the botjoins of all the children of a node with the topjoin of that node $R_i$, and all these partial joins may not share any attributes in general (although all the join attributes are still subsets of $\attributes_i$).  Hence, for arbitrary  acyclic joins, there can be at most 
    $d-1$ joins in this step for each $R_i$ where $d$ is the max degree in $\tree$, which can be computed in $O(n^{d} \log n^{d})  = O(d n^d \log n)$ time even by the brute force approach. 
	The total time to compute the multiplicity tables $T^i$ for $m$ relations is $O(m d n^{d} \log n)$. 
	Hence, the total time complexity is $O(m d n^{d} \log n)$. 
\end{proof}
}

\eat{
\begin{proof} 	
    (sketch) If two nodes $R_i$ and $R_j$ share a common attribute $A$, then all the nodes  on the path between $R_i$ and $R_j$ in the join tree also contain $A$.
    Hence, by keeping track of the common attributes with $R_i$, 
    in the topjoin of $R_i$ equations equations \eqref{eq:gj_topjoin} and \eqref{eq:gj_topjoin_iter}, and in the topjoin of the children of $R_i$ 
    \eqref{eq:gj_botjoin} and \eqref{eq:gj_botjoin_iter}
    , it keeps all the sufficient attributes 
    to compute the join in equation \eqref{eq:gj_fast_node_freq}. Hence, $T^i$ can be correctly computed from based on the topjoins and botjoins. 
    
    Now we analyze the running time of the algorithm. First note that, in equations \eqref{eq:gj_botjoin_iter} and \eqref{eq:gj_topjoin_iter}, the maximum size of $\topjoin(R_i)$ or $\botjoin(R_i)$ is $O(n)$. This is because the total number of tuples is $n$, hence any 
    $R_i$ has at most $n$ tuples. These two equations project to $\attributes_{i} \cap \attributes_{p(R_i)} \subseteq \attributes_i$, hence the number of tuples in any 
    $\topjoin(R_i)$ or $\botjoin(R_i)$ cannot exceed $n$. Further, the group by and adding $\cnt$ can be implemented in $O(n \log n)$ time for each $R_i$ by sorting. 
    \par
    Now we discuss how the multiple joins in \eqref{eq:gj_botjoin_iter} and \eqref{eq:gj_topjoin_iter} can be implemented in $O(mn \log n)$ time in total. Consider any $R_i$ and consider \eqref{eq:gj_botjoin_iter}. We join $R_i$ with each 
    $\botjoin(R_j)$, $R_j \in C(R_i)$ one at a time. Note from \eqref{eq:gj_botjoin} that the common attributes in $R_i$ and $\botjoin(R_j)$ is a subset of $\attributes_i$, and in $\botjoin(R_j)$ due to the group-by all the attribute values are unique. Hence we can compute the join using sort-merge-join by sorting these two relations on the join attributes in $O(n \log n)$ time. Similarly in \eqref{eq:gj_topjoin_iter}, we join the relations one  at a time with $p(R_i)$. First, the join attributes in $\topjoin(p(R_i))$ and $p(R_i)$ is a subset of $\attributes_{p(R_i)}$, hence the join again can be computed in $O(n \log n)$ time by sort-merge-join. Similarly, the join attributes in $p(R_i)$ and $R_j$ where $R_j \in N(R_i)$, are a subset of $\attributes_{p(R_i)}$, again taking $O(n \log n)$ for each such join. The total time to compute \eqref{eq:gj_botjoin_iter} and \eqref{eq:gj_topjoin_iter}  is $O(d_i n \log n)$ for each $R_i$ where $d_i$ is the degree of node for $R_i$ in the tree. Since the sum of degree over all nodes in a tree is $O(m)$, where $m = $ the number of relations or the number of nodes in the tree, therefore, the time to compute all the topjoins and botjoins in steps I) and II) in Algorithm~\ref{algo:gj_query} $O(mn \log n)$. 
    \par
    Next we discuss the running time for step III) in Algorithm~\ref{algo:gj_query}. Unlike the computation for topjoins and botjoins, this step requires joining the botjoins of all the children of a node with the topjoin of that node $R_i$, and all these partial joins may not share any attributes in general (although all the join attributes are still subsets of $\attributes_i$).  Hence, for arbitrary  acyclic joins, there can be at most 
    $d-1$ joins in this step for each $R_i$ where $d$ is the max degree in $\tree$, which can be computed in $O(n^{d} \log n^{d})  = O(d n^d \log n)$ time even by the brute force approach. 
	The total time to compute the frequency tables $T^i$ for $m$ relations is $O(m d n^{d} \log n)$. 
	Hence, the total time complexity is $O(m d n^{d} \log n)$. 
\end{proof}
}

\reva{Similar to the discussion in Section~\ref{sec:pj_effi}, the computation of botjoins $\botjoin(R_i)$ and {topjoins} $\top(\R{i})$ in Algorithm~\ref{algo:gj_query} is inspired by Yannakakis's algorithm \cite{Yannakakis:1981}, which can track counts of intermediate tuples from the leaves to the root in a bottom-up pass, whereas in the second top-down pass, we need to traverse the join tree to compute the topjoins $\topjoin({R_i})$. 
As explained earlier for path queries, Algorithm~\ref{algo:gj_query} computes changes in the join size for \emph{all possible} tuple deletions and additions, and naively repeating \cite{Yannakakis:1981} to evaluate query on all possible databases formed by adding or removing a tuple does not give the desired complexity. In fact, \cite{Yannakakis:1981} works in near-linear time in the input size $n$ to output the output join size for any acyclic join query (and has polynomial combined complexity), whereas the sensitivity problem is NP-hard in combined complexity even for acyclic queries as stated in Theorem~\ref{thm:combined-nphard}.}

Given the NP-hardness result in Theorem~\ref{thm:combined-nphard}, we next show an example acyclic query that may take $\omega(mn)$ time
for the $T^i$ step. Suppose we have an acyclic query as $\CQ(A,B,C) :- \R{1}(A,B,C), \R{2}(A,B), \R{3}(B,C), \R{4}(C,A)$ and we want to compute the multiplicity table $T^1$ for $\R{1}$. Given botjoins of $\R{2}(A,B)$, $\R{3}(B,C)$ and $\R{4}(C,A)$, we have a cyclic join among them, and in worst the join size is $O(n^{3/2})$ according to the AGM bound \cite{atserias2008size}. In general, if we replace the children with more complex queries, and if the number of relations (or the degree) is larger, the time to compute this join may be larger. \reva{Note that some of the complexity of this problem comes from the bag semantics considered in our model 
(that is also relevant for applications of sensitivity related to differential privacy), as for set semantics, changes in the join size can be computed more efficiently when a tuple is added or removed from a table. However, for bag semantics, adding any tuple, say to $\R{1}$,  may increase the sensitivity significantly for $\CQ$ (product of the multiplicities of the edges forming the triangle), which adds to the complexity.}

\vspace{-2mm}
\subsection{Doubly Acyclic Join Queries}\label{sec:double-acyclic}


For an acyclic query, if there exists a  join tree $\mathcal{T}$ constructed from the GYO decomposition such that for each node $R_{i}$ in  $\mathcal{T}$, the join between its parent $p(\R{i})$ and its children $C(\R{i})$ is also acyclic, then we say this query is a \emph{doubly acyclic join query}. 
Given this property, the computation of the multiplicity table $T_i$ for $R_i$ involves an acyclic join between the topjoin and botjoins and hence has a time complexity $O(d_i n_i \log n_i)$, where $d_i$ is the node degree of $\R{i}$ in $\mathcal{T}$. Since the sum of all node degrees is $m$ and $n_i \leq n$, the total time complexity to compute $T^i$ for all nodes is $O(m n \log n)$. \revm{When $d_i$ is a constant, such as at most 2, the complexity is written as $O(n \log n)$}, which also matches the total runtime of Algorithm~\ref{algo:gj_query} including the computation of topjoins and botjoins. 


Notice that a path join query is a special case of doubly acyclic join query, because
for each $\R{i}$, 
$\topjoin(\R{i})$ and $\botjoin(\R{i+1})$ (assuming $R_{i+1}$ is the child of $R_i$ in 
$\tree$)
share no attributes.
and therefore is an acyclic join. 
The time complexity of path join queries in  Algorithm~\ref{algo:ls_pathquery} also matches the time complexity of doubly acyclic join queries.

\subsection{Extensions}\label{sec:extensions}
\change{
In this section we briefly discuss how to extend our framework relaxing the assumptions listed at the beginning of Section \ref{sec:general}, and we defer the details in a full version.

\textit{Selections:} We can easily extend Algorithm~\ref{algo:gj_query} to handle queries 
with arbitrary selection conditions (that can be applied to each tuple individually in any relation) in the body of the query by assigning 0 sensitivity to the tuples fail the selection condition. 

\textit{Disconnected join trees:} If the hypergraph of a query is not connected, Algorithm~\ref{algo:gj_query} can be applied to each join tree and merge them back to update each tuple sensitivity. 

\textit{General joins}: For a non-acyclic join query, if there exists a \emph{generalized hypertree decompositions}\cite{adler2007hypertree} such that we can find a join plan from this tree by assigning each relation to one node, Algorithm \ref{algo:gj_query} can be extended by computing the multiplicity table as including other relations within the same node, and the time complexity is parameterized by the max number $p$ of relations within a node as $O(m pd n^{pd} \log n)$. This is implemented in our experiments; q3 from TPC-H queries, and $\qtri$, $\qcycle$ from Facebook queries are all non-acyclic queries and their generalized hypertree decompositions are shown in Figure \ref{fig:queries}.


\textit{Efficient approximations:} We can extend our algorithm to tradeoff accuracy in the sensitivity for better runtime. As our experiment will show, the multiplicity tables that topjoins and botjoins compute can grow quadratically or faster in the input size depending on the query. To make the computation scalable, we can maintain the top $k$ frequent values instead of all the frequencies in the top and botjoins. We can set the frequencies of the rest of the active values in the top and botjoins to the kth largest frequency. This approach gives an upper bound of  tuple sensitivity but can speed up runtime.

\textit{Self Joins:} Acyclic join queries with self-joins can not be captured by our algorithms, because we only allow a relation to appear once in the query. For each relation, we compute the joins for the rest of relations to summarize how tuples from this relation can affect the full join. A possible workaround is to join the repeated base relations as a single and combined relation, run our algorithm, and then link the effect of adding or removing a tuple from the base relation to the combined relation and the effect of adding or removing a tuple from the combined relation to the rest. However, it is challenging to find all possible insertions to the base relation that allows the combined relation to join all possible pairs of "incoming" and "outcoming" path. We defer this line of research to future work. 

\revc{
\textit{Other:} For attributes that appear only once in the query body, we ignore them in Algorithm~\ref{algo:gj_query} but in the end we extrapolate a value for these attributes. 
}
}

\eat{
\hspace{-10pt}\textbf{Extensions to improve query expressibility:}
\vspace{-5pt}
\squishlist
\item \textbf{Non-acyclic join queries} If there exists a \emph{generalized hypertree decompositions}\cite{adler2007hypertree} for a non-acyclic join query such that we can find a join plan from this tree by assigning each relation to one node, Algorithm \ref{algo:gj_query} can be extended by computing the frequency table as including other relations within the same node, and the time complexity is parameterized by the max number $p$ of relations within a node as $O(m pd n^{pd} \log n)$. 

\item  Algorithm~\ref{algo:gj_query} can also handle queries  with arbitrary selection conditions (that can be applied to each tuple individually in any relation) in the body of the query with an extra rule that tuples which cannot satisfy the selection condition has tuple sensitivity as 0.

\item If the hypergraph of query is not connected, Algorithm~\ref{algo:gj_query} can be applied to each join tree and merge them back to update each tuple sensitivity. For attributes that appear only once in the query body, we ignore them in Algorithm~\ref{algo:gj_query} but in the end we extrapolate a value for these attributes. 

\item  Self-joins. Acyclic join queries with self-joins can not be captured by our algorithms, because we only allow one relation appear once in the query. For each relation, we compute the joins for the rest of relations to summarize how tuples from this relation can affect the full join. A possible workaround is to join the repeated base relations as a single and combined relation, run our algorithm, and then link the effect of adding or removing a tuple from the base relation to the combined relation and the effect of adding a removing a tuple from the combined relation to the rest. However, it is not clear what is the representative domain in this case, which makes the max upward tuple sensitivity unknown.

In conclusion, with extensions Algorithm~\ref{algo:gj_query} can handle a more general class of full conjunctive queries (including selections) with a parameterized polynomial time complexity.
\squishend

\hspace{-10pt}\textbf{Extensions to improve runtime.}
\vspace{-5pt}
\squishlist
   
    \item Top k frequency tables. As the experiment shows, the frequency tables that topjoins and botjoins compute can grow quadratically to the input size or even more depending on the queries. To make it scalable, instead of keeping the full frequency table, we can only keep the top k entries and assume that the rest in the active domain all have frequency 1 and others have frequency 0. This gives an upper bound of the tuple sensitivity, which provides a knob on scalability and tightness.
    \item Infer tuple sensitivities through database constraints. Constraints like functional dependencies can imply tuple sensitivities without computing the frequency table. For example, for query Q(A,B,C) :- R1(\underline{A}), R2(\underline{B}, A), R3(\underline{C}, B) where underlined attributes are keys, the tuple sensitivities for tuples from R3 is always at most 1, since it can at most join one tuple in R2, and this tuple in R2 can at most join one tuple in R1. 
\squishend
}

	

\section{Use in Differential Privacy}\label{sec:app_dp}

In this section, we will show how to use our algorithm \TSens (section \ref{sec:general-algo}) for computing sensitivity to develop accurate differentially private algorithms. Section~\ref{subsec:dp} gives a brief overview of differential privacy (DP), and Section~\ref{subsec:dpalgo} discusses how the tuple sensitivity measures can be used to develop accurate DP algorithms.  



\subsection{Differential Privacy}\label{subsec:dp}

Differential privacy (DP)~\cite{Dwork:2006:DP:2097282.2097284} is considered the gold standard for private data analysis. An algorithm satisfies DP if its output is insensitive to adding or removing a tuple in the input database.  Formally, 

\begin{definition}[Differential Privacy]
A mechanism $\mathcal{M}: I \rightarrow \Omega$ is $\epsilon$-differentially private if for any two neighbouring relational database instances $\database, \database'\in I$ and $\forall O \subseteq \Omega$:
$$ \left|\ln (\Pr[\mathcal{M}(\database) \in O]/\Pr[\mathcal{M}(\database') \in O]) \right| \leq \epsilon $$
\end{definition}

When $\database$ is a single relation, all neighboring relations are of the form $\database' = \database - \{t\}$. When $\database$ is a multi-relational database with foreign key constraints, then a neighboring instance $\database'$ is gotten by deleting one tuple $t$ in $\database$'s \textit{primary private relation} and cascadingly deleting other tuples that depend on $t$ through foreign keys \cite{kotsogiannis2019privatesql}.  

\revm{
Differential privacy has been successfully used to publish summary statistics, synthetic data, machine learning models, and answer SQL queries 
\cite{qardaji2014priview, xiao2010differential, abadi2016deep, vaidya2013differentially, mcsherry2009privacy, johnson2018towards,kotsogiannis2019privatesql} . 
It has also been adopted at government \cite{abowd2018us} and commercial organizations 
\cite{erlingsson2014rappor, Wilson:2019GoogleDP, bittau2017prochlo, apple17, Barak}.
}


\paratitle{Laplace Mechanism} is a fundamental building block of DP algorithms \cite{Dwork:2014:AFD:2693052.2693053}. It answers a query $\CQ$ by adding noise drawn from a Laplace distribution scaled to the ratio of the \textit{global sensitivity} of $\CQ$ and the privacy loss parameter $\epsilon$. 


\begin{definition}[Global Sensitivity]
    Given a counting query $\CQ : I \rightarrow \mathcal{R}$, the global sensitivity $GS$ is defined as the max difference of query result from any two neighbouring relational database instances $\database, \database'\in I$ :
    $$\GS(\CQ) = \max_{(\database'-\database)\cup(\database-\database') =\{t\}} \left| \CQ(\database) - \CQ(\database') \right| $$
\end{definition}

\change{Unlike the local sensitivity of a query which depends on the given database instance, the global sensitivity of a query finds the largest possible local sensitivity among all possible database instances.} \revb{Consider the join query in Figure~\ref{fig:full_conjunctive}(b), Example~\ref{eg:localsens} shows that it has a local sensitivity of 4 on the database instance shown in Figure~\ref{fig:full_conjunctive}(a). However, there exist other database instances with a much larger local sensitivity than 4. For example, if $R_2$ in Figure~\ref{fig:full_conjunctive}(a) has 1000 copies of $(a_1,b_1,d_1)$ which results in 1000 copies of $(a_1,b_1,c_1,d_1,e_1,f_1)$ in the join output, removing $(a_1,b_1,c_1)$ from $R_1$ can result in a change of 1000 in the output size.
If there is no \textit{a priori} bound on the number of tuples that share the same join key, the global sensitivity of the query will be unbounded.}

\begin{definition}[Laplace Mechanism]
    Given a counting query $\CQ : I \rightarrow \mathcal{R}$, a database instance $\database \in I$ and a privacy parameter $\epsilon$, the following noisy query result satisfies $\epsilon$-DP:
    \change{$\CQ(\database) + \eta$, where $\eta\sim exp(-\frac{|\eta|\cdot \epsilon}{GS(Q)})$.  }
\end{definition}

\change{The noise $\eta$ has a mean 0 and a variance of $2GS(Q)^2/\epsilon^2$ which increases with the global sensitivity of the query. Hence, this mechanism cannot be directly applied to query with unbounded global sensitivity. Prior work for general join queries either have high performance cost~\cite{NodeDP:TCC2013, blocki2013differentially, DBLP:conf/sigmod/ChenZ13} or suffer from poor accuracy~\cite{johnson2018towards}. One effective and general purpose technique from prior work is \textit{truncation} that executes the query $\CQ$ on a truncated version of the database $T(\database)$ \cite{mcsherry2009privacy, kotsogiannis2019privatesql}. The truncation is done in such a way that $\CQ(T(\cdot))$ has a bounded global sensitivity. For a join, this might mean removing rows from the database such every join key has a bounded selectivity. We will next show a truncation based algorithm for answering SQL aggregation queries with joins based on the tuple sensitivities.
}

\eat{
Note that the sensitivity computed by \TSens is \textit{not} the global sensitivity, but rather the \textit{local} sensitivity on the input database $\database$. 

\paratitle{Aggregate queries expressed as SQL with Joins have \textit{unbounded} global sensitivity}. This is because there is typically no \textit{a priori} bound on the number of tuples that share the same join key. Hence, the Laplace mechanism can not be directly applied. 
Prior work takes two approaches to answering queries with joins under DP -- smooth sensitivity \cite{johnson2018towards} and Lipshitz extensions \cite{NodeDP:TCC2013, blocki2013differentially, DBLP:conf/sigmod/ChenZ13, kotsogiannis2019privatesql}. The former approach tunes noise to a smooth upper bound on the local sensitivity. However, the smooth sensitivity approach was shown to be too inaccurate \cite{kotsogiannis2019privatesql}. 

The Lipshitz extension approach is to rewrite the original query $\CQ$ that has an unbounded sensitivity into a different query $\CQ'$ that (a) has a bounded global sensitivity, and (b) evaluates to roughly the same answer. These techniques are superior to smooth sensitivity based techniques as they introduce some bias (as $\CQ(\database) \neq \CQ'(\database)$) to reduce the noise added to the answer. Many Lipshitz extension techniques are query specific \cite{NodeDP:TCC2013, blocki2013differentially} or fairly inefficient in practice \cite{DBLP:conf/sigmod/ChenZ13}. One class of efficient and general purpose Lipshitz extension techniques is \textit{truncation} that execute the query $\CQ$ on a truncated version of the database $T(\database)$ \cite{mcsherry2009privacy, kotsogiannis2019privatesql}. The truncation is done in such a way that $\CQ(T(\cdot))$ has bounded global sensitivity. For a join, this might mean removing rows from the database such every join key has a bounded selectivity. 
 We will next show a truncation based algorithm for answering SQL aggregation queries with joins based on the tuple sensitivities computed by \TSens. 
}

\subsection{Truncation mechanism with \TSens}\label{subsec:dpalgo}



The idea behind our algorithm is to (a) identify tuples in the database (i.e., in the primary private relation) that have a sensitivity greater than a \textit{sensitivity threshold}, and (b) remove all tuples with sensitivity greater than the sensitivity threshold. 

\begin{definition}[\TSens Truncation]
	Given a query $\CQ$, a database $\database$ with primary private relations $\mathcal{P\relation}$, and a sensitivity threshold $i$, the truncation operator $\TruncTSens$ transforms the database as:
	$$\TruncTSens(\CQ, \database, i) = \left\{ t \in \database \mid t \in \mathcal{P\relation} \Rightarrow \tsens(t, \CQ, \database) \leq i \right\} $$
\end{definition}

The global sensitivity of $\CQ(\TruncTSens(\CQ, \cdot, \tau))$ is $\tau$. If we add or remove a tuple with sensitivity more than $\tau$, the query result does not change as the new tuple will be truncated or has already been truncated. Since the largest possible tuple sensitivity is $\tau$ for any database, the global sensitivity is $\tau$. \change{Hence, given a join query $Q$ with high global sensitivity, we can first apply $\CQ(\TruncTSens(\CQ, \cdot, \tau))$ to the database and then apply Laplace mechanism with smaller noise (due to smaller global sensitivity) on the transformed database. However, the transformed database also introduces bias if too many tuples are truncated. Hence, we would like to find a truncation threshold that minimizes the expected sum of bias and noise.
}

\textbf{Finding truncation threshold.~} \change{ If setting $\tau$ to be the local sensitivity of the query $Q$, then $\CQ(\database) = \CQ(\TruncTSens(\CQ, \cdot, \tau))$, i.e., no bias is introduced. However, using local sensitivity directly violates DP. }
Moreover, the global sensitivity of querying the local sensitivity of a join query is unbounded, we cannot use Laplace mechanism to release a noisy local sensitivity. Instead, line in \PrivSQL \cite{kotsogiannis2019privatesql}, we apply the \textit{sparse vector technique} (SVT) \cite{SVT:VLDB2017} to find the optimal truncation threshold \change{that is close to the local sensitivity.}

\eat{
\begin{algorithm}[t]
	\small
	\caption{Learning truncation threshold ($\CQ$, $\database$, $l$, $\epsilon_{tsens}$) 
	} 
	\label{algo:trunc_threshold}
	\begin{algorithmic}[1]
		\Statex $\epsilon_1 \gets \epsilon_{tsens} / 10$
		\Statex $\epsilon_2 \gets \epsilon_{tsens} - \epsilon_1$
		\Statex $\hat{Q} \gets Q(\database) + Lap(l/\epsilon_1)$
		\Statex $q \gets \{\frac{\CQ(\TruncTSens(\CQ, \database, i)) - \hat{\CQ}}{i} \mid i = 1, \ldots, l-1 \}$  
		\Statex Set $\tau \gets SVT(q, 0, \epsilon_2)$
	\end{algorithmic}
	\label{algo:trunc}
\end{algorithm}
}

For a query $\CQ$ and a database $\database$, let $\ell$ be an upper bound on the local sensitivity. We first release a noisy version of $\CQ(\TruncTSens(\CQ, \database, l))$ as $\hat{\CQ}$ using the Laplace mechanism with global sensitivity as $\ell$. Next, we run the SVT method that checks whether $q_i > 0$ for $i = 1, \ldots, \ell-1$, where 
{\small 
\begin{align*}
q_i = \frac{\CQ(\TruncTSens(\CQ, \database, i)) - \hat{\CQ}}{i} 
\end{align*}
}
Since the global sensitivity of $\CQ(\TruncTSens(\CQ, \database, i))$ is $i$, the global sensitivity of each $q_i$ is a constant 1.  SVT stops the first time (noisy) $q_i$ is above the (noisy) threshold 0 and reports $i$. We take this $i$ as the truncation threshold $\tau$, and answer the query $\CQ$ using $\CQ(\TruncTSens(\CQ, \database, i))$. 
\change{A part of the privacy budget $\epsilon_{tsens}$ is used to release $\hat{Q}$ and run SVT for finding the truncation threshold $\tau$. The rest $\epsilon - \epsilon_{tsens}$ is used to answer the query.}

\begin{theorem}
	 The algorithm that finds the truncation threshold satisfies $\epsilon_{tsens}$-DP and releasing a noisy answer as $\CQ(\TruncTSens(\CQ, \database, \tau)) + Lap(\frac{\tau}{\epsilon - \epsilon_{tsens}})$ satisfies $(\epsilon-\epsilon_{tsens})$-DP. Together the mechanism  satisfies $\epsilon$-DP.
\end{theorem}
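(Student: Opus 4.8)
The plan is to view the full \TSensDP mechanism as an \emph{adaptive composition} of two stages and bound the privacy loss of each stage separately, then invoke sequential composition and the post-processing property of differential privacy. Stage one is the threshold-finding algorithm, which outputs $\tau$; stage two releases the truncated query answer with noise calibrated to $\tau$. The facts I would lean on are: (i) the global-sensitivity equality already established in this section, namely $\GS(\CQ(\TruncTSens(\CQ, \cdot, i))) = i$ for every threshold $i$; (ii) the Laplace mechanism, which is $\epsilon'$-DP when noise of scale $\GS/\epsilon'$ is added; (iii) the privacy guarantee of SVT for a stream of sensitivity-bounded queries against a fixed threshold; and (iv) basic sequential composition.

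For stage one I would split the budget as $\epsilon_1 + \epsilon_2 = \epsilon_{tsens}$ (e.g.\ $\epsilon_1 = \epsilon_{tsens}/10$). First, releasing $\hat{\CQ} = \CQ(\TruncTSens(\CQ, \database, \ell)) + Lap(\ell/\epsilon_1)$ is $\epsilon_1$-DP, since $\GS(\CQ(\TruncTSens(\CQ, \cdot, \ell))) = \ell$ and the noise scale is $\ell/\epsilon_1$. Next, conditioned on the released value $\hat{\CQ}$ (which may be treated as a public constant by post-processing), each query in the SVT stream is $q_i = \big(\CQ(\TruncTSens(\CQ, \database, i)) - \hat{\CQ}\big)/i$; the constant $\hat{\CQ}/i$ does not affect sensitivity, and dividing the sensitivity-$i$ quantity $\CQ(\TruncTSens(\CQ, \database, i))$ by $i$ yields a query of global sensitivity exactly $1$. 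Running SVT with threshold $0$ over these unit-sensitivity queries and reporting only the first index above threshold is therefore $\epsilon_2$-DP, so sequential composition of the two sub-steps gives $\epsilon_1 + \epsilon_2 = \epsilon_{tsens}$-DP for stage one.

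For stage two I would fix an arbitrary possible output $\tau$ of stage one and argue that the map $\database \mapsto \CQ(\TruncTSens(\CQ, \database, \tau)) + Lap\!\big(\tau/(\epsilon - \epsilon_{tsens})\big)$ is $(\epsilon - \epsilon_{tsens})$-DP: by the established global-sensitivity fact $\GS(\CQ(\TruncTSens(\CQ, \cdot, \tau))) = \tau$, so adding Laplace noise of scale $\tau/(\epsilon-\epsilon_{tsens})$ attains privacy loss $\tau \big/ \big(\tau/(\epsilon-\epsilon_{tsens})\big) = \epsilon - \epsilon_{tsens}$. The essential point is that this bound is \emph{uniform} in $\tau$: the noise scale grows with $\tau$, but so does the global sensitivity, and the ratio is always $\epsilon - \epsilon_{tsens}$. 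This uniformity is exactly what licenses the use of adaptive composition across the two stages.

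Finally I would assemble the pieces with the adaptive (sequential) composition theorem: stage one is $\epsilon_{tsens}$-DP, and for every realization $\tau$ of its output stage two is $(\epsilon-\epsilon_{tsens})$-DP, hence the composed mechanism is $\epsilon$-DP, since $\epsilon_{tsens} + (\epsilon-\epsilon_{tsens}) = \epsilon$. The main obstacle, and the place needing the most care, is the data-dependence of $\tau$ in stage two: because the noise scale $\tau/(\epsilon-\epsilon_{tsens})$ is drawn from a quantity produced by the (private) first stage, one cannot treat stage two as a fixed mechanism, and the argument must be phrased as a uniform-over-$\tau$ guarantee routed through adaptive composition rather than as a single Laplace mechanism. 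A secondary subtlety worth flagging is that $\TruncTSens$ truncates according to sensitivities computed on the \emph{actual} database, so the equality $\GS(\CQ(\TruncTSens(\CQ,\cdot,i)))=i$ must hold under this data-dependent truncation rule; I would treat this as given from the preceding paragraph, where it is justified by the fact that any inserted or deleted tuple whose sensitivity exceeds the threshold is, or would be, truncated and hence cannot change the answer.
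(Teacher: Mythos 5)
Your proposal is correct and takes essentially the same approach as the paper, which in fact states this theorem without a formal proof and relies only on the surrounding assertions — that $\GS(\CQ(\TruncTSens(\CQ,\cdot,i)))=i$ for every threshold $i$, that each SVT query $q_i$ therefore has unit sensitivity, and that the budget splits as $\epsilon_{tsens} + (\epsilon-\epsilon_{tsens})$. Your two-stage argument (Laplace release of $\hat{\CQ}$ plus SVT for stage one, a uniform-in-$\tau$ Laplace guarantee for stage two, assembled by adaptive composition) is precisely the formalization of that sketch, and your explicit treatment of the data-dependent noise scale and of the data-dependent truncation rule supplies rigor the paper leaves implicit.
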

\vspace{-1em}

\paratitle{Discussion.~}  
\change{
Our solution is inspired by Wilson et al \cite{Wilson:2019GoogleDP}, but they handle  can only handle a single join (and not self joins), while we can handle a wider sub-class of full conjunctive queries without self joins. Moreover, Wilson et al set the sensitivity threshold manually, while we automatically identify the threshold given an estimated upper bound.  

Our algorithm truncates primary private tables while in \PrivSQL \cite{kotsogiannis2019privatesql}, truncation happens at non-primary private tables. \PrivSQL truncates tuples with high frequencies, but it doesn't mean that they join with the tuple of the highest tuple sensitivity. In contrary, truncation by tuple sensitivity is a finer truncation strategy which reduces global sensitivity and bias at the same time. 

Our algorithm for finding the most sensitive tuples can be easily extended for $\TSens$ by storing the multiplicity table for the primary private table. Our truncation algorithm takes in estimated upper bound of tuple sensitivity $\ell$. Our algorithm will still ensure DP regardless of the value for $\ell$, but the value of $\ell$ can affect the accuracy. We illustrate the impact of $\ell$ on the accuracy in the evaluation.
}

\eat{
\squishlist
\item Our solution is inspired by Wilson et al  \cite{Wilson:2019GoogleDP}, but significantly generalizes it. Wilson et al can only handle a single join (and not self joins), while we can handle a wider sub-class of full conjunctive queries without self joins. 
Moreover, Wilson et al set the sensitivity threshold manually, while we automatically identify the threshold given an estimated upper bound. 

\item In \PrivSQL \cite{kotsogiannis2019privatesql}, truncation happens at non-primary private tables while in this paper it happens at primary private tables. \PrivSQL truncates tuples with high frequencies, but it doesn't mean that they join with the tuple of the highest tuple sensitivity. In contrary, truncation by tuple sensitivity is a finer truncation strategy which reduces global sensitivity and bias at the same time.

\item Truncation by tuple sensitivity requires knowing each tuple sensitivity in the primary private tables. \TSens not only finds the most sensitive tuple, but also summarize the distribution of tuple sensitivity for the full domain. The algorithm preserves several frequency tables for each relation, so each tuple sensitivity of tuples in the live table can be computed by a minor extension of the algorithm.

\item Our truncation algorithm assumes an estimated upper bound of tuple sensitivity $\ell$ is given. Whether $\ell$ is larger or smaller than the true local sensitivity, our algorithm will still ensure differential privacy. If $\ell$ is much larger than the true local sensitivity, then we will estimate the bias incorrectly. If $\ell$ is much smaller than the true local sensitivity then we may truncate too much of the data. 
%
\squishend
}

\section{Experiments}\label{sec:experiments}

We evaluate the efficiency and accuracy of \TSens. Experiments are designed to answer following questions:
\squishlist
\item How tight is the local sensitivity computed by \TSens compared to other algorithms like elastic sensitivity~\cite{DBLP:journals/corr/JohnsonNS17}?
\item How does \TSens' runtime compared to that of (a) the elastic sensitivity algorithm and (b) query evaluation?
\item Does the truncation with \TSens mechanism result in more accurate differentially private query answering than prior work like \PrivSQL \cite{kotsogiannis2019privatesql}?
\squishend
\eat{
Although it is proved that \TSens achieved the tightest local sensitivity, it is unknown how much it improves the bound of local sensitivity and how much it increases the time usage in practice. What's more, since \TSens can also track tuple sensitivity as a side product, it can truncate the database to be of a fixed global sensitivity and at the meanwhile guarantee a low bias, but it is unknown to which extent it improves the utility. 
}

\revm{We use synthetic datasets from TPC-H benchmark \cite{TPCH} and real world datasets of Facebook ego-networks from SNAP~\cite{snapnets} and designed seven full conjunctive queries with different query complexities to evaluate the performance of \TSens. }These queries are also used to evaluate the performance of DP mechanism supported by \TSens. The results are compared with the sensitivity engine \Elastic from `Flex' \cite{DBLP:journals/corr/JohnsonNS17} and with the 
differentially private SQL answering engine \PrivSQL from `PrivateSQL' \cite{kotsogiannis2019privatesql}. We name our sensitivity algorithm as \TSens and its DP application as \TSensDP. 

\eat{
\TSens is implemented by 1.5k lines of python and it uses Postgres as the database engine. We extend 'Elastic Sensitivity' so it can compute the sensitivity of a cross product. As to the DP experiment, we also borrow the concept of policy and primary private relation from 'PrivateSQL' and assume the query has already been rewritten according to the policy. 
}

A summary of our key findings: 
\squishlist
	\item \TSens achieves at most 2,200,000 times smaller local sensitivity compared to \Elastic for a simple cyclic query for a database with 866,602 tuples.
	\item \TSens has \revb{on average} 80\% - 320\% overhead compared to query evaluation for different queries. It is 2 - 60 times slower than \Elastic, but returns a local sensitivity value that is 6 - 60,000 times smaller \revb{on average}.
	\item \revm{\PrivSQL has more than 99\% relative error (almost worse than just returning 0 as the answer) for four of the seven queries. \TSensDP answers 8 queries with $\leq 8\%$ relative error and the last query with $\leq 20\%$ relative error.}
\squishend

\subsection{Setup}

\begin{figure}[b]
\begin{subfigure}[b]{0.23\textwidth}
    \includegraphics[width=\linewidth]{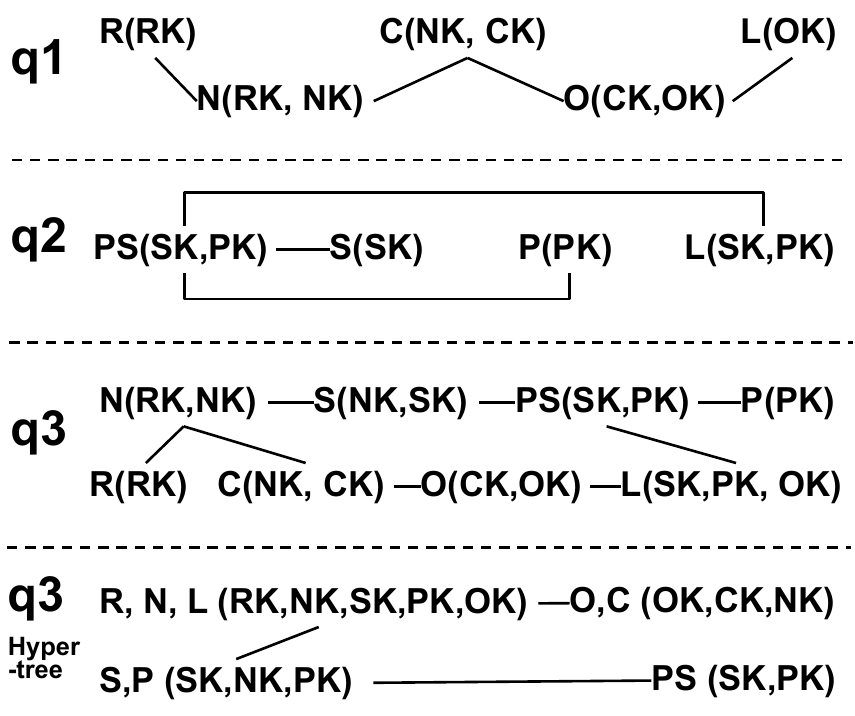}
    \caption{TPC-H queries}
    \label{fig:tpch_queries}
    \vspace{-5pt}
\end{subfigure}
\begin{subfigure}[b]{0.23\textwidth}
    \includegraphics[width=\linewidth]{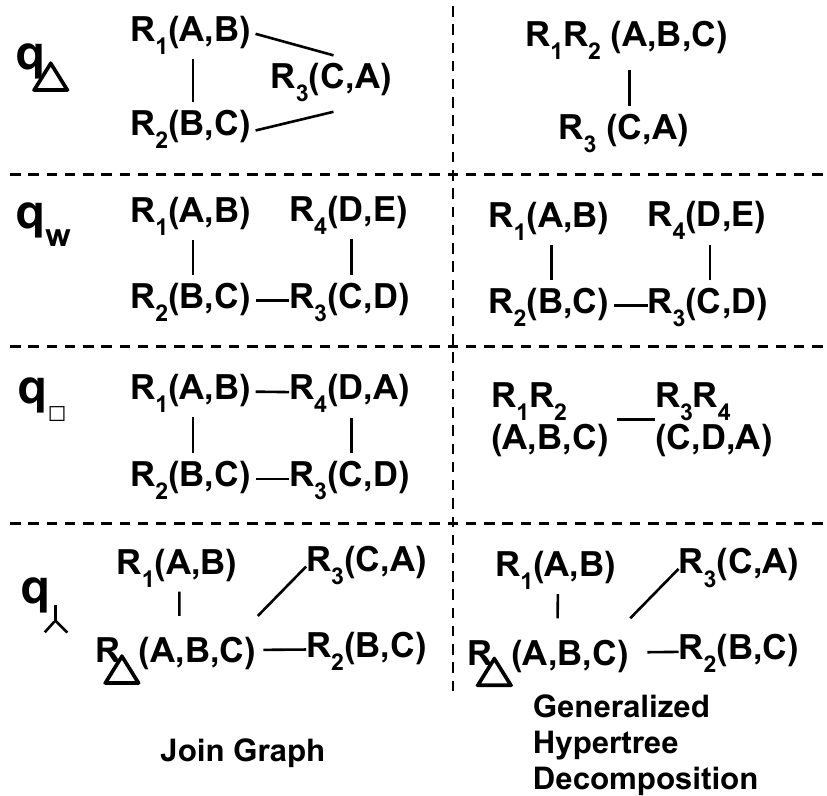}
    \caption{\revm{Facebook queries}}
    \label{fig:facebook_queries}
    \vspace{-5pt}
\end{subfigure}%
\vspace{-5pt}
\caption{\change{The join plan for each query.} 
}
\label{fig:queries}
\vspace{-10pt}
\end{figure}

\eat{
\begin{figure*}[t!]
	\centering
	\begin{subfigure}[b]{0.15\textwidth}
		\centering
		\includegraphics[width=\textwidth]{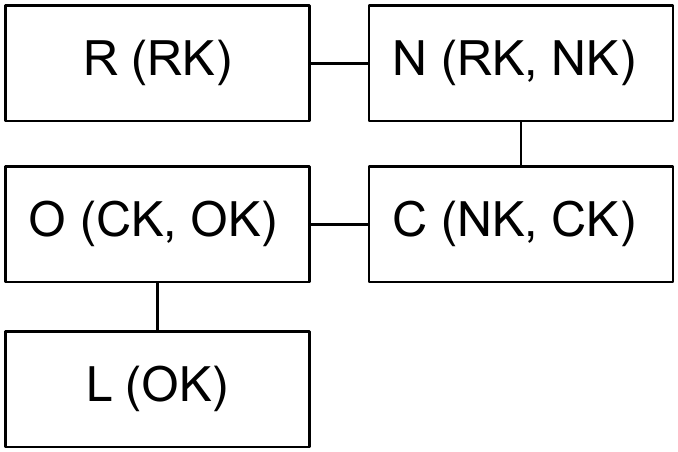}
		\caption{q1 join plan}
		\label{fig:tpch_q1_graph}
	\end{subfigure}~~~
	\begin{subfigure}[b]{0.14\textwidth}
		\centering
		\includegraphics[width=\textwidth]{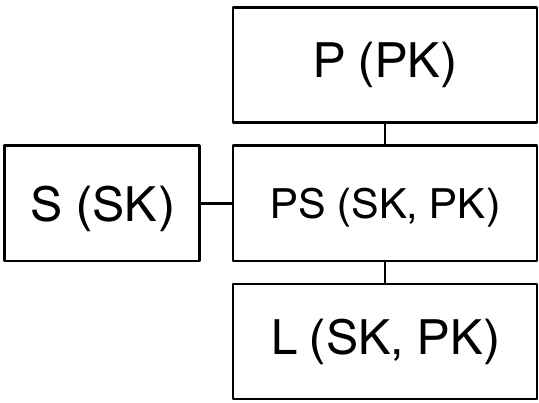}
		\caption{q2 join plan}
		\label{fig:tpch_q2_graph}
	\end{subfigure}~~~
	\begin{subfigure}[b]{0.25\textwidth}
		\centering
		\includegraphics[width=\textwidth]{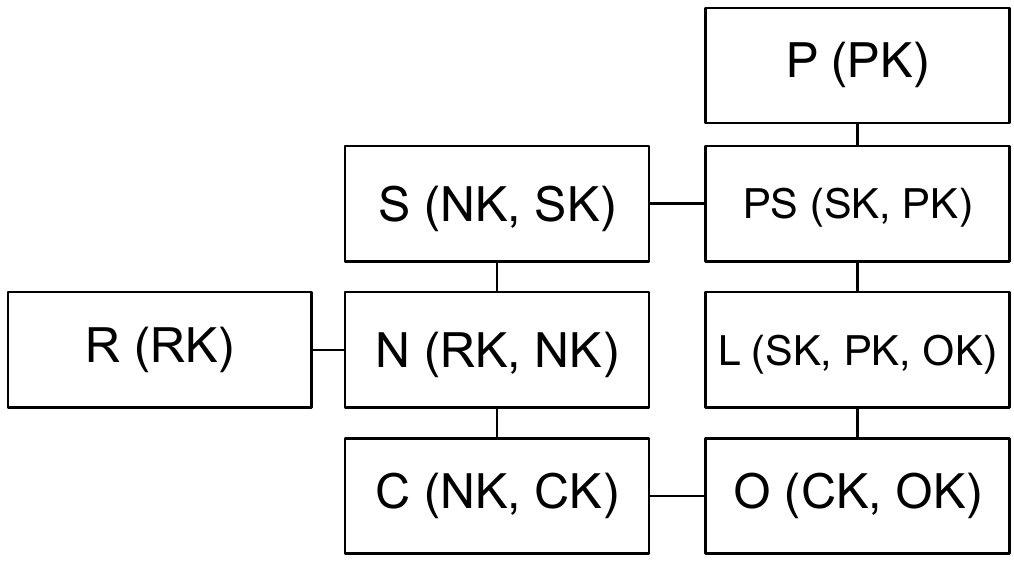}
		\caption{q3 join graph}
		\label{fig:tpch_q3_graph}
	\end{subfigure}~~~
	\begin{subfigure}[b]{0.25\textwidth}
		\centering
		\includegraphics[width=\textwidth]{{"TPSens_-_q3_-_join plan"}.pdf}
		\caption{q3 join plan}
		\label{fig:tpch_q3_tree}
	\end{subfigure}
	\caption{TPC-H query graphs and join plans 	}
	\vspace{-10pt}
\end{figure*}
}


\paratitle{Dataset}. \revm{We evaluate our algorithms on synthetic TPC-H datasets~\cite{DBGEN} and real world Facebook dataset~\cite{snapnets}.

\emph{TPC-H}. \change{We consider synthetic datasets generated from TPC-H benchmark~\cite{DBGEN} with the following schema:\\
Attributes: {\small \texttt{RegionKey}(RK), \texttt{NationKey}(NK), \texttt{CustKey}(CK), \texttt{OrderKey}(OK), \texttt{SuppKey}(SK), \texttt{PartKey}(PK)} \\
Relations:
{\small \texttt{Region}(R:RK), \texttt{Nation}(N:RK,NK), \texttt{Customer}(C:NK,CK), \texttt{Orders}(O:CK,OK), \texttt{Supplier}(S:NK,SK), \texttt{Part}(P:PK), \texttt{Partsupp}(PS:SK,PK), \texttt{Lineitem}(L:OK,SK,PK)}.
}

We evaluate the scalability of our algorithm on TPC-H datasets at different scales $\{0.0001, 0.001, 0.01, 0.1, 1, 2, 10\}$. 
At scale 1, the sizes of for these relations are 5, 25, 1e4, 1.5e5, 2e5, 8e5, 1.5e6, 6e6 respectively. 
The same schema and datasets were used to evaluate prior work on differentially private SQL query answering \cite{DBLP:journals/corr/JohnsonNS17, kotsogiannis2019privatesql}.
}

\revm{\emph{Facebook}. We use the Facebook ego-networks from SNAP (Stanford Network Analysis Project)~\cite{snapnets}. An ego-network of a user is a set of ``social cirles'' formed by this user's friends~\cite{leskovec2012learning}.
This dataset consists 10 ego-networks, 4233 circles, 4039 nodes and 88234 edges. We choose the ego-network of user 348 who has 567 circles, 225 nodes and 6384 edges, create edge tables $E_i(x,y)$ for each circle $i$ such that both users of each edge is from the circle $i$ and sort them by table size in descending order. We further create tables $R_1(x, y)$, $R_2(x, y)$, $R_3(x, y)$, $R_4(x, y)$ and insert $E_j$ into $R_i$ if the rank of $E_j$ mod 4 = i. We also create a 3-column table $R_{\triangle}(x, y, z) :- R_4(x,y), R_4(y, z), R_4(z, x)$ as a triangle table. All edges are bi-directed.} 


\begin{figure*}[t!]
	\centering
		\begin{subfigure}[b]{0.44\textwidth}
		\centering
        	\includegraphics[width=.8\textwidth, trim={0 0 10 0}, clip]{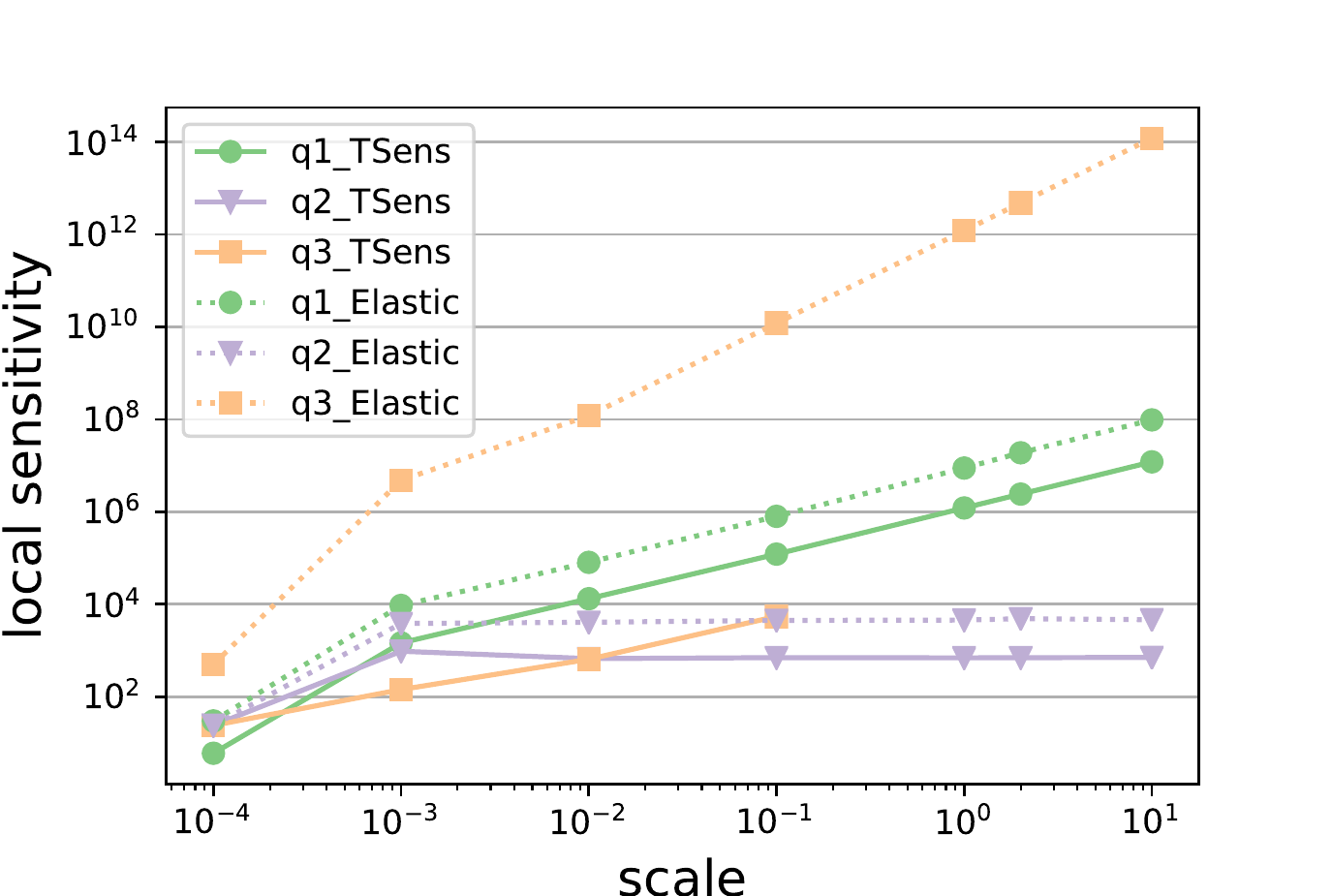}
        	\caption{Local sensitivity reported by \TSens and \Elastic for queries q1, q2 and q3 on datasets of differing scales }
        	\label{fig:sens_change_by_n}
        \end{subfigure} \ 
        \begin{subfigure}[b]{0.52\textwidth}
        	\centering
        	\includegraphics[trim={45, 120, 45, 50}, clip, width=.8\textwidth]{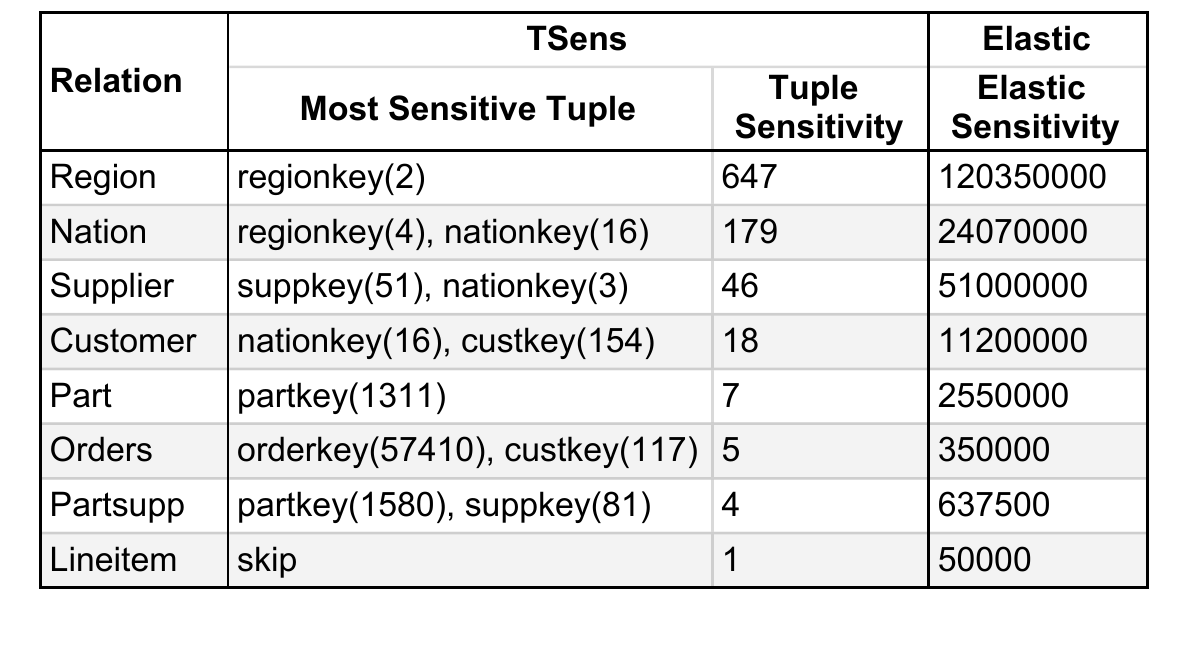}
        	\vspace{-10pt}
        	\caption{Most sensitive tuples and their tuple sensitivities for each relation of q3 when scale = 0.01. 
        	}
        	\label{fig:sens_tuple}
        \end{subfigure}
   	\vspace{-10pt}
    \caption{Local sensitivity reported by \TSens versus \Elastic for TPC-H queries} 
    \vspace{-5pt}
\end{figure*}

\begin{figure*}[t!]
	\centering
	\includegraphics[width=\textwidth, trim={10 16 10 10}, clip]{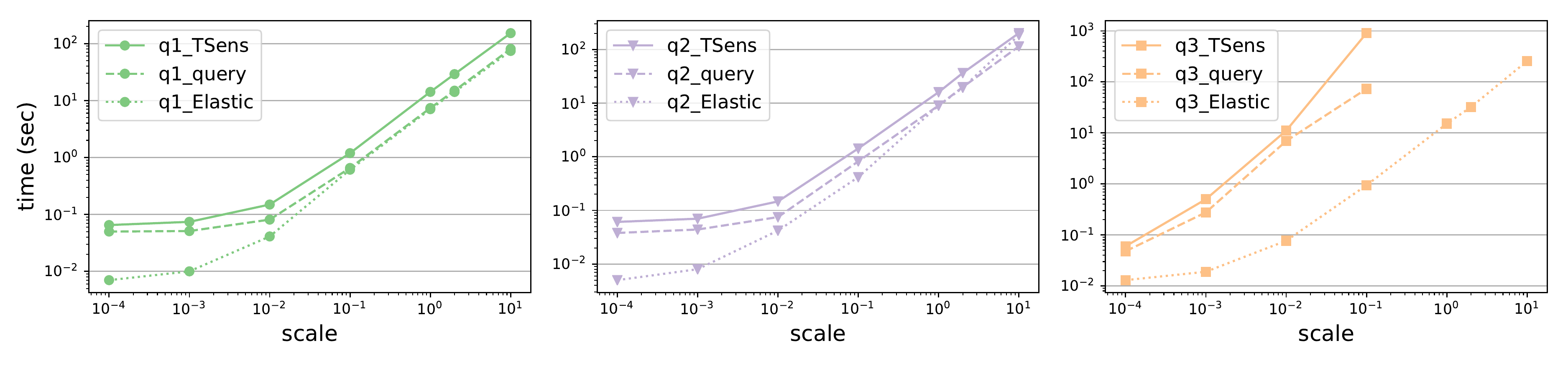}
	\vspace{-20pt}
	\caption{The trend of time usage in terms of various scales for queries q1, q2 and q3 and algorithms \TSens and \Elastic for TPC-H queries. The base line `query' shows the query evaluation time.}
	\label{fig:time_change_by_n}
	\vspace{-10pt}
\end{figure*}

\paratitle{Queries}. 
\change{We consider 3 TPC-H queries and show their query plan in Figure \ref{fig:tpch_queries} which include a path join query q1, an acyclic join query q2, and a cyclic join query q3. The third query q3 is a cyclic join query that builds a universal table with an extra constraint that the supplier and customer should be from the same nation.} \revm{We also consider 4 Facebook queries as shown in Figure \ref{fig:facebook_queries} including a triangle query  $\qtri(A,B,C)$, a path join query  $\qpath(A,B,C,D,E)$, a 4-cycle query $\qcycle(A,B,C,D)$, and a star join query $\qstar(A,B,C)$. We also show the generalized hypertree decomposition for all non-acyclic queries in the same figure. }


We use a machine with 2 processors, 512G SSD and 16G memory to run experiments. Each query is repeated 10 times.

\eat{We consider 3 queries for the TPC-H dataset as Figure \ref{fig:tpch_queries} shows:
\squishlist
\item[(q1)] A path join query  $q1(RK,NK,CK,OK)$ $:-$  $R(RK),$ $N(RK,NK),$ $C(NK,CK),$ $O(CK,OK),$ $L(OK)$ that goes through R, N, C, O and L.
\item[(q2)] An acyclic join query $q2(SK,PK) :- S(SK), P(PK), PS(SK,PK), L(SK,PK)$ whose root is PS and leaf relations are P, S and L. 

\item[(q3)] A simple cyclic join query $ q3(RK,NK,CK,OK,SK,PK)$ $:-$ $R(RK),$ $N(RK,NK),$ $C(NK,CK),$ $O(CK,OK),$ $S(NK,SK),$ $P(PK),$ $PS(SK,PK),$ $L(SK,PK)$ that combines q1 and q3 together. This query builds a universal table with an extra constraint that the supplier and customer should be from the same nation.
\squishend
}

\eat{
\begin{figure}
	\centering
	\begin{subfigure}[b]{0.2\textwidth}
		\centering
		\includegraphics[width=\textwidth]{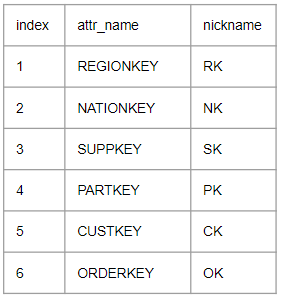}
		\caption{TPC-H Attributes}
		\label{fig:tpch_attr}
	\end{subfigure}
	\begin{subfigure}[b]{0.25\textwidth}
		\centering
		\includegraphics[width=\textwidth]{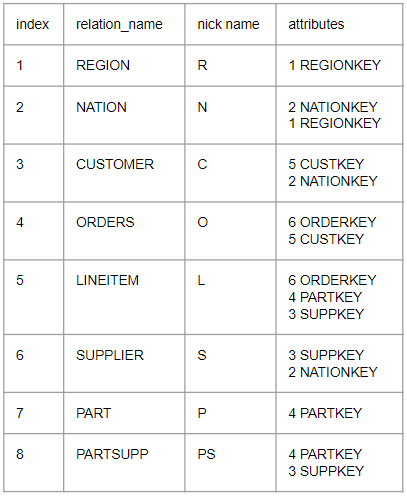}
		\caption{TPC-H Tables}
		\label{fig:tpch_table}
	\end{subfigure}
	\caption{TPC-H Schema}
\end{figure}

\begin{figure}
	\centering
	\includegraphics[width=0.5\textwidth]{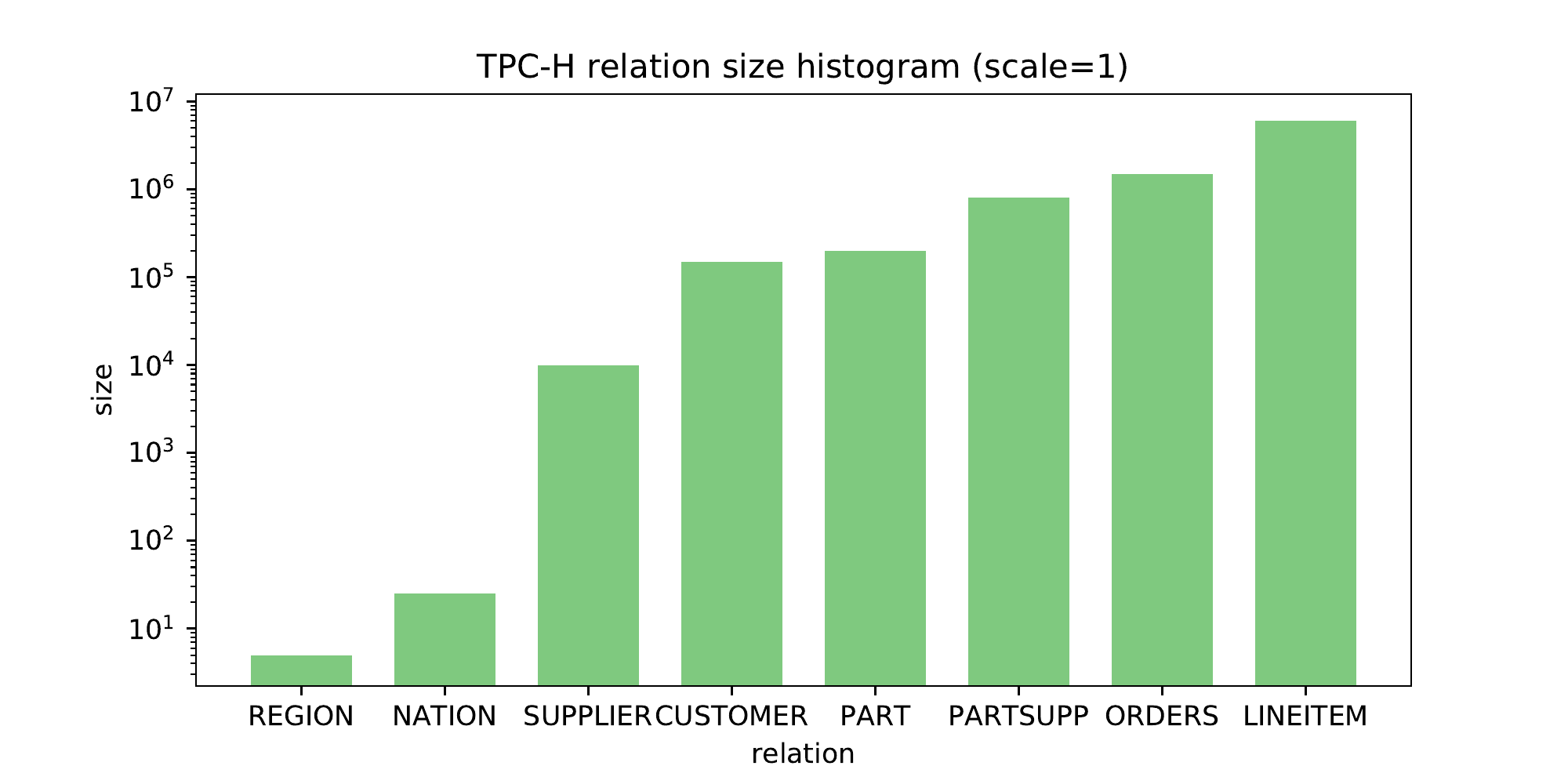}
	\caption{TPC-H Relation Size Histogram}
	\label{fig:tpch_dbsize}
\end{figure}
}

\vspace{-1em}
\subsection{Local Sensitivity}\label{sec:exp_ls}
\noindent\textbf{Baseline}.
\change{We compare the accuracy and runtime of our \TSens algorithm with prior technique \Elastic~\cite{DBLP:journals/corr/JohnsonNS17} for finding the local sensitivity of a given query. As the original \Elastic algorithm  requires the maximum frequency of the join attributes to derive the upper bound of the local sensitivity, we first let \Elastic pre-process the database to obtain the max frequency for its sensitivity analysis.} We also extend \Elastic algorithm to support cross-product by assigning the max frequency of empty attributes as the size of the table and to take the join plan as input  so that the join order in the experiment is the same. We define the join order as a post-traversal of the join plan.

\revm{We also compare the algorithm runtime to the query evaluation time. We apply Yannakakis algorithm to compute the size of query output. For queries that are not acyclic, we first compute the join for each node in the generalized hypertree, and then apply Yannakakis algorithm. The time of running \Elastic is also reported. We run each algorithm 10 times to report the average time.}

\paratitle{Result and Analysis}. Figure \ref{fig:sens_change_by_n} shows the local sensitivity trend in aspect to the scale for TPC-H. Notice that after scale 0.001, \TSens has \revb{on average} 7x smaller and 6x smaller of the local sensitivity for q1 and q2 \revb{than} \Elastic has. Moreover, for q3, \TSens achieves 2,200,000x smaller value for the local sensitivity than \Elastic does when scale equals to 0.1. We didn't run q3 for scale larger than 0.1 due to the memory limit issue. 
\revb{The multiplicity tables for this cyclic query grows nearly quadratically with the input table size. Our future work will extend our algorithm to maintain the top $k$ frequent values instead of all the frequencies which can reduce the intermediate size and further speed up runtime (Section~\ref{sec:extensions}).}

Figure \ref{fig:sens_tuple} shows the most sensitive tuple found by \TSens for each relation for q3 at scale = 0.01. \change{Unlike \TSens, \Elastic can only obtain a local sensitivity upper bound, but cannot find the most sensitive tuple. Hence, we report the most sensitive tuple for each relation while also reports its elastic sensitivity by setting this relation as the only sensitive table for \Elastic.} Each tuple sensitivity \change{found by \TSens} is below 1,000 while the least elastic sensitivity \change{reported by \Elastic} is beyond 10,000. We skip finding the most sensitive tuple in LINEITEM since it has the superkey in the query head and thus the tuple sensitivity is at most 1.



Figure \ref{fig:time_change_by_n} shows the time cost for both \TSens and \Elastic for different queries and scales for TPC-H. The second line $q_i\_query$ is the query evaluation time. \revm{Notice that we skip computing the multiplicity table of Lineitem in q3 since the tuple sensitivity is at most 1 due to FK-PK joins.} For q1 and q2, both \TSens and \Elastic shows a tight relation to the time of query evaluation, which takes \revb{on average} 1.8x and 0.9x of the query evaluation time after scale 0.001. For q3, although \Elastic is much faster, \TSens only takes \revb{on average} 4.2x of the query evaluation time to find \revb{on average} 60,000x smaller local sensitivity before scale 1.

\revm{We also report the accuracy and the runtime of \TSens and \Elastic for Facebook queries in Table~\ref{table:lsfacebook}. The sensitivity bound improvement ranges from $\times 3$ to $\times 80k$. Although \TSens spends $\times 25$ to $\times 60$ more time than \Elastic, its runtime is comparable to query evaluation time. The local sensitivity can also be computed by repeating query evaluation over databases  which are formed by removing a tuple from active domain or inserting a tuple from representative domain one at a time (using  a variation of \cite{Yannakakis:1981} as discussed in Sections~\ref{sec:pj_simple_ineffi}  and \ref{sec:general-algo}). However, the size of the active domain and representative domain is above 10k. This approach will take $\times 10k+$ time than \TSens.} 


\eat{
\begin{figure}[b]
\centering
	\includegraphics[width=.4\textwidth, trim={0 0 10 0}, clip]{sens_change_by_n.pdf}
	\caption{Local sensitivity reported by \TSens and \Elastic for queries q1, q2 and q3 on datasets of differing scales }
	\label{fig:sens_change_by_n}
\end{figure} 
\begin{figure}[b]
	\centering
	\includegraphics[trim={45, 120, 45, 50}, clip, width=.4\textwidth]{{"figures/TPSens - Experiment - MostSensitiveTuple2"}.pdf}
	\caption{Most sensitive tuples and their tuple sensitivities for each relation of q3 when scale = 0.01. 
	}
	\label{fig:sens_tuple}
\end{figure}
}

\begin{table}[t]
\def\tsens{{\bfseries  \TSens}}
\def\elastic{{\bfseries  \Elastic}}
\hspace{-5pt}
\scalebox{0.9}{
\setlength\tabcolsep{5pt}
\begin{tabular}{|l|l|l|l|l|l|l|l|l|}
\hline
& \multicolumn{2}{c|}{\bfseries Local Sensitivity} & \multicolumn{3}{c|}{\bfseries Time (seconds)} \\ \cline{2-6}
& \tsens & \elastic & \tsens & \elastic & \Evaluation \\ \hline
$\qtri$ & \cellcolor{gray!20}87& 7524& 0.405& 0.007& 0.431\\ \hline
$\qpath$ & \cellcolor{gray!20}178923& 511632& 0.237& 0.010& 0.182\\ \hline
$\qcycle$ & \cellcolor{gray!20}2014& 511632& 0.618& 0.009& 0.465\\ \hline
$\qstar$ & \cellcolor{gray!20}34& 2723688& 0.604& 0.012& 0.175\\ \hline
\end{tabular}
}
\caption{\revm{Local sensitivity and runtime of 4 query types for \TSens and \Elastic for Facebook queries. It also reports the query evaluation time for counting the output size. Gray cells have tighter local sensitivities. }} \label{table:lsfacebook}
    \vspace{-20pt}
\end{table}

\eat{

\begin{table}[h]
\small
\def\tsensdp{TSensDP}
\def\privatesql{PrivateSQL}
\scalebox{0.9}{
\begin{tabular}{|l|r|l|r|r|r|r|r|l|}
\hline
 {\bfseries  \makecell{\scriptsize Qu- \\\scriptsize ery}} & {${|Q(D)|}$} & {\bfseries Algorithm} & \multicolumn{1}{c|}{\bfseries Error} & \multicolumn{1}{c|}{\bfseries Bias} & \multicolumn{1}{c|}{\bfseries \makecell{\scriptsize Global \\\scriptsize Sens.}} & \multicolumn{1}{c|}{\bfseries  Time}\\ \hline
$\qtri$ &30699 & \cellcolor{gray!20}TSensDP & \cellcolor{gray!20}1.50\%  & 1.47\% & 49 & 0.562\\ \cline{3-7}&& PrivateSQL & 19.12\%  & 0.00\% & 6732 & 0.230\\ \cline{3-7}\hline
$\qpath$ &17555419 & TSensDP & 5.59\%  & 5.69\% & 17440 & 0.843\\ \cline{3-7}&& \cellcolor{gray!20}PrivateSQL & \cellcolor{gray!20}2.25\%  & 0.00\% & 289476 & 10.340\\ \cline{3-7}\hline
$\qcycle$ &142903 & \cellcolor{gray!20}TSensDP & \cellcolor{gray!20}2.00\%  & 1.77\% & 167 & 0.792\\ \cline{3-7}&& PrivateSQL & 127.16\%  & 0.00\% & 289476 & 2.232\\ \cline{3-7}\hline
$\qstar$ &786 & \cellcolor{gray!20}TSensDP & \cellcolor{gray!20}19.02\%  & 16.16\% & 13 & 0.670\\ 
\cline{3-7}&& PrivateSQL & 498K\%  & 0.00\% & 2436K & 0.290\\ 
\cline{3-7}\hline\end{tabular}
}
\caption{\revm{Application to DP: Comparison between \TSensDP and \PrivSQL for Facebook queries. Time is in seconds. Gray cells achieve lower errors.}}
    \vspace{-20pt}
\end{table}
}

\begin{table}[h]
\small
\def\tsensdp{TSensDP}
\def\privatesql{PrivateSQL}
\hspace{-5pt}
\scalebox{0.85}{
\setlength\tabcolsep{5pt}
\begin{tabular}{|l|r|l|r|r|r|r|r|l|}
\hline
 {\bfseries  \makecell{\scriptsize Qu- \\\scriptsize ery}} & {${|Q(D)|}$} & {\bfseries Algorithm} & \multicolumn{1}{c|}{\bfseries Error} & \multicolumn{1}{c|}{\bfseries Bias} & \multicolumn{1}{c|}{\bfseries \makecell{\scriptsize Global \\\scriptsize Sens.}} & \multicolumn{1}{c|}{\bfseries  Time}\\ \hline
q1 &60175 & TSensDP & 3.56\%  & 3.44\% & 119 & 0.693 \\ \cline{3-7}&& 
\cellcolor{gray!20} PrivateSQL & \cellcolor{gray!20} 1.34\%  & 1.02\% & 220 & 0.292\\ \cline{3-7}\hline 
q2 &60175 & \cellcolor{gray!20}TSensDP & \cellcolor{gray!20}7.71\%  & 7.62\% & 640 & 0.554\\ \cline{3-7}&& PrivateSQL & 99.03\%  & 100.00\% & 774 & 0.231\\ \cline{3-7}\hline
q3 &2333 & \cellcolor{gray!20}TSensDP & \cellcolor{gray!20}2.84\%  & 0.00\% & 14 & 23.063\\ \cline{3-7}&& PrivateSQL & 1293\%  & 2.14\% & 12375k & 0.546\\ \cline{3-7}\hline
$\qtri$ &30699 & \cellcolor{gray!20}TSensDP & \cellcolor{gray!20}1.50\%  & 1.47\% & 49 & 0.562\\ \cline{3-7}&& PrivateSQL & 19.12\%  & 0.00\% & 6732 & 0.230\\ \cline{3-7}\hline
$\qpath$ &17555419 & TSensDP & 5.59\%  & 5.69\% & 17440 & 0.843\\ \cline{3-7}&& \cellcolor{gray!20}PrivateSQL & \cellcolor{gray!20}2.25\%  & 0.00\% & 289476 & 10.340\\ \cline{3-7}\hline
$\qcycle$ &142903 & \cellcolor{gray!20}TSensDP & \cellcolor{gray!20}2.00\%  & 1.77\% & 167 & 0.792\\ \cline{3-7}&& PrivateSQL & 100\%  & 0.00\% & 289476 & 2.232\\ \cline{3-7}\hline
$\qstar$ &786 & \cellcolor{gray!20}TSensDP & \cellcolor{gray!20}19.02\%  & 16.16\% & 13 & 0.670\\ 
\cline{3-7}&& PrivateSQL & 30K\%  & 0.00\% & 2437K & 0.290\\ 
\cline{3-7}\hline\end{tabular}
}
\caption{\revm{Application to DP: Comparison between \TSensDP and \PrivSQL for TPC-H and Facebook queries. Time is in seconds. Gray cells achieve lower errors.}} \label{table:dpapp}
    \vspace{-5mm}
\end{table}

\subsection{Differential Privacy}\label{sec:exp_dp}

\paratitle{Baseline}. 
\change{\PrivSQL is a differentially private SQL answering engine and it introduces the concept of policy that given primary private relations, the sensitivity of other related relations should be updated to be non-zero according to the database key constraints. For TPC-H datasets,} we consider \texttt{CUSTOMER} is the primary private relation for q1 and q3, and \texttt{SUPPLIER} is the primary private relation for q2, so the sensitivity of \texttt{ORDERS} is affected by \texttt{CUSTOMER} and the sensitivity of \texttt{PARTSUPP} is affected by \texttt{SUPPLIER}. The sensitivity of \texttt{LINEITEM} is affected by either of them. \revm{For Facebook dataset, we consider $R_2$ is the primary private relation. 
}


\PrivSQL uses the maximum frequency as the truncation threshold, which is different from using tuple sensitivity to truncate the database. Any tuple whose frequency is beyond the max frequency will be dropped from the database. \PrivSQL runs SVT to learn the truncation threshold for each relation; however, the noise scale of SVT depends on the sensitivity of the relation while it is constantly 1 in \TSensDP. 

Although the privacy budget allocation strategy affects the performance of DP algorithm, we skip exploring this effect and assume \PrivSQL and \TSensDP divide the privacy budget into two halves, one for the threshold learning and the other for reporting the query result after truncation. We disable the synopsis generation phase of \PrivSQL and just use Laplace mechanism to answer the SQL query directly.

\eat{
\begin{figure*}[t]
	\centering
	\includegraphics[trim={25, 80, 25, 50}, clip, width=.9\textwidth]{{"figures/TPSens - DP - Sheet1"}.pdf}
	\vspace{-10pt}
	\caption{Application to Differential Privacy: Comparing \TSensDP error to \PrivSQL error for TPC-H queries}
	\label{fig:dp}
	\vspace{-10pt}
\end{figure*}
}

\paratitle{Result and Analysis}. 
\revm{Table~\ref{table:dpapp} shows the statistics of releasing differential private query results by \TSensDP or \PrivSQL for TPC-H and Facebook datasets. Output below 0 is truncated to 0. We report the median of global sensitivity, the median of relative absolute bias, the median of relative absolute error and the average time for each query over 20 runs.}
We assume the table size is given.
For TPC-H, we assume the maximum tuple sensitivity of q1 is 100, of q2 is 500 and of q3 is 10. 
\revm{\TSensDP has $\leq4$\% error for $q_1$ and $q_3$, and $\leq 8$\% error for $q_2$. In contrast, \PrivSQL has more than $99$\% error on $q_2$ and $q_3$.} This means that the error in \PrivSQL answers for these queries is worse than returning 0 as the answer without looking at the data. The reasons for the poor error are different. In $q_2$ \PrivSQL truncates too much of the data, while in $q_3$ it estimate a very loose bound on sensitivity.  

\revm{
For Facebook dataset, we assume the maximum tuple sensitivity of $\qtri$ is 70, of $\qpath$ is 25k, of $\qcycle$ is 200 and of $\qstar$ is 15. 
\TSensDP achieves $< 6\%$ error for $\qtri$, $\qpath$, $\qcycle$, while \PrivSQL get $> 100\%$ error for $\qcycle$ and $\qstar$. Since there is no FK-PK join for Facebook queries, we have only one primary private table, which means no table truncation and thus has 0 bias in \PrivSQL. However, \PrivSQL has $\times 10$ to $\times 180k$ larger global sensitivity than \TSensDP, which dominates the error. }

\revb{\paratitle{Parameter Analysis}. To find how 
the upper bound parameter for tuple sensitivity 
$\ell$ affects the performance, we vary $\ell$ through 1, 10, 30, 50, 100, 1000 and repeat \TSensDP 20 times for the star query $\qstar(A,B,C)$ whose true local sensitivity is 13 when $R_2$ is the primary private relation for DP.
For each bound, the median global sensitivity, which is also the tuple sensitivity threshold learned from the SVT routine, is [11, 13, 9, 4, 48, 160], the median bias error is [3\%, 1\%, 13\%, 55\%, 0\%, 0\%], and the median relative error is [5\%, 4\%, 17\%, 56\%, 32\%, 98\%]. The optimal $\ell$ in this case is 10, with the corresponding error as 4\%, while the worst error is 98\% when $\ell = 1000$. 

Notice that as $\ell$ increases, the noise added to $\hat{\CQ}$ in the SVT routine gets larger. This causes the learned tuple sensitivity threshold to deviate more from the local sensitivity, which is considered the optimal threshold by the rule of thumb. When $\ell$ is too small, the learned tuple sensitivity threshold could also be small, which increases the bias.}




\balance
\vspace{-0.2cm}
\section{Related Work}\label{sec:related}

Sensitivity analysis for SQL queries is important to the design of differentially private algorithms. The focus of existing work~\cite{DBLP:journals/corr/abs-1207-0872,Laud2018SensitivityAO,arapinis16,DBLP:journals/jpc/EbadiS16, kotsogiannis2019privatesql,mcsherry2009privacy} is to compute the global sensitivities of SQL queries or their upper bounds. The earliest work by McSherry along this line~\cite{mcsherry2009privacy} applies static analysis on a given relational algebra and then combines the sensitivities of the operators in the relational algebra to obtain the maximum possible change to the query output for all possible database instances. 
This analysis is independent of the database instance, so the result can be much larger than the local sensitivity. In particular, for join operator, the global sensitivity can be unbounded. The analysis either considers a restricted form of join~\cite{mcsherry2009privacy, DBLP:journals/jpc/EbadiS16} or  constrained database instances~\cite{ DBLP:journals/corr/abs-1207-0872,Laud2018SensitivityAO,arapinis16}. For general join queries on unconstrained databases, Lipschitz extension~\cite{NodeDP:TCC2013, blocki2013differentially, DBLP:conf/sigmod/ChenZ13, kotsogiannis2019privatesql} is usually applied to transform the original query $Q$ that has an unbounded sensitivity into a different query $Q'$ that (a) has abounded global sensitivity and (b) has a similar answer as $Q$. In particular, the transformed query in PrivateSQL~\cite{kotsogiannis2019privatesql} require to truncates the sensitive tuples. Hence, our work offers efficient ways to identify the most sensitive tuples to complement PrivateSQL to achieve differential privacy. 

Smooth sensitivity~\cite{DBLP:conf/stoc/NissimRS07,johnson2018towards} is another important sensitivity notion for achieving differential privacy. This sensitivity is a smooth upper bound of the local sensitivity of databases at a distance from the given database instance. This requires the computation of local sensitivity of exponentially number of database instances. For SQL queries, elastic sensitivity~\cite{johnson2018towards} provides efficient static analysis rule to estimate the upper bound of local sensitivity, but this bound can be still very loose. For example, even if the local sensitivity for a query with selection operator is small, the elastic sensitivity algorithm will output the same value as for a query without the selection operators. In addition, the computation of elastic sensitivity requires additional constraint cardinality information of the given database instance. 

Smooth sensitivity~\cite{DBLP:conf/stoc/NissimRS07,Zhang:2015:PRG:2723372.2737785} or Lipschitz extension~\cite{DBLP:conf/sigmod/ChenZ13,NodeDP:TCC2013,blocki2013differentially} have been mainly applied to release graph statistics. However, these algorithms either require customized analysis for each new query~\cite{Zhang:2015:PRG:2723372.2737785} or suffer from high performance cost~\cite{DBLP:conf/stoc/NissimRS07,DBLP:conf/sigmod/ChenZ13,NodeDP:TCC2013,blocki2013differentially}. We will extend our study to graph queries (involving self-joins) in the future.

Sensitivity analysis has also been studied for non-SQL functions~\cite{
Gaboardi:2013:LDT:2480359.2429113, 
Reed:2010:DMT:1863543.1863568, 
Chaudhuri:2011:PPR:2025113.2025131
}, with a focus on global sensitivity. 
Related topics also include sensitivity analysis for probabilistic queries \cite{Kanagal:2011:SAE:1989323.1989411} and finding responsibility of tuples \cite{MeliouGMS11}, where the goal is different from ours.
Prior work on provenance for queries and deletion  propagation (e.g., \cite{AmsterdamerDT11, Buneman:2002:PDA:543613.543633}) provide analysis for a rich set of queries and explanations for query results, but the analysis is mainly for removing a tuple in the database (downward tuple sensitivity). Our work also considers upward tuple sensitivity which involves adding new tuples from the domain. Our future study will consider general aggregates and functions. 


\eat{
\subsection{Limitations in Prior Work}
We present several examples where prior work \cite{Ebadi:2015:DPG:2676726.2677005,DBLP:journals/corr/JohnsonNS17} fail.

\subsubsection{Limitations in ProPer}
The ProPer system ~\cite{Ebadi:2015:DPG:2676726.2677005} aims to provide provenance for the personalized privacy of individuals with records in the system. The system maintains a privacy budget for each individuals in the system. As queries are performed over time, the budget for each individual may decrease. The system consider two types of queries: transformation query and primitive differentially private query. A transformation query includes set operations, record selection with where and projection. For each transformation query, the system stores the result of this query as a \emph{table variable}.
A primitive differentially private query simply corresponds to answering the size of the variable table (the number of records) with $\epsilon$ privacy budget. For such a query, the system first drops the records in the variable table with insufficient privacy budget $<\epsilon$, and then outputs the sum of the number of remaining records in the variable table and the noise drawn from Laplace distribution with budget $\epsilon$. Only the privacy budgets of the individuals associated these remaining records get reduced by $\epsilon$.

{\bf Unsupported queries.}
ProPer only supports selection with where, projection, and one level of aggregation, not join or aggregations over aggregations.

{\bf No utility guarantee.}
The paper for ProPer also admits that this approach does not have utility guarantee. Suppose most of records in the true query result have insufficient privacy budget, then the outputted count will be significantly different from the true result. The paper proposed a case when the rate at which new data enters the database is sufficiently high relative to the rate at which queries consume the budget. This case may be also similar to the setting where individuals can refresh their privacy budgets over a period of time. As long as the privacy budget of most individuals is greater than the privacy budget of the queries, the bias can be small. However, as the bias introduced by dropping records is still unknown, it is hard to derive utility bounds.

\eat{
\subsubsection{Privacy leakage}
The paper assumes that the privacy budget of individuals is a private information, unknown to data analysts. Privacy attacks can be constructed to learn the  privacy budget set by individuals.
For example, suppose the adversary knows that individuals with a certain disease  are highly cautious about their privacy, and hence they set their privacy budget very small (a type of correlation between attributes of individuals and the privacy budget).
The adversary can start a counting query about this group of individuals with a small privacy budget. If the output answer is close to 0, then the adversary can gradually increase the privacy budget of the same query till a noisy answer that is much larger than 0 is reported.
Then the adversary can infer the privacy budget set by this special group of individuals.

If the privacy budget of individuals is public (assuming there is no correlation between user's privacy preference and their data values), it is not hard to construct privacy attacks (exclusion attack).
For example, suppose all individuals start with the same budget 2.0. If the first query requests the count of individuals with cancer to be answered with budget of 1.9, then the adversary knows that only the privacy budgets of individuals with cancer are reduced to 0.1.  If the analyst sends a second query which simply requests the number of individuals in the database to be answered with privacy budget 2.0. Suppose the noisy output is close 0, then the adversary can infer that almost all individuals in the database have cancer.
}

\subsubsection{Limitations in Elastic Sensitivity}
Elastic sensitivity \cite{johnson2018towards} is a method for approximating the local sensitivity of queries with general equijoins.

{\bf Unsupported queries.}
As discussed in Section 3.7 of the work\cite{johnson2018towards}, unsupported queries include (i) non-equijoins; (ii) queries do not have max-frequency metrics due to query structure, for instance, queries use counts computed in subqueries as join keys (joins over aggregates), or aggregates over aggregates. For instance, consider the query:\todo{[YC]I think this example is fine, since count(*) has the max-frequency as 1, although elastic sensitivity may not include this. I think if we add group by to the example it would be better. For example:

WITH A AS (SELECT count(*) FROM T1 GROUP BY S),
B AS (SELECT count(*) FROM T2 GROUP BY S)
SELECT count(*) FROM A JOIN B ON A.count = B.count

}
\begin{verbatim}
WITH A AS (SELECT count(*) FROM T1),
     B AS (SELECT count(*) FROM T2)
SELECT count(*) FROM A JOIN B ON A.count = B.count
\end{verbatim}

(iii) The rules in Figure 1 \todo{[YC]the Figure 1 in [3], not in this paper} does not include $mf_k(a',r_1\bowtie_{a=b} r_2,x)$, where $a'$ is different from the join key, and it does not provide evaluation of elastic sensitivity for queries like
\begin{verbatim}
SELECT T1.A FROM T1 JOIN T2 ON T1.B = T2.B
\end{verbatim}
This query has a join on column $B$, and then has a projection on column $A$. It is possible that $mf(A, T1\bowtie_{B} T2, x)=n$, but the $A$ has a multiplicity of 1. Similarly, join queries of multiple tables on different join key also have the similar issue.

(iv) The rules do not support `DISTINCT', `LIMIT', `HAVING'.

(iv) The rules do not consider constraints within or between tables in the same database.

\eat{For instance, consider $R_1(U,V )$, $R_2(U,V)$ are tables of edges representing 2 graphs. Let
$S_1(U, D)$ be the result of
`SELECT $R_1.U$, COUNT(*) FROM $R_1$ Group By $U$'
This  gives nodes and their corresponding degrees. $S_2(U,D)$ is similarly defined over $R_2$. Let the query be a join between $S_1$ and $S_2$ on the degree column $D$. The maximum frequency over the join key `degree column' is not available for computing elastic sensitivity.}

{\bf Loose upper bound.}
Elastic sensitivity does not match the best upper-bound on local sensitivity at distance $k$. For instance consider query:
\begin{verbatim}
SELECT count(*) FROM T1
JOIN T2 ON T1.B = T2.B
JOIN T3 ON T2.C = T3.C
\end{verbatim}
Suppose
\todo{[YC] I'm not familiar with the operator precedence. Do we need to use () to wrap $\{...\} \times \{...\}$, so that it becomes $(\{...\} \times \{...\})$ ?}
$$x.T_1 = \{a_1,a_2,\ldots,a_n \}\times\{b_0\} \cup \{a_0,b_1\},$$
$$x.T_2=\{(b_0,c_1)\}\cup \{b_1,\ldots,b_n \}\times c_0,$$
$$x.T_3 = \{(c_1,d_1),(c_2,d_2),\ldots, (c_n,d_n)\}.$$
This gives $$mf_1(B,T_1,x)=n,$$
$$mf_1(B,T_2,x)=1,$$
$$mf_1(C,T_2,x)=n,$$
$$mf_1(C,T_3,x)=1.$$
The elastic stability of the join between $T_2$ and $T_3$ on $C$ is at least $n$. As $mf_1(B,T_1,x)=n$,
the elastic sensitivity of the query is at least $n^2$.
However, the join output size is $n$. \todo{The join output size is n cannot indicate the tightest local sensitivity is below $n^2$}

{\bf Dependency on query plan.} Elastic sensitivity depends on query plan. We can still consider the join query above. If we  join $T_2$ and $T_3$ on $C$ first and then join the result with $T_1$, the elastic sensitivity of the query is at least $n^2$. However, if we join $T_1$ and $T_2$ on $B$ first, then the elastic stability of this join is at least $n$ and the multiplicity of $C$ in this join is 1 only. As $mf_1(C,T_3,x)=1$ and the elastic stability of $T_3$ is 1, the elastic sensitivity of the query is just $n$. Hence, these two query plans resulted in different elastic sensitivities.
}

\section{Conclusions}\label{sec:conclusion}
We studied the local sensitivity problem for counting queries with joins -- an important task for many applications like differentially private query answering and query explanations. We showed that the problem is NP-hard in combined query and data complexity even for full conjunctive queries that have an acyclic structure -- queries for which the combined complexity of query answering is PTIME. We develop algorithms for full acyclic join queries using join trees, that run in linear in the number of relations and near linear in the number of tuples for interesting sub-classes of acyclic queries including path queries and ``doubly acyclic queries", and in PTIME in combined complexity when the maximum degree in the join tree  is bounded.  Our algorithms can be extended to handle related queries that include selection predicates as well as non-acyclic queries with a certain  property  on generalized hypertree decompositions. The local sensitivity output by our algorithms is shown to be orders of magnitude tighter than prior work. Our algorithm can also be used to construct differentially private query answering methods that are more accurate than the state of the art.  Extending the framework to handle general non-acyclic queries involving self-joins, projections, negations, and other aggregate functions would be an interesting direction of future work. 

\section*{Acknowledgments}

    This work was supported by the NSF under grants 1408982, IIS-1552538, IIS-1703431; and by DARPA and SPAWAR under contract N66001-15-C-4067; and by NIH award 1R01EB025021-01; and by NSERC through a Discovery Grant.

\end{sloppypar}

\clearpage

\bibliographystyle{abbrv}
\bibliography{refs}

\clearpage


\appendix

\section{Theorem and Proofs}

\subsection{Proof for Theorem~\ref{thm:poly-data-complexity}}
\begin{proof}
\textbf{(Algorithm)} The algorithm works as follows. First, compute the maximum downward tuple sensitivity $\delta^{-*} = \max_{t \in Q(\database)}\dtsens(t, \CQ, \database)$  (see Definition~\ref{def:tuple_sens}), and note the tuple giving the max value. Next, compute the maximum upward 
tuple sensitivity as $\delta^{+*}$  = $\max_{i \in \{1, \cdots, m\}} \max_{t \in \reprdomain^{\attributes_i}}\delta^{+*}(t, \CQ, \database)$, again noting the tuple giving the maximum value. Return $\delta^{*} = \max(\delta^{+*} , \delta^{-*})$ along with the tuple that led to this highest value.  
\par
\textbf{(Correctness)} We omit the proof that this algorithm correctly computes the local sensitivity due to space constraints.
\par
\textbf{(Polynomial data complexity)} Finally we argue that the algorithm runs in time polynomial in $n = |D|$. Note that the active domain of any single attribute $A \in \attributes_D$ in any relation $R_i$ can be computed in time polynomial in $n$ (in $O(n \log n)$ time if we use sorting to remove duplicates), and $|\actdomain^{\attr, i}| \leq n$. Since each relation $R_i$ has at most $k$ attributes,  $|\reprdomain^{\attributes_i}| \leq n^k$.  Hence the above algorithm iterates over polynomial number of choices for $t$, for each $t$ it evaluates the query $Q(D \cup \{t\})$ or $Q(D - \{t\})$, which can be done in polynomial time in $n$. Hence the total time of the above algorithm is also polynomial in $n$.
\end{proof}

\subsection{Proof for Theorem~\ref{thm:combined-nphard}}
\begin{proof}
\cut{
The local sensitivity of a CQ query is defined as the maximum change to the query output size if adding to one of the tables or removing a tuple from one of the tables.
The changes to the query output size when removing a tuple is upper bounded by the changes to the query output when adding a tuple. Hence, we will just show that checking if the changes to the query output size is greater than 0 if adding a tuple to one of table is NP-hard.
}

We give a reduction from the 3SAT problem. Consider any instance of 3SAT $\phi$ with $s$ clauses
($C_1,\ldots,C_s$) and $\ell$ variables ($v_1,\ldots,v_\ell$), where each clause is disjunction three literals (a variable or its negation), and the goal is to check if the formula $\phi = C_1 \wedge \cdots \wedge C_s$ is satisfiable. 
We create an instance of the sensitivity problem $LS(Q, \database)$ with $s+1$ relations and $\ell$ attributes in total.
For each clause $C_i$ that involves variables $v_{i_1},v_{i_2},v_{i_3}$, we add a table $R_i$ with three Boolean attributes $A_{i_1},A_{i_2},A_{i_3}$, and insert all possible triples of Boolean values that satisfy the clause $C_i$ into $R_i$ in $\database$. For example, if $C_i$ is $v_{2}\lor \bar{v}_{5} \lor \bar{v}_{7}$, then $R_i(A_2, A_5, A_7)$ contains seven Boolean triples
$(0,0,0), (0,0,1),\ldots,(1,1,1)$ {\em except} $(0,1,1)$.
In addition, we create an empty relation $R_0(A_1,\ldots,A_\ell)$, which does not contain any tuple in $\database$. 
The query is:
\begin{small}
\begin{align*}
    \CQ(A_1 ,\ldots, A_\ell) = R_0(A_1, \cdots, A_\ell) \wedge \underset{i = 1, \ldots, m}{\bigwedge} \R{i}(A_{i_1},A_{i_2},A_{i_3})
\end{align*}
\end{small}
Note that $Q$ is acyclic, as all of $R_1, \cdots, R_s$ correspond to ears (see Section~\ref{sec:acyclic}). Further, the reduction is in polynomial time in the number of variables and clauses in $\phi$. Next we argue that $\phi$ is satisfiable if and only if $LS(Q, \database) > 0$.
\par
\textbf{(only if)} Suppose $\phi$ is satisfiable, and $\mathbf{v} = (v_1 = b_1, \cdots, v_\ell = b_\ell)$ is a satisfying assignment. Then the join of $R_1 \Join \cdots \Join R_s$ is not empty and $\mathbf{v}$ belongs to their join result. However, $Q(\database) = \emptyset$ as $R_0 = \emptyset$ in $\database$. Now, if we add a tuple corresponding to $\mathbf{v}$ to $R_0$, then $Q(\database \cup \{\mathbf{v}\})$ is no longer empty (at least contains $\mathbf{v}$), and therefore $LS(Q, \database) > 0$. 
\par
\textbf{(if)} 
Suppose $LS(Q, \database) > 0$. Hence there exists at least one tuple $t$ such that if it  is added to one of the relations $R_0, R_1, \cdots, R_s$, then $|Q(\database \cup \{t\})| > |Q(\database)|$. Since  $Q(\database) = \emptyset$ as $R_0$ is empty, this tuple must be inserted to $R_0$ to have  a non-empty output. Further, the projection of this tuple to $A_{i_1}, A_{i_2}, A_{i_3}$ for relation $R_i$ must match one of the existing seven tuples of $R_i$ in $\database$ to have a non-empty join result. Therefore, this tuple (Boolean values for $v_1, \cdots, v_\ell$) gives a satisfying solution for $\phi$ by satisfying all the clauses, and makes $\phi$ satisfiable.
\end{proof}

\end{document}